\documentclass[a4paper,UKenglish]{lipics}
\usepackage{mathtools}
\usepackage{color}
\usepackage{subfig}
\usepackage{xspace}
\usepackage{todonotes}
\usepackage{wrapfig}
\usepackage{dsfont}
\graphicspath{{img/}}
\captionsetup[subfigure]{labelformat=simple,position=below,justification=centering}



\newtheorem{conjecture}{Conjecture}

\bibliographystyle{plainurl}

\newcommand{\temptitle}{On Planar Greedy Drawings of 3-Connected Planar Graphs}
\title{\temptitle\footnote{This article reports on work supported by the U.S.~Defense Advanced
Research Projects Agency (DARPA) under agreement no.~AFRL FA8750-15-2-0092. The views expressed are those of the authors and do not reflect the official policy or position of the Department of Defense or the U.S.~Government. This research was also partially supported by the Natural Sciences and Engineering Research Council of Canada (NSERC), by MIUR Project ``MODE'' under PRIN 20157EFM5C, and by H2020-MSCA-RISE project 73499 -- ``CONNECT''.}}
\titlerunning{\temptitle} 
\author[1]{Giordano {Da Lozzo}}
\author[2]{Anthony D'Angelo}
\author[3]{Fabrizio Frati}
\affil[1]{Department of Computer Science, University of California, Irvine, CA, USA
	\\ \texttt{gdalozzo@uci.edu}}
\affil[2]{School of Computer Science, Carleton University, Ottawa, Canada \\\texttt{anthonydangelo@cmail.carleton.ca}}
\affil[3]{Department of Engineering, Roma Tre University, Rome, Italy\\
  \texttt{frati@dia.uniroma3.it}}
\authorrunning{G. {Da Lozzo}, A. {D'Angelo}, and F. Frati} 

\Copyright{Giordano {Da Lozzo}, Anthony {D'Angelo}, and Fabrizio Frati}

\subjclass{G.2.2 Graph Theory}
\keywords{Greedy drawings, $3$-connectivity, planar graphs, convex drawings}

\begin{document}

\maketitle

\begin{abstract}
A graph drawing is {\em greedy} if, for every ordered pair of vertices $(x,y)$, there is a path from $x$ to $y$ such that the Euclidean distance to $y$ decreases monotonically at every vertex of the path. Greedy drawings support a simple geometric routing scheme, in which any node that has to send a packet to a destination ``greedily'' forwards the packet to any neighbor that is closer to the destination than itself, according to the Euclidean distance in the drawing. In a greedy drawing such a neighbor always exists and hence this routing scheme is guaranteed to succeed.  

In 2004 Papadimitriou and Ratajczak stated two conjectures related to greedy drawings. The {\em greedy embedding conjecture} states that every $3$-connected planar graph admits a greedy drawing. The {\em convex greedy embedding conjecture} asserts that every $3$-connected planar graph admits a planar greedy drawing in which the faces are delimited by convex polygons. In 2008 the greedy embedding conjecture was settled in the positive by Leighton and Moitra. 

In this paper we prove that every $3$-connected planar graph admits a {\em planar} greedy drawing. Apart from being a strengthening of Leighton and Moitra's result, this theorem constitutes a natural intermediate step towards a proof of the convex greedy embedding conjecture. 
\end{abstract}

\section{Introduction}
{\em Geographic routing} is a family of routing protocols for {\em ad-hoc networks}, which are networks with no fixed infrastructure -- such as routers or access points -- and with dynamic topology~\cite{frs-rwsn-09,sm-ahmn-04,t-ahmn-02}. In a geographic routing scheme each node of the network actively sends, forwards, and receives packets; further, it does so by only relying on the knowledge of its own geographic coordinates, of those of its neighbors, and of those of the packet destination.  {\em Greedy routing} -- originally called {\em Cartesian routing}~\cite{f-rap-87} -- is the simplest and most renowned geographic routing scheme. In this protocol, a node that has to send a packet simply forwards it to any neighbor that is closer -- according to the Euclidean distance -- to the destination than itself. The greedy routing scheme might fail to deliver packets because of the presence of a {\em void} in the network; this is a node with no neighbor closer to the destination than itself. For this reason, several variations of the greedy routing scheme have been proposed; see, e.g.,~\cite{bmsu-rgdhwn-01,ksu-crgn-99,kwz-aagrhsn-08}.

Apart from its failure in the presence of voids, the greedy routing protocol has two disadvantages which limit its applicability. First, in order for the protocol to work, each node of the network has to be equipped with a GPS, which might be expensive and might consume excessive energy. Second, two nodes that are close geographically might be unable to communicate with each other because of the presence of topological obstructions. Rao {\em et al.}~\cite{rpss-grwli-03} introduced the following brilliant idea for extending the applicability of geographic routing in order to overcome the above issues. Suppose that a network topology is known; then one can assign {\em virtual coordinates} to the nodes and use these coordinates instead of the geographic locations of the nodes in the greedy routing protocol. The virtual coordinates can then be chosen so that the greedy routing protocol is guaranteed to succeed. 

Computing the virtual coordinate assignment for the nodes of a network corresponds to the following graph drawing problem: Given a graph $G$, construct a {\em greedy drawing} of $G$, that is a drawing in the plane such that, for any ordered pair of vertices $(x,y)$, there exists a neighbor of $x$ in $G$ that is closer -- in terms of Euclidean distance -- to $y$ than $x$. Equivalently, a greedy drawing of $G$ is such that, for any ordered pair of vertices $(x,y)$, there exists a {\em distance-decreasing} path from $x$ to $y$, that is, a path $(u_1,u_2,\dots,u_m)$ in $G$ such that $x=u_1$, $y=u_m$, and the Euclidean distance between $u_{i+1}$ and $u_m$ is smaller than the one between $u_{i}$ and $u_m$, for any $i=1,2,\dots,m-2$.

Greedy drawings experienced a dramatical surge of popularity in the theory community in 2004, when Papadimitriou and Ratajczak~\cite{pr-crgr-04} proposed the following two conjectures about greedy drawings of $3$-connected planar graphs.\footnote{The convex greedy embedding conjecture has not been stated in the journal version~\cite{pr-crgr-05} of Papadimitriou and Ratajczak paper~\cite{pr-crgr-04}.}

\begin{conjecture} ({\em Greedy embedding conjecture}) 
Every $3$-connected planar graph admits a greedy drawing.
\end{conjecture}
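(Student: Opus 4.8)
The plan is to exploit two features of the problem: that greediness is a \emph{monotone} property under edge insertion, and that $3$-connectivity forces a rich cycle structure. First I would reduce the problem to a far more rigid class of graphs. Observe that if $H$ is a spanning subgraph of $G$ and $\Gamma$ is a greedy drawing of $H$, then placing the vertices of $G$ at their positions in $\Gamma$ already yields a greedy drawing of $G$: every distance-decreasing path of $H$ survives in $G$, since $G$ has only more edges and the Euclidean distances between vertices are unchanged. Hence it suffices to show that every $3$-connected planar graph contains \emph{some} spanning subgraph, drawn from a conveniently restricted family, that admits a greedy drawing.

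The structural step is to argue that every $3$-connected planar graph contains a spanning \emph{Christmas cactus}: a connected graph in which every edge lies on at most one simple cycle and every vertex is shared by at most two blocks, so that the blocks (each either a single edge or a simple cycle) are arranged in a tree. This is where $3$-connectivity is essential, as high connectivity supplies the internally disjoint paths needed to route a spanning walk that closes up into edge-disjoint cycles meeting in a tree-like fashion. I would build such a subgraph incrementally, attaching cycles one at a time while maintaining connectivity and the two cactus invariants (each edge on at most one cycle, each vertex in at most two blocks).

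The geometric step is to embed a Christmas cactus greedily by induction on its block tree. After rooting the block tree, I would place each block so that, simultaneously, (i) within a block, which is a simple cycle, consecutive vertices admit distance-decreasing paths in both directions around the cycle, and (ii) a target lying in a different block can always be approached by first moving toward the articulation vertex joining the current subtree to the rest of the cactus. To realize both invariants in the Euclidean plane I would draw each cycle as a nearly-regular convex polygon and shrink the polygons geometrically as one descends the block tree (equivalently, borrow a hyperbolic placement and transfer it to the plane), so that each articulation vertex is forced to be strictly closer to every vertex outside its subtree than to any vertex inside it.

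The main obstacle I expect lies in the geometric step, specifically in the \emph{cross-block} greediness: guaranteeing that from a vertex $x$ there is always a neighbor closer to a target $y$ sitting in an unrelated block. Within a single convex polygon this is elementary, but forcing the correct monotonicity across the cut vertices for \emph{all} ordered pairs at once demands quantitative control of both the scaling factor and the angular spread of each polygon, so that a packet can never be trapped at a vertex when its target lies several blocks away. Making these estimates close is the crux of the argument; by contrast, the reduction to spanning subgraphs and the existence of a spanning Christmas cactus, while not trivial, are the more routine ingredients.
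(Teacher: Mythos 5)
Your outline is essentially sound and reproduces the known Leighton--Moitra proof of the greedy embedding conjecture: the monotonicity observation (a distance-decreasing path in a spanning subgraph $H$ remains distance-decreasing once the remaining edges of $G$ are inserted, since vertex positions and distances are unchanged) is correct, the existence of a spanning Christmas cactus in every $3$-connected planar graph is exactly their structural lemma, and the recursive embedding of the block tree with geometrically shrinking, nearly-convex cycles is their geometric construction. The paper, however, takes a genuinely different route: it proves the conjecture as an immediate corollary of the stronger Theorem~\ref{th:main}, namely that every $3$-connected planar graph admits a \emph{planar} greedy drawing. Instead of passing to a sparse spanning subgraph and reinserting edges (which in general destroys planarity, since the reinserted straight-line edges may cross the cactus drawing), the paper works with the whole graph throughout, viewing it as a strong circuit graph and applying the Chen--Yu decomposition (Lemmata~\ref{le:decomposition-A} and~\ref{le:decomposition-B}) to recursively split it into smaller strong circuit graphs, whose drawings are reassembled while maintaining the planarity, slope, and path invariants of Theorem~\ref{th:main-aux}. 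Your approach buys a simpler geometric core -- only cacti need to be drawn, and the cross-block estimates you identify as the crux are indeed the only delicate part -- at the cost of any control over crossings; the paper's approach pays a substantially heavier price in case analysis but delivers planarity, which is what makes it a step toward the convex greedy embedding conjecture.
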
 	

\begin{conjecture} ({\em Convex greedy embedding conjecture}) 
	Every $3$-connected planar graph admits a convex greedy drawing.
\end{conjecture}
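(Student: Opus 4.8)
The plan is to start from a convex realization and then pin down geometric parameters that additionally make it greedy. By Tutte's classical rubber-band (barycentric) theorem, every $3$-connected planar graph $G$, once a combinatorial embedding and a convex outer boundary are fixed, admits a convex drawing; moreover this drawing is the unique equilibrium of a spring system and varies continuously (indeed rationally) with the edge weights and with the positions of the outer vertices. I would therefore regard convexity as essentially \emph{free} from Tutte's theorem and treat the positions of the outer face together with the spring weights as a parameter space $\mathcal{P}$, every point of which yields a convex drawing. The task then reduces to exhibiting one choice in $\mathcal{P}$ whose convex drawing is greedy.

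Second, I would isolate a robust, checkable sufficient condition for greediness that cooperates with convexity. A neighbour $u$ of $x$ is strictly closer to a target $y$ precisely when $(u-x)\cdot(y-x) > \tfrac{1}{2}\,|u-x|^2$. A convex drawing already guarantees that the neighbours of each internal vertex $x$ leave no angular gap of size $\ge \pi$, so for every direction some neighbour has positive projection toward it; the missing ingredient is purely metric, namely that such a neighbour be \emph{short enough} that the $\tfrac12|u-x|^2$ deficit does not cancel the progress. I would package this as an invariant: for every ordered pair of vertices $(x,y)$ there is a neighbour $u$ of $x$ lying in the open disk centred at $y$ of radius $|x-y|$. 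If this holds at every non-target vertex, one-step progress is always available and greediness is immediate; the role of the parameters in $\mathcal{P}$ is to shrink edge lengths (dually, to concentrate spring weights) enough to enforce it among the $\binom{n}{2}$ vertex pairs.

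Third, rather than controlling all of $\mathcal{P}$ at once, I would build the drawing incrementally along a canonical ordering $v_1,\dots,v_n$ of the $3$-connected planar graph, maintaining at each step a convex drawing of the placed subgraph whose outer boundary is convex and for which the disk condition already holds among placed vertices. When $v_{k+1}$ is inserted onto the current outer boundary, I would place it near the barycentre of its earlier neighbours and then re-run a \emph{local} rubber-band relaxation on a bounded neighbourhood, choosing weights so that (a) all affected faces stay convex and (b) the new and perturbed edges remain short enough to preserve the disk condition for every pair. The enabling structural fact is that, in a canonical ordering of a $3$-connected planar graph, each new vertex attaches along a contiguous arc of the current boundary, so the perturbation is genuinely local and the invariant can be re-established with a controlled amount of movement.

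Finally, I would verify greediness globally by combining the maintained disk condition with convexity: for any ordered pair $(x,y)$ the condition yields a neighbour strictly nearer $y$, and iterating produces a distance-decreasing path that must reach $y$ since the distance strictly decreases and the drawing is finite. The hard part will be the metric half of the induction in the third step. Convexity is rigid --- by Tutte's theorem the interior is \emph{forced} once the boundary is fixed --- so I cannot shorten one edge without globally reshaping the drawing, and a single long boundary edge or a thin sliver-like convex face can create a target $y$ for which the only neighbour of some $x$ on the correct side of the perpendicular bisector is too far to satisfy $(u-x)\cdot(y-x) > \tfrac12|u-x|^2$, stalling the walk with no short detour. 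Reconciling this rigidity with the uniform shortness the disk condition demands is precisely the obstruction that keeps the convex greedy embedding conjecture open: it is exactly the freedom to \emph{bend} faces away from convexity --- filling angular gaps and shortening critical edges locally --- that makes the weaker planar statement attainable while the convex one resists this plan.
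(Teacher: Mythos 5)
This statement is not a theorem of the paper: it is the \emph{convex greedy embedding conjecture}, which the paper explicitly leaves open. The paper proves only the weaker Theorem~\ref{th:main} (every $3$-connected planar graph has a \emph{planar} greedy drawing), and its conclusion states that, despite attempts to push the same decomposition toward convexity, ``proving or disproving the convex greedy embedding conjecture remains an elusive goal.'' So there is no proof in the paper to compare yours against, and a correct submission here would have to be a genuinely new result.

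Your proposal is not that result, and you say so yourself in the final paragraph. Two concrete gaps. First, your ``robust, checkable sufficient condition'' is not a reduction at all: the inequality $(u-x)\cdot(y-x)>\tfrac12|u-x|^2$ is algebraically equivalent to $|u-y|<|x-y|$, so the ``disk condition'' invariant is literally the definition of a greedy drawing restated for each ordered pair $(x,y)$; packaging it as an invariant does not isolate anything easier to maintain. The angular half that convexity buys you (no gap of size $\ge\pi$ among the neighbours of an internal vertex, hence a neighbour with positive projection toward any target) is true but far from sufficient, and the metric half is the entire problem. Second, the inductive step is unestablished exactly where it matters: Tutte's drawing is globally determined by the boundary and the weights, so a ``local rubber-band relaxation'' that simultaneously (a) keeps every face convex, (b) keeps all previously placed edges short enough, and (c) restores the disk condition for all $\binom{n}{2}$ pairs is not something you construct or bound --- you only assert that the perturbation is ``genuinely local'' and then, in the last paragraph, correctly identify that the rigidity of convex drawings versus the uniform shortness the condition demands is the open obstruction. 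A plan whose crucial step is acknowledged to be the reason the conjecture is open is not a proof; by contrast, the paper's actual argument for the planar (non-convex) case succeeds precisely because it abandons convexity and instead uses the Chen--Yu decomposition of strong circuit graphs together with the slope-controlled inductive invariants of Theorem~\ref{th:main-aux}.
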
 	

Papadimitriou and Ratajczak~\cite{pr-crgr-04,pr-crgr-05} provided several reasons why $3$-connected planar graphs are central to the study of greedy drawings. First, there exist non-$3$-connected planar graphs and $3$-connected non-planar graphs that do not admit any greedy drawing. Thus, the $3$-connected planar graphs form the largest class of graphs that might admit a greedy drawing, in a sense. Second, all the graphs with no $K_{3,3}$-minor admit a $3$-connected planar spanning graph, hence they admit a greedy drawing, provided the truth of the greedy embedding conjecture. Third, the preliminary study of Papadimitriou and Ratajczak~\cite{pr-crgr-04,pr-crgr-05} provided evidence for the mathematical depth of their conjectures.

In 2008 Leighton and Moitra~\cite{lm-srgems-10,ml-srgems-08} settled the greedy embedding conjecture in the affirmative; the same result was established (independently and slightly later) by Angelini et al.~\cite{afg-acgdt-09,afg-acgdt-10}. In this paper we show the following result. 

\begin{theorem} \label{th:main}
	Every $3$-connected planar graph admits a planar greedy drawing.
\end{theorem}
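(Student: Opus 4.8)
The plan is to build on two facts. First, Leighton and Moitra's theorem already supplies a greedy (but not necessarily planar) drawing of every $3$-connected planar graph. Second, greediness is monotone under edge addition: if a straight-line drawing of $G$ restricted to a spanning subgraph $H$ admits, for every ordered pair $(x,y)$, a distance-decreasing path lying in $H$, then the full drawing of $G$ is greedy, since every path of $H$ is a path of $G$ and Euclidean distances depend only on the vertex positions. Consequently it suffices to fix a planar embedding of $G$ (essentially unique by $3$-connectivity) and exhibit a planar straight-line drawing realizing that embedding whose vertex positions make \emph{some} spanning subgraph $H$ greedy. The freedom to choose $H$ as sparse and as well-structured as possible is exactly what I would exploit.

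First I would fix the spanning backbone. Following the Leighton--Moitra strategy, a natural candidate for $H$ is a spanning \emph{Christmas cactus} (a connected graph that is a tree-like arrangement of simple cycles, any two of which share at most one vertex), since such tree-of-cycles structures are precisely what their hyperbolic construction embeds greedily, and every $3$-connected planar graph contains one as a spanning subgraph. The problem then becomes geometric: realize the cactus so that along its tree of cycles every ordered pair is joined by a distance-decreasing path, while simultaneously laying out all remaining edges and faces of $G$ without crossings. For the layout I would recurse on the block-cut tree of the cactus, imitating the way canonical-ordering and convex-drawing methods nest faces inside faces. Draw the outer boundary as a convex polygon; then, descending the tree of cycles, place each successive cycle inside a thin, deep region --- a wedge of controlled aperture opening toward its attachment vertex --- carved out of the region allotted to its parent. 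The two invariants to maintain by induction are that (i) the nested regions are pairwise interior-disjoint and consistent with the fixed embedding, so the straight-line drawing of all of $G$ is planar, and (ii) for every target vertex, walking along the appropriate arc of each cycle and across the cut vertices strictly decreases the Euclidean distance to that target. I would tune the aperture angles and the ratios of successive nesting radii so that the near-radial geometry of each region forces monotone progress toward every vertex placed deeper inside, while convexity of the boundary handles progress toward vertices lying farther out.

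The hard part will be reconciling the two invariants. Leighton and Moitra's construction succeeds precisely because it spends the ambient freedom of the hyperbolic plane: vertices may be placed with exponentially growing separations and edges are allowed to cross, which is what makes their distance-monotonicity argument clean. Demanding a planar realization of all of $G$ removes this freedom, forcing every cycle of the cactus to be drawn as a simple polygon whose two arcs must \emph{each} still serve as a distance-decreasing route for targets lying on either side. I expect the cycles, as opposed to a purely tree-like or Hamiltonian backbone, to be the main obstacle: a Hamiltonian path could be laid along a convex curve with trivially monotone distances, whereas each cactus cycle must simultaneously bound a face region and offer two monotone arcs, and these requirements pull the vertex placements in opposite directions. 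Overcoming this will require quantitative control --- choosing the wedge apertures small enough that distance-monotonicity along both arcs is preserved, yet large enough that the nested face regions still tile the parent region without overlap --- and verifying that such a trade-off can be met uniformly at every node of the recursion.
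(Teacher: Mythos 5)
Your proposal is a program rather than a proof: the step you yourself flag as ``the hard part'' --- showing that the wedge apertures and nesting radii can be tuned so that every cactus cycle simultaneously bounds a face region and offers two distance-decreasing arcs, uniformly over the recursion --- is exactly the content that requires an argument, and it is left as a hope (``I expect\dots'', ``will require\dots verifying that such a trade-off can be met''). The paper's analogue of this step is Theorem~\ref{th:main-aux}: a six-property inductive invariant (the outer path $\tau_{uv}$ horizontal, all edges of $\beta_{uv}$ with slopes in $(-\alpha;0)$ or $(0;\alpha)$, paths to the two poles $u$ and $v$ with slopes confined to $(\pi-\alpha;\pi+\alpha)$ and $(-\alpha;\alpha)$, and stability of everything under translating $u$ arbitrarily far to the left). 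Maintaining that invariant through the case analysis occupies most of the paper, and nothing in your sketch substitutes for it.

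More concretely, the choice of backbone is likely fatal to the planarity half of your plan. A spanning Christmas cactus is selected by Leighton and Moitra for its connectivity structure, not for compatibility with the plane embedding of $G$: a non-cactus edge of $G$ can join two vertices lying in different branches of the cactus's tree of cycles, and once each branch has been committed to a thin, deep wedge carved out of its parent's region, there is in general no way to route such an edge as a crossing-free straight-line segment. This is precisely why Leighton--Moitra and Angelini et al.\ obtain \emph{non-planar} drawings of $G$: they only control the geometry of the spanning subgraph and then add the remaining edges, accepting crossings. The paper sidesteps this by decomposing $G$ itself rather than a spanning subgraph --- the Chen--Yu strong circuit graph decomposition splits $G$ into pieces $G_1,\dots,G_k$ (and $H$, $B_\ell$ in Case~B) that are glued along the outer face and whose union is all of $G$ --- so every edge of $G$ is drawn inside some recursive subproblem and planarity is argued globally from the slope constraints. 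To rescue the cactus route you would at minimum need to prove that a spanning Christmas cactus can be chosen so that every non-cactus edge of $G$ is ``local'' with respect to the tree of cycles (e.g., has both endpoints in the region allotted to a single cycle), and no such statement is established or even formulated in your proposal.
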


Given a $3$-connected planar graph $G$, both the algorithm by Leighton and Moitra~\cite{lm-srgems-10,ml-srgems-08} and the one by Angelini et al.~\cite{afg-acgdt-09,afg-acgdt-10} find a certain spanning subgraph $S$ of $G$ and construct a (planar) greedy drawing of $S$; then they embed the edges of $G$ not in $S$ as straight-line segments obtaining a, in general, {\em non-planar} greedy drawing of $G$. Thus, Theorem~\ref{th:main} strengthens Leighton and Moitra's and Angelini et al.'s results. Furthermore, {\em convex} drawings, in which all the faces are delimited by convex polygons, are planar, hence Theorem~\ref{th:main} provides a natural step towards a proof of the convex greedy embedding conjecture.  

Our proof employs a structural decomposition for $3$-connected planar graphs which finds its origins in a paper by Chen and Yu~\cite{cy-lc3g-02}. This decomposition actually works for a super-class of the $3$-connected planar graphs known as {\em strong circuit graphs}. We construct a planar greedy drawing of a given strong circuit graph $G$ recursively: We apply the structural decomposition to $G$ in order to obtain some smaller strong circuit graphs, we recursively construct planar greedy drawings for them, and then we suitably arrange these drawings together to get a planar greedy drawing of $G$. For this arrangement to be feasible, we need to ensure that the drawings we construct satisfy some coercive geometric requirements; these are described in the main technical theorem of the paper -- Theorem~\ref{th:main-aux}.

{\bf Related results.} Planar greedy drawings always exist for {\em maximal planar graphs}~\cite{d-gdt-10}. Further, every planar graph $G$ with a {\em Hamiltonian path} $P=(u_1,u_2,\dots,u_n)$ has a planar greedy drawing. Namely, construct a planar straight-line drawing $\Gamma$ of $G$ such that $y(u_1)<y(u_2)<\dots<y(u_n)$; such a drawing always exists~\cite{dt-apr-88}; scale $\Gamma$ down horizontally, so that $P$ is ``almost vertical''. Then, for any $1\leq i<j\leq n$, the paths $(u_i,u_{i+1}\dots,u_j)$ and $(u_j,u_{j-1}\dots,u_i)$ are distance-decreasing. A characterization of the {\em trees} that admit a (planar) greedy drawing is known~\cite{np-egdt-13}; indeed, a greedy drawing of a tree is always planar~\cite{adf-sgd-12}. 


Algorithms have been designed to construct {\em succinct} greedy drawings, in which the vertex coordinates are represented with a polylogarithmic number of bits~\cite{eg-sggruhg-11,gs-sggrep-09,hz-osgdptt-14}; this has been achieved by allowing the embedding space to be different from the Euclidean plane or the metric to be different from the Euclidean distance. 

Planar graph drawings have been studied in which paths between pairs of vertices are required to exist satisfying properties different from being distance-decreasing. Consider a path $P=(u_1,u_2,\dots,u_m)$ in a graph drawing. We say that $P$ is {\em self-approaching}~\cite{acglp-sag-12,npr-osaicd-16} if, for any three points $a,b,c$ in this order along $P$ from $u_1$ to $u_m$, the Euclidean distance between $a$ and $c$ is larger than the one between $b$ and $c$ -- then a self-approaching path is also distance-decreasing. We say that $P$ is {\em increasing-chord}~\cite{acglp-sag-12,dfg-icgps-15,npr-osaicd-16} if it is self-approaching in both directions. We say that $P$ is {\em strongly monotone}~\cite{acdfp-mdg-12,fikkms-smdpg-16,kssw-omdt-14} if the orthogonal projections of the vertices of $P$ on the line $\ell$ through $u_1$ and $u_m$ appear in the order $u_1,u_2,\dots,u_m$. We explicitly mention~\cite{fikkms-smdpg-16} the recent proof that every $3$-connected planar graph admits a planar drawing in which every pair of vertices is connected by a strongly monotone path.

\section{Preliminaries}

In this section we introduce some preliminaries. For a graph $G$, we denote by $V(G)$ and $E(G)$ its vertex and edge sets, respectively.

{\bf Subgraphs and connectivity.} Let $G$ be a graph and $U\subseteq V(G)$; we denote by $G-U$ the graph obtained from $G$ by removing the vertices in $U$ and their incident edges. Further, if $e\in E(G)$, we denote by $G-e$ the graph obtained from $G$ by removing the edge $e$. Let $H$ be a subgraph of $G$. An {\em $H$-bridge} $B$ of $G$ is either an edge of $G$ not in $H$ with both the end-vertices in $H$ (then we say that $B$ is a {\em trivial $H$-bridge}), or a connected component of $G- V(H)$ together with the edges from that component to the vertices in $V(H)$ (then we say that $B$ is a {\em non-trivial $H$-bridge}); the vertices in $V(H)\cap V(B)$ are the {\em attachments} of $B$ in $H$. Further, for a vertex $v\in V(G)-V(H)$, we denote by $H\cup \{v\}$ the subgraph of $G$ composed of $H$ and of the isolated vertex $v$. Further, 

A {\em vertex $k$-cut} (in the following simply called {\em $k$-cut}) in a connected graph $G$ is a set of $k$ vertices whose removal disconnects $G$. For $k\geq 2$, a connected graph is {\em $k$-connected} if it has no $(k-1)$-cut. A {\em $k$-connected component} of a graph $G$ is a maximal (with respect to both vertices and edges) $k$-connected subgraph of $G$. Given a $2$-cut $\{a,b\}$ in a $2$-connected graph $G$, an {\em $\{a,b\}$-component} is either the edge $ab$ (then we say that the $\{a,b\}$-component is {\em trivial}) or a subgraph of $G$ induced by $a$, $b$, and the vertices of a connected component of $G-\{a,b\}$ (then we say that the $\{a,b\}$-component is {\em non-trivial}).

{\bf Plane graphs and embeddings.} A drawing of a graph is {\em planar} if no two edges intersect except at common end-vertices. A {\em plane graph} is a planar graph together with a plane embedding; a {\em plane embedding} of a connected planar graph $G$ is an equivalence class of planar drawings of $G$, where two drawings $\Gamma_1$ and $\Gamma_2$ are {\em equivalent} if: (i) for each $v\in V(G)$, the clockwise order of the edges incident to $v$ coincides in $\Gamma_1$ and in $\Gamma_2$; and (ii) the clockwise order of the edges composing the walks delimiting the outer faces of $\Gamma_1$ and $\Gamma_2$ is the same. When we talk about a planar drawing of a plane graph $G$, we always mean that it respects the plane embedding of $G$. We assume that any subgraph $H$ of $G$ is associated with the plane embedding obtained from the one of $G$ by deleting the vertices and edges not in $H$. In a plane graph $G$ a vertex is {\em external} or {\em internal} depending on whether it is or it is not incident to the outer face of $G$, respectively. 

Refer to Fig.~\ref{fi:shape}. For two external vertices $u$ and $v$ of a $2$-connected plane graph $G$, let $\tau_{uv}(G)$ and $\beta_{uv}(G)$ be the paths composed of the vertices and edges encountered when walking along the boundary of the outer face of $G$ in clockwise and counter-clockwise direction from $u$ to $v$, respectively. Note that $\tau_{uv}(G)$ and $\beta_{vu}(G)$ have the same vertices and edges, however in reverse linear orders. 

\begin{figure}[htb]
	\centering
	\includegraphics[scale=0.6]{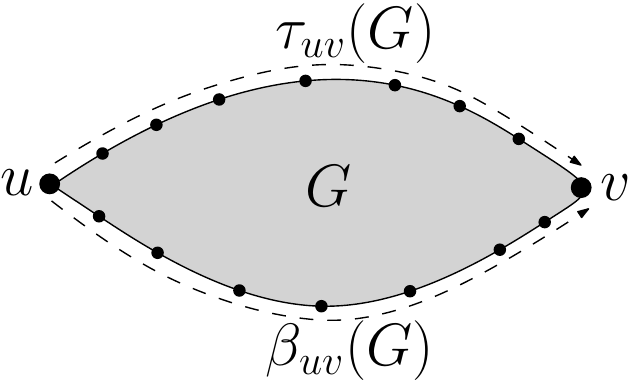}
	\caption{The paths $\tau_{uv}(G)$ and $\beta_{uv}(G)$ in a $2$-connected plane graph $G$.}
	\label{fi:shape}
\end{figure}

{\bf Geometry.} In this paper every angle is measured in radians, even when not explicitly stated. The {\em slope of a half-line} $\ell$ is defined as follows. Denote by $p$ the starting point of $\ell$ and let $\ell'$ be the vertical half-line starting at $p$ and directed towards decreasing $y$-coordinates. Then the slope of $\ell$ is the angle spanned by a counter-clockwise rotation around $p$ bringing $\ell'$ to coincide with $\ell$, minus $\frac{\pi}{2}$.  Note that, because of this definition, the slope of any half-line is assumed to be between -$\frac{\pi}{2}$ (included) and $\frac{3\pi}{2}$ (excluded); in the following, there will be very few exceptions to this assumption, which will be however evident from the text. Every angle expressed as $\arctan(\cdot)$ is assumed to be between -$\frac{\pi}{2}$ and $\frac{\pi}{2}$. We define the {\em slope of an edge} $uv$ in a graph drawing as the slope of the half-line from $u$ through $v$. Note that the slope of an edge $uv$ is equal to the slope of the edge $vu$ plus or minus $\pi$. For a directed line $\ell$, we let its slope be equal to the slope of any half-line starting at a point of $\ell$ and directed as $\ell$. We denote by $\Delta pqr$ a triangle with vertices $p,q,r$, and we denote by $\measuredangle pqr$ the angle of $\Delta pqr$ incident to $q$; note that $\measuredangle pqr$ is between $0$ and $\pi$. 

Let $\Gamma$ be a drawing of a graph $G$ and let $u,v$ be vertices in $V(G)$. We denote by $d(\Gamma,uv)$ the Euclidean distance between $u$ and $v$ in $\Gamma$. We also denote by $d_H(\Gamma,uv)$ the horizontal distance between $u$ and $v$ in $\Gamma$, that is, the absolute value of the difference between the $x$-coordinates of $u$ and $v$ in $\Gamma$; the vertical distance $d_V(\Gamma,uv)$  between $u$ and $v$ in $\Gamma$ is defined analogously. With a slight abuse of notation, we will use $d(\Gamma,pq)$, $d_H(\Gamma,pq)$, and $d_V(\Gamma,pq)$ even if $p$ and $q$ are points in the plane (and not necessarily vertices of $G$). A drawing of a graph is a {\em straight-line} drawing if each edge is represented by a straight-line segment.

The following lemma argues that the planarity and the greediness of a drawing are not lost as a consequence of any sufficiently small perturbation of the vertex positions.

\begin{lemma} \label{le:perturbation-preserves-planarity} \label{le:perturbation-preserves-greedy}
Let $\Gamma$ be a planar straight-line drawing of a graph $G$. There exists a value $\varepsilon^*_{\Gamma}>0$ such that the following holds true. Let $\Gamma'$ be any straight-line drawing in which, for every vertex $z\in V(G)$, the Euclidean distance between the positions of $z$ in $\Gamma$ and $\Gamma'$ is at most $\varepsilon^*_{\Gamma}$; then $\Gamma'$ is planar and any path which is distance-decreasing in $\Gamma$ is also distance-decreasing in $\Gamma'$.
\end{lemma}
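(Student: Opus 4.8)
The plan is to exploit the fact that both the planarity and the greediness of a straight-line drawing are governed by finitely many \emph{strict} geometric conditions on the vertex coordinates; since each such condition is preserved under a sufficiently small displacement of the vertices, I would obtain $\varepsilon^*_\Gamma$ by taking the minimum of the (finitely many, positive) displacement thresholds that guarantee the individual conditions. Concretely, I would first produce a value $\varepsilon_1>0$ that preserves planarity and, separately, a value $\varepsilon_2>0$ that preserves every distance-decreasing path, and then set $\varepsilon^*_\Gamma=\min\{\varepsilon_1,\varepsilon_2\}$.

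For planarity, I would enumerate the finitely many ``witnesses of non-crossing'' in $\Gamma$. For every pair of distinct vertices the two points representing them are at strictly positive distance; for every pair of edges with no common end-vertex the two closed segments representing them are disjoint, hence at strictly positive Euclidean distance; and for every vertex $z$ and every edge not incident to $z$ the distance from the point representing $z$ to the segment representing the edge is likewise strictly positive. For every pair of edges sharing exactly one end-vertex $u$, planarity of $\Gamma$ forces the two segments to meet only at $u$, so the relevant vertices are not collinear with one of them interior to the opposite segment. Each of these quantities is a continuous function of the vertex positions, so each stays bounded away from its degenerate value as long as every vertex is moved by less than some positive amount; taking $\varepsilon_1$ to be a suitably small fraction of the minimum of all these positive quantities ensures that in $\Gamma'$ no two independent edges cross, no vertex lies on a non-incident edge, and no two adjacent edges overlap, i.e.\ that $\Gamma'$ is planar.

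For greediness, I would first observe that a distance-decreasing path cannot repeat a vertex, because the values $d(\Gamma,u_iu_m)$ strictly decrease and are therefore pairwise distinct; hence there are only finitely many paths that are distance-decreasing in $\Gamma$. For each such path $(u_1,\dots,u_m)$ and each relevant index $i$ the defining inequality $d(\Gamma,u_{i+1}u_m)<d(\Gamma,u_iu_m)$ is strict, so it holds with a positive margin; since the Euclidean distance between two vertices is a continuous function of their positions, each such inequality persists once every vertex is displaced by less than some positive amount. Taking $\varepsilon_2$ to be the minimum over the (finitely many) paths and indices of these positive amounts guarantees that every path that is distance-decreasing in $\Gamma$ remains distance-decreasing in $\Gamma'$.

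The routine part is the continuity/open-condition bookkeeping; the only genuinely delicate point I expect is the treatment of \emph{adjacent} edges in the planarity argument. Unlike independent edges, two segments sharing an end-vertex are not at positive distance, so I cannot simply invoke a minimum-distance bound; instead I must argue via the angle $\measuredangle vuw$ they form at the shared vertex $u$ (or, in the degenerate case where they emanate in exactly opposite directions, via the fact that no third vertex is interior to either segment) that a small perturbation cannot make them overlap or create a spurious crossing near $u$. Once this case is handled, the threshold $\varepsilon^*_\Gamma=\min\{\varepsilon_1,\varepsilon_2\}$ is the desired positive quantity.
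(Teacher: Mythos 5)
Your proposal is correct and follows essentially the same route as the paper: both arguments observe that planarity and greediness are each certified by finitely many strict inequalities (positive distances between pairs of vertices, between vertices and non-incident edges, and between non-adjacent edges, plus positive margins $d(\Gamma,u_iu_m)-d(\Gamma,u_{i+1}u_m)$), each of which survives a sufficiently small perturbation, and then take $\varepsilon^*_{\Gamma}$ to be a fraction of the minimum such margin. The only difference is cosmetic: the ``delicate'' adjacent-edge case you flag is handled in the paper without any angle argument, by noting that two segments sharing an endpoint can intersect improperly only by overlapping, which forces a vertex onto a non-incident edge and is therefore already excluded by the vertex--edge distance bound.
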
 

\begin{proof}
Let $\delta$ be the minimum Euclidean distance in $\Gamma$ between any two vertices, or between any vertex and any non-incident edge, or between any two non-adjacent edges, where the Euclidean distance between a point $p$ and a straight-line segment $s$ is the minimum Euclidean distance between $p$ and any point of $s$, and the Euclidean distance between two straight-line segments $s_1$ and $s_2$ is the minimum Euclidean distance between any point of $s_1$ and any point of $s_2$. Note that $\delta>0$, since $\Gamma$ is planar. Further, let $\gamma=\min\{d(\Gamma,uz)-d(\Gamma,vz)\}$, where the minimum is taken over all the ordered triples $(u,v,z)$ of distinct vertices of $G$ such that $d(\Gamma,uz)>d(\Gamma,vz)$. Note that $\gamma>0$. Set $\varepsilon^*_{\Gamma}=\min\{\frac{\delta}{3},\frac{\gamma}{5}\}$. Note that $\varepsilon^*_{\Gamma}>0$.

Consider any straight-line drawing $\Gamma'$ of $G$ in which, for each vertex $z\in V(G)$, the Euclidean distance between the positions of $z$ in $\Gamma$ and $\Gamma'$ is at most $\varepsilon^*_{\Gamma}$. 

We prove that $\Gamma'$ is planar. In order to do that, we exploit the following observation. For any point $p'$ that belongs to the straight-line segment $s'$ representing an edge $e$ in $\Gamma'$, there exists a point $p$ whose distance from $p'$ is at most $\varepsilon^*_{\Gamma}$ and that belongs to the straight-line segment $s$ representing $e$ in $\Gamma$. This is because $s'$ is contained in the convex hull of the two disks with radius $\varepsilon^*_{\Gamma}$ centered at the end-points of $s$ or, equivalently, in the region which is the Minkowski sum of $s$ with a disk with radius $\varepsilon^*_{\Gamma}$. Now suppose, for a contradiction, that in $\Gamma'$ two distinct vertices $v_1$ and $v_2$ coincide at a point $p'$, or an edge $e$ overlaps a non-incident vertex $v$  at a point $p'$, or two non-adjacent edges $e_1$ and $e_2$ cross at a point $p'$. Then there exist two points $p_1$ and $p_2$ in $\Gamma$ that are at distance at most $\varepsilon^*_{\Gamma}$ from $p'$ and hence at most $2\varepsilon^*_{\Gamma}$ from each other and such that $v_1$ and $v_2$ are placed at $p_1$ and $p_2$ in $\Gamma$, or such that $v$ is placed at $p_1$ and $p_2$ belongs to the straight-line segment representing $e$ in $\Gamma$, or such that $p_1$ and $p_2$ belong to the straight-line segments representing $e_1$ and $e_2$ in $\Gamma$, respectively. However, $2\varepsilon^*_{\Gamma}\leq \frac{2\delta}{3}<\delta$, which contradicts the definition of $\delta$. 

We prove that any path $P=(u_1,u_2,\dots,u_m)$ which is distance-decreasing in $\Gamma$ is also distance-decreasing in $\Gamma'$. Since $P$ is distance-decreasing, we have that $d(\Gamma,u_iu_m)>d(\Gamma,u_{i+1}u_m)$, for every $i=1,2\dots,m-2$. Since the Euclidean distance between the positions of any vertex $z\in V(G)$ in $\Gamma$ and $\Gamma'$ is at most $\varepsilon^*_{\Gamma}$, for any $i=1,\dots,m-2$, we have that $d(\Gamma',u_iu_m)\geq d(\Gamma,u_iu_m)-2\varepsilon^*_{\Gamma}$ and that $d(\Gamma',u_{i+1}u_m)\leq d(\Gamma,u_{i+1}u_m)+2\varepsilon^*_{\Gamma}$. It follows that $d(\Gamma',u_iu_m)-d(\Gamma',u_{i+1}u_m)\geq d(\Gamma,u_iu_m)-d(\Gamma,u_{i+1}u_m)-4\varepsilon^*_{\Gamma}\geq \frac{d(\Gamma,u_iu_m)-d(\Gamma,u_{i+1}u_m)}{5}>0$. Hence, $d(\Gamma',u_iu_m)>d(\Gamma',u_{i+1}u_m)$ for $i=1,\dots,m-2$. It follows that $P$ is distance-decreasing in $\Gamma'$. 
\end{proof}

We conclude this section with a technical lemma we are going to exploit heavily in the next section. Refer to Fig.~\ref{fi:three-lines1}. 

\begin{figure}[htb]
	\centering
	\subfloat[]{
		\includegraphics[scale=0.7]{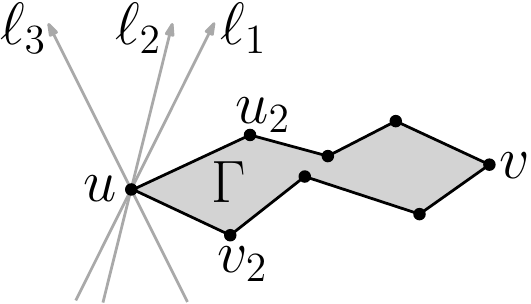}
		\label{fi:three-lines1}
	}\hfil
	\subfloat[]{
		\includegraphics[scale=0.6]{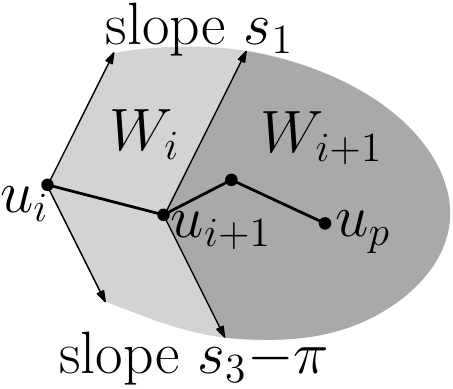}
		\label{fi:three-lines2}
	}\hfil
	\caption{(a) Illustration for the statement of Lemma~\ref{le:same-side}. (b) Illustration for the proof of Lemma~\ref{le:same-side}.}
	\label{fi:three-lines}
\end{figure}

\begin{lemma} \label{le:same-side}
Let $G$ be a $2$-connected plane graph whose outer face consists of two paths $(u=u_1,u_2,\dots,u_p=v)$ and $(u=v_1,v_2,\dots,v_q=v)$. Let $\ell_1$, $\ell_2$, and $\ell_3$ be three directed lines that pass through a point $p_u$ and that have slopes $s_1$, $s_2$, and $s_3$, respectively, where $0<s_1\leq s_2\leq s_3<\pi$. Let $\Gamma$ be a planar drawing of $G$ such that $u$ lies at $p_u$; let $s_m$ ($s_M$) be the minimum (maximum, respectively) slope of an edge $u_iu_{i+1}$ or $v_jv_{j+1}$. If $s_3-\pi <s_m\leq s_M <s_1$ (if $s_3<s_m\leq s_M <s_1+\pi$), then $\Gamma$ lies entirely to the right (to the left, respectively) of $\ell_2$, except for the vertex $u$. 
\end{lemma}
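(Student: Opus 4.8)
The plan is to reduce the statement to a one-dimensional monotonicity argument by measuring signed distance from the directed line $\ell_2$. I would place $u$ at $p_u$ and introduce the linear functional $f(w)=(x_w-x_{p_u})\sin s_2-(y_w-y_{p_u})\cos s_2$, which is, up to a positive factor, the signed distance of a point $w$ from $\ell_2$ and is positive exactly on the right side of $\ell_2$ (here I use that under the paper's convention the slope of a half-line equals its standard counter-clockwise direction angle, so the rightward normal of $\ell_2$ is $(\sin s_2,-\cos s_2)$). Then $f(u)=0$, and showing that $\Gamma$ lies to the right of $\ell_2$ except at $u$ amounts to proving $f(w)>0$ for every vertex $w\neq u$ together with the fact that no edge meets $\ell_2$ except at $u$.

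The core computation concerns how $f$ changes along a boundary edge. If an edge runs from $a$ to $b$ and has slope $\sigma$, then $b-a$ is a positive multiple of $(\cos\sigma,\sin\sigma)$, so $f(b)-f(a)=L\,\sin(s_2-\sigma)$ with $L>0$ the edge length. First I would orient every boundary edge away from $u$, i.e.\ from $u_i$ to $u_{i+1}$ and from $v_j$ to $v_{j+1}$; each such slope $\sigma$ then lies in $[s_m,s_M]\subseteq(s_3-\pi,s_1)$. Combined with $s_1\le s_2\le s_3$, this pins $s_2-\sigma$ strictly inside $(0,\pi)$, whence $\sin(s_2-\sigma)>0$ and $f(b)-f(a)>0$. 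Thus $f$ strictly increases along each of the two boundary paths; since both start at $u$ with value $0$, every boundary vertex other than $u$ satisfies $f>0$.

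The remaining step, which I expect to be the main obstacle, is to transfer this half-plane bound from the boundary to the internal vertices and the edges. Here I would use that the outer boundary of the $2$-connected plane graph $G$ is a simple cycle and that in the planar drawing $\Gamma$ every vertex and edge lies in the closed region $R$ it bounds; since $R$ is contained in the convex hull of the boundary vertices, which all satisfy $f\ge 0$, we obtain $R\subseteq\{f\ge 0\}$. An internal vertex lies in the open interior of $R$, so if it had $f=0$ it would lie on $\ell_2$ and a small disk around it would meet $\{f<0\}$, contradicting $R\subseteq\{f\ge 0\}$; hence internal vertices also satisfy $f>0$. Finally, $f$ is linear on each edge, so an edge with both endpoints in $\{f>0\}$ stays strictly to the right, while an edge incident to $u$ runs from value $0$ to a positive value and therefore meets $\ell_2$ only at $u$.

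The symmetric (left) case should go through verbatim after sliding the slope window: for $\sigma\in(s_3,s_1+\pi)$ and $s_1\le s_2\le s_3$ one gets $s_2-\sigma\in(-\pi,0)$, so $\sin(s_2-\sigma)<0$, $f$ strictly decreases along both boundary paths, and the same convex-hull and interior arguments place $\Gamma$ in $\{f\le 0\}$, touching $\ell_2$ only at $u$. I anticipate no difficulty in the trigonometric estimate once the edges are oriented away from $u$; the only point requiring care is the propagation of the half-plane bound into the interior, which is exactly where planarity of $\Gamma$ and the containment of $R$ in the convex hull of its boundary are needed.
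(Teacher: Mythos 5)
Your proof is correct, but it takes a genuinely different route from the paper's. Both arguments share the same outer skeleton: reduce to showing that the two boundary paths lie (weakly) on the correct side of $\ell_2$, and then invoke planarity to pull the interior of the drawing along. The difference is in how the boundary claim is established. The paper runs a reverse induction along each boundary path using a family of nested open wedges $W_i$ at the vertices $u_i$, delimited by half-lines of slopes $s_3-\pi$ and $s_1$, showing $W_{i+1}\subset W_i$ and hence that the whole tail of the path sits inside $W_1$, which in turn lies to the right of $\ell_2$. You instead introduce the affine functional $f(w)=(x_w-x_{p_u})\sin s_2-(y_w-y_{p_u})\cos s_2$ (the signed distance to $\ell_2$, correctly signed under the paper's slope convention) and observe that each boundary edge of slope $\sigma\in(s_3-\pi,s_1)$ increases $f$ by $L\sin(s_2-\sigma)>0$, so $f$ telescopes strictly upward from $f(u)=0$; the slope bookkeeping in both the right and left cases checks out. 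Your computation is more elementary and symmetric between the two cases, at the cost of proving slightly less (the paper's wedge argument localizes each tail of the boundary path inside a wedge at $u_i$, which is stronger than a single half-plane bound, though that extra strength is not needed for the lemma). You are also more explicit than the paper about the final planarity step: the paper simply asserts that planarity transfers the bound from the boundary to the whole drawing, whereas you justify it via containment of the region bounded by the outer cycle in the convex hull of its vertices, together with linearity of $f$ on edges; this is a legitimate filling-in of a step the paper leaves implicit.
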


\begin{proof}
We only prove that, if $s_3-\pi <s_m\leq s_M <s_1$, then $\Gamma$ lies entirely to the right of $\ell_2$, except for the vertex $u$; the proof that, if $s_3<s_m\leq s_M <s_1+\pi$, then $\Gamma$ lies entirely to the left of $\ell_2$, except for the vertex $u$, is symmetric.

Further, it suffices to prove that the paths $(u=u_1,u_2,\dots,u_p)$ and $(u=v_1,v_2,\dots,v_q)$ lie to the right of $\ell_2$, except for the vertex $u$; indeed, if that is the case, then the planarity of $\Gamma$ implies that the entire drawing $\Gamma$, except for the vertex $u$, lies to the right of $\ell_2$. 

We now prove that the path $(u=u_1,u_2,\dots,u_p)$ lies to the right of $\ell_2$, except for the vertex $u$; the proof for the path $(u=v_1,v_2,\dots,v_q)$ is analogous.

For $i=1,\dots,p$, let $W_i$ be the open wedge delimited by the half-lines starting at $u_i$ with slopes $s_3-\pi$ and $s_1$; that is, $W_i$ is the region of the plane that is spanned by a half-line starting at $u_i$ with slope $s_3-\pi$ while rotating counter-clockwise around $u_i$ until it has slope $s_1$. We claim that $W_i$ contains the path $(u_i,u_{i+1},\dots,u_p)$ in its interior, except for the vertex $u_i$ which is on the boundary of $W_i$. Observe that the claim (with $i=1$) implies the lemma, since $W_i$ lies to the right of $\ell_2$, given that $s_2-\pi \leq s_3-\pi$ and $s_1\leq s_2$. 

We now prove the claim by reverse induction on $i$. The case $i=p$ is trivial. Hence, assume that $W_{i+1}$ contains the path $(u_{i+1},u_{i+2},\dots,u_p)$ in its interior, except for the vertex $u_{i+1}$ which is on the boundary of $W_{i+1}$. See Fig.~\ref{fi:three-lines2}. Since $s_3-\pi <s_m\leq s_M <s_1$, the edge $u_iu_{i+1}$ lies in the interior of the wedge $W_{i}$, except for the vertex $u_{i}$ which is on the boundary of $W_{i}$. Further, since $u_{i+1}$ lies in the interior of $W_{i}$, the entire wedge $W_{i+1}$, and hence the path $(u_{i+1},u_{i+2},\dots,u_p)$, lies in the interior of $W_{i}$. This completes the induction and hence concludes the proof of the lemma.
\end{proof}

\section{Proof of Theorem~\ref{th:main}}

In this section we prove Theorem~\ref{th:main}. Throughout the section, we will work with {\em plane graphs}. Further, we will deal with a class of graphs that is wider than the one of $3$-connected planar graphs. The graphs in this class have been introduced by Chen and Yu~\cite{cy-lc3g-02} with the name of {\em strong circuit graphs}, as they constitute a subclass of the well-known {\em circuit graphs}, whose definition is due to Barnette and dates back to 1966~\cite{b-tpg-65}. Here we rephrase the definition of strong circuit graphs as follows. 

\begin{definition} \label{def:strong-circuit}
	A {\em strong circuit graph} is a triple $(G,u,v)$ such that either: (i) $G$ is an edge $uv$ or (ii) $|V(G)|\geq 3$ and the following properties are satisfied. 
	
	\begin{itemize}
		\item[(a)] $G$ is a $2$-connected plane graph;
		\item[(b)] $u$ and $v$ are two distinct external vertices of $G$;
		\item[(c)] if edge $uv$ exists, then it coincides with the path $\tau_{uv}(G)$; and
		\item[(d)] for every $2$-cut $\{a,b\}$ of $G$ we have that $a$ and $b$ are external vertices of $G$ and at least one of them is an internal vertex of the path $\beta_{uv}(G)$; further, every non-trivial $\{a,b\}$-component of $G$ contains an external vertex of $G$ different from $a$ and $b$.
	\end{itemize}
\end{definition}

Several problems are more easily solved on (strong) circuit graphs than on $3$-connected planar graphs. This is because the (strong) circuit graphs can be easily decomposed into smaller (strong) circuit graphs, and hence are suitable for inductive proofs. We now present a structural decomposition for strong circuit graphs whose main ideas can be found in a paper by Chen and Yu~\cite{cy-lc3g-02} (see also a recent paper by Da Lozzo et al.~\cite{ddfmr-dpgmcv-16} for an application of this decomposition to {\em cubic} strong circuit graphs). 

Consider a strong circuit graph $(G,u,v)$ such that $G$ is neither a single edge nor a simple cycle. The decomposition distinguishes the case in which the path $\tau_{uv}(G)$ coincides with the edge $uv$ (Case A) from the case in which it does not (Case B). 

\begin{figure}[htb]
	\centering
	\includegraphics[scale=0.6]{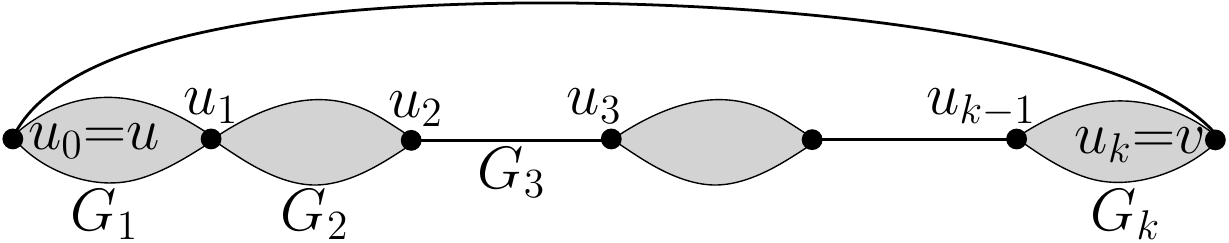}
	\caption{Structure of $(G,u,v)$ in Case~A.}
	\label{fi:structure1}
\end{figure}

\begin{lemma} \label{le:decomposition-A}
Suppose that we are in Case A (refer to Fig.~\ref{fi:structure1}). Then the graph $G'=G-uv$ consists of a sequence of graphs $G_1,\dots,G_k$, with $k\geq 1$, such that:
\begin{itemize}
	\item~\ref{le:decomposition-A}a: for $i=1,\dots,k-1$, the graphs $G_i$ and $G_{i+1}$ share a single vertex $u_i$; further, $G_i$ is in the outer face of $G_{i+1}$ and vice versa in the plane embedding of $G$;
	\item~\ref{le:decomposition-A}b: for $1\leq i,j \leq k$ with $j\geq i+2$, the graphs $G_i$ and $G_j$ do not share any vertex; and
	\item~\ref{le:decomposition-A}c: for $i=1,\dots,k$ with $u_0=u$ and $u_k=v$, $(G_i,u_{i-1},u_i)$ is a strong circuit graph.
\end{itemize}
\end{lemma}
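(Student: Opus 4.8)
The plan is to exploit the structure forced by being in Case~A, namely that $\tau_{uv}(G)$ is the single edge $uv$ while $G$ is neither an edge nor a simple cycle. First I would remove the edge $uv$ and study the resulting graph $G'=G-uv$. The key observation is that $u$ and $v$ are no longer necessarily in a single $2$-connected block of $G'$: removing $uv$ can create $1$-cuts (cut vertices) of $G'$ lying along the lower boundary path $\beta_{uv}(G)$. I would let $u=u_0,u_1,\dots,u_{k-1},u_k=v$ be the sequence of cut vertices of $G'$ separating $u$ from $v$ (together with the endpoints $u$ and~$v$), ordered as they are encountered along the walk $\beta_{uv}(G)$; this ordering is well defined because $G$ is $2$-connected, so every cut vertex of $G'$ must be external and must lie on $\beta_{uv}(G)$. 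Defining $G_i$ as the subgraph of $G'$ ``between'' consecutive cut vertices $u_{i-1}$ and~$u_i$ (the union of the blocks of $G'$ in that stretch) gives the desired sequence $G_1,\dots,G_k$.

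\textbf{Verifying the block/path structure (parts a and b).}
Properties \ref{le:decomposition-A}a and \ref{le:decomposition-A}b should then follow from the standard block-cut-vertex theory of $G'$ combined with planarity. By construction, consecutive graphs $G_i,G_{i+1}$ meet exactly at the cut vertex~$u_i$, giving the shared-single-vertex claim; the fact that $G_i$ lies in the outer face of $G_{i+1}$ and vice versa is a consequence of the plane embedding, since the $G_i$'s are arranged consecutively along the outer boundary with the (now removed) edge $uv$ having previously enclosed them all on one side. For part~b, if $G_i$ and $G_j$ with $j\ge i+2$ shared a vertex~$w$, then $w$ together with the separating cut vertices would contradict either the minimality of the cut-vertex sequence or the $2$-connectivity of~$G$; I would argue that any such shared vertex would create a $1$-cut of $G'$ not in my list, or a $2$-cut of $G$ violating Definition~\ref{def:strong-circuit}(d). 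Planarity is what prevents ``long-range'' sharing, so this is essentially a routine embedding argument.

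\textbf{Each piece is a strong circuit graph (part c) --- the main obstacle.}
The substantive part is \ref{le:decomposition-A}c: showing each $(G_i,u_{i-1},u_i)$ satisfies Definition~\ref{def:strong-circuit}. If $G_i$ is a single edge the claim is immediate, so assume $|V(G_i)|\ge 3$. Property~(a), that $G_i$ is $2$-connected, holds because $G_i$ is (the union within) a block of $G'$ between consecutive cut vertices, hence has no cut vertex of its own. Property~(b) holds since $u_{i-1}$ and $u_i$ are external in $G$ and remain external in~$G_i$. Property~(c) requires checking that if the edge $u_{i-1}u_i$ is present then it coincides with $\tau_{u_{i-1}u_i}(G_i)$; this I expect to inherit from the fact that the upper boundary of the whole graph was the single edge $uv$, so no vertices of $G$ lie ``above'' in the region bounding $G_i$ from the top. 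The genuinely delicate step is property~(d): I must show that every $2$-cut $\{a,b\}$ of $G_i$ is external in $G_i$, has at least one of $a,b$ internal to $\beta_{u_{i-1}u_i}(G_i)$, and that its non-trivial components each contain a third external vertex. Here the hard part will be relating a $2$-cut of the piece $G_i$ back to a $2$-cut of the original graph~$G$: a pair separating $G_i$ need not separate all of~$G$, because the rest of $G$ (the other pieces and the edge~$uv$) attaches only at $u_{i-1}$ and~$u_i$. I would handle this by a case analysis on whether $\{a,b\}$ meets $\{u_{i-1},u_i\}$, arguing that in every case $\{a,b\}$ either \emph{is} a $2$-cut of~$G$ (so Definition~\ref{def:strong-circuit}(d) for~$G$ transfers directly), or else separates in a way that contradicts $G$'s $2$-connectivity. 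Controlling the interaction between the local boundary $\beta_{u_{i-1}u_i}(G_i)$ and the global boundary $\beta_{uv}(G)$, and ensuring the ``third external vertex'' condition survives the restriction, is where the proof will require the most care.
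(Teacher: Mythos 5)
Your proposal follows essentially the same route as the paper: decompose $G'=G-uv$ into its $2$-connected blocks, observe that the $2$-connectivity of $G$ forces these blocks to form a chain whose cut vertices lie on $\beta_{uv}(G)$, and then verify the strong-circuit-graph properties piece by piece, with the delicate property~(d) handled exactly as the paper does, by lifting each $2$-cut of $G_i$ to a $2$-cut of $G$ and invoking property~(d) of $(G,u,v)$. The only point you assert rather than argue is that the block-cut tree of $G'$ is a path (no cut vertex lying in three blocks and no block containing three cut vertices), which the paper proves by noting that the single edge $uv$ can reconnect at most two components of $G'$ minus a cut vertex; this follows readily from your own observation that every cut vertex of $G'$ must separate $u$ from $v$, so it is a presentational omission rather than a gap.
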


\begin{proof}
Consider the {\em BC-tree} $T'$ of $G'$, which is the tree that is defined as follows. The tree $T'$ contains a {\em B-node} for each $2$-connected component of $G'$ and a {\em C-node} for each $1$-cut of $G'$; further, $T'$ contains an edge between a B-node $b$ and a C-node $c$ if the $1$-cut corresponding to $c$ is a vertex of the $2$-connected component corresponding to $b$. 

\begin{figure}[htb]
	\centering
	\subfloat[]{
		\includegraphics[scale=0.6]{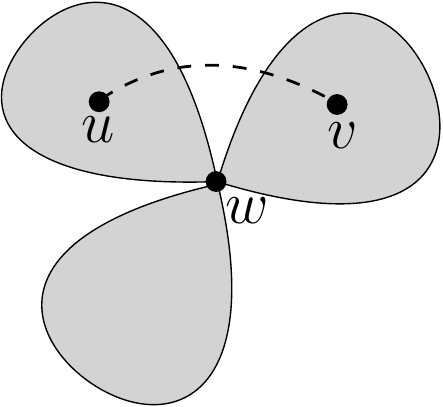}
		\label{fi:BC-tree1}
	}\hfil
	\subfloat[]{
		\includegraphics[scale=0.6]{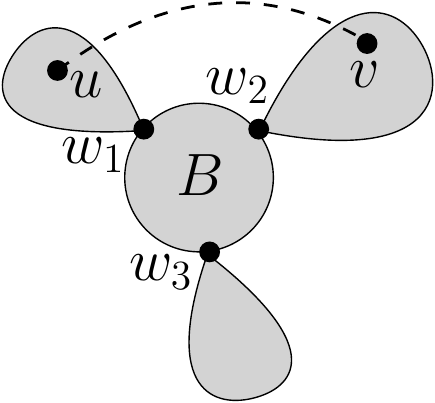}
		\label{fi:BC-tree2}
	}\hfil
	\subfloat[]{
		\includegraphics[scale=0.6]{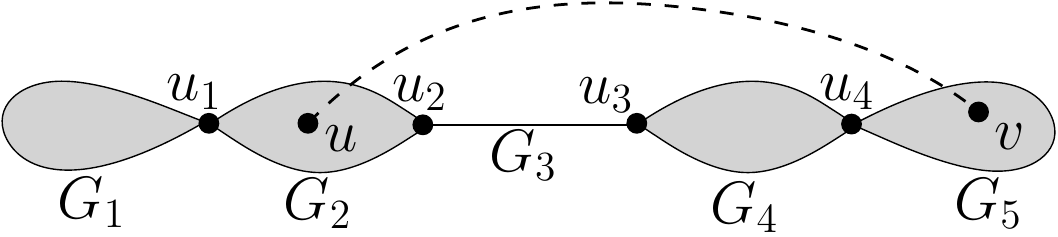}
		\label{fi:BC-tree3}
	}
	\caption{(a) The BC-tree $T'$ of $G'$ contains a node with degree at least $3$ corresponding to a $1$-cut $\{w\}$ of $G'$. (b) The BC-tree $T'$ of $G'$ contains a node with degree at least $3$ corresponding to a $2$-connected component $B$ of $G'$. (c) The vertices $u$ and $v$ are not in $G_1$.}
	\label{fi:BCtree}
\end{figure}

First, we have that $T'$ is a path. Namely suppose, for a contradiction, that $T'$ has a node $t$ with degree at least $3$. If $t$ corresponds to a $1$-cut $\{w\}$ of $G'$, as in Fig.~\ref{fi:BC-tree1}, then $w$ belongs to at least three $2$-connected components of $G'$ and the graph $G''=G'-\{w\}$ consists of at least three connected components. Hence, the graph $G''$ plus edge $uv$ is disconnected, which implies that $\{w\}$ is a $1$-cut of $G$; this contradicts Property~(a) of $(G,u,v)$. Analogously, if $t$ corresponds to a $2$-connected component $B$ of $G'$, as in Fig.~\ref{fi:BC-tree2}, then $B$ contains three distinct $1$-cuts $\{w_1\}$, $\{w_2\}$, and $\{w_3\}$ of $G'$; for each $i\in \{1,2,3\}$, the removal of $w_i$ from $G'$ disconnects $G'$ into at least two connected components, at least one of which, denoted by $G_i$, does not contain vertices of $B$. Since $G_1$, $G_2$, and $G_3$ share no vertex, the edge $uv$ connects at most two components $G_i$ and $G_j$ with $i,j\in \{1,2,3\}$, which implies that $\{w_h\}$ is a $1$-cut of $G$, where $h\neq i,j$ and $h\in \{1,2,3\}$; this contradicts Property~(a) of $(G,u,v)$. Hence $T'$ is a path $(b_1,c_1,b_2,c_2,\dots,b_{k-1},c_{k-1},b_k)$. 

Let $G_i$ be the $2$-connected component of $G'$ corresponding to the B-node $b_i$ and let $\{u_i\}$ be the $1$-cut of $G'$ corresponding to the C-node $c_i$. Then, for $i=1,\dots,k-1$, the graphs $G_i$ and $G_{i+1}$ share a single vertex $u_i$, while for $1\leq i,j \leq k$ with $j\geq i+2$ the graphs $G_i$ and $G_j$ do not share any vertex. The vertices $u$ and $v$ are one in $G_1$ and one in $G_k$; indeed, if say $G_1$ did not contain any of $u$ and $v$, as in Fig.~\ref{fi:BC-tree3}, then $\{u_1\}$ would be a $1$-cut of $G$; this would contradict Property~(a) of $(G,u,v)$. Assume, w.l.o.g.\ up to renaming, that $u$ belongs to $G_1$ and $v$ belongs to $G_k$. We also have that $u\neq u_1$, as if $u=u_1$ then $\{u_1\}$ would be a $1$-cut of $G$, again contradicting Property~(a) of $(G,u,v)$; analogously, $v\neq u_{k-1}$. 

We prove that $G_{i+1}$ lies in the outer face of $G_{i}$ in the plane embedding of $G$, for every $i=1,\dots,k-1$. Suppose for a contradiction that, for some $i\in\{1,\dots,k-1\}$, the graph $G_{i+1}$ lies inside an internal face $f$ of $G_i$ (except for the vertex $u_i$, which is on the boundary of $f$) in the plane embedding of $G$. Since the graphs $G_{i+2}, \dots, G_k$ do not share any vertex with $G_i$, by planarity they all lie inside $f$. It follows that the vertex $v$ lies inside $f$ (note that $v\neq u_i$ even if $k=i+1$) and hence it is not incident to the outer face of $G$, which contradicts Property~(b) of $(G,u,v)$. An analogous proof shows that $G_{i}$ lies in the outer face of $G_{i+1}$ in the plane embedding of $G$, for every $i=1,\dots,k-1$.

It remains to prove that, for $i=1,\dots,k$, the triple $(G_i,u_{i-1},u_{i})$ is a strong circuit graph, where $u_0=u$ and $u_k=v$. We are going to use the fact that $\beta_{uv}(G)$ is composed of the paths $\beta_{uu_1}(G_1),\beta_{u_1u_2}(G_2),\dots,\beta_{u_{k-1}v}(G_k)$. This is because $uv$ coincides with $\tau_{uv}(G)$ by Property~(c) of $(G,u,v)$ and because $G_{i+1}$ lies in the outer face of $G_{i}$ and vice versa in the plane embedding of $G$.  

(a) Graph $G_i$ is $2$-connected by assumption and it is associated with a plane embedding, given that it is a subgraph of the plane graph $G$.   

(b) For $i=1,\dots,k-1$, the vertex $u_{i}$ is external in the plane embedding of $G_i$, since $G_i$ is in the outer face of $G_{i+1}$ and vice versa; analogously, for $i=2,\dots,k$, the vertex $u_{i-1}$ is external in the plane embedding of $G_i$. Further, $u_0=u$ and $u_k=v$ are external in the plane embeddings of $G_1$ and $G_k$, respectively, since they are external in the plane embedding of $G$. Finally, for $i=1,\dots,k$, the vertices $u_{i-1}$ and $u_i$ are distinct, as otherwise $\{u_{i-1}=u_i\}$ would be a $1$-cut of $G$, which would contradict Property~(a) of $(G,u,v)$.

(c) Suppose, for a contradiction, that the edge $u_{i-1}u_i$ exists and that it does not coincide with $\tau_{u_{i-1}u_i}(G_i)$. This implies that $G_i$ contains vertices different from $u_{i-1}$ and $u_i$, and hence that $\{u_{i-1},u_i\}$ is a $2$-cut of $G$. The $\{u_{i-1},u_i\}$-component $H_i$ of $G$ that contains $\tau_{u_{i-1}u_i}(G_i)$ is non-trivial, given that $\tau_{u_{i-1}u_i}(G_i)$ does not coincide with $u_{i-1}u_i$. Further, no vertex of $H_i$ other than $u_{i-1}$ and $u_i$ is incident to the outer face of $G$, given that all the vertices of $H_i$ other than $u_{i-1}$ and $u_i$ lie inside the region delimited by the cycle $\beta_{uu_1}(G_1) \cup \dots \cup \beta_{u_{i-2}u_{i-1}}(G_{i-1})\cup u_{i-1}u_i \cup \beta_{u_iu_{i+1}}(G_{i+1})\cup \dots \cup\beta_{u_{k-1}v}(G_k) \cup vu$. However, this contradicts Property~(d) of $(G,u,v)$. 

(d) Consider any $2$-cut $\{a,b\}$ of $G_i$; then $G_i$ has at least two non-trivial $\{a,b\}$-components. 
	
We prove that $a$ and $b$ are external vertices of $G_i$. Suppose, for a contradiction, that $a$ is an internal vertex of $G_i$ (the argument if $b$ is an internal vertex of $G_i$ is analogous), as in Fig.~\ref{fi:PropertyD1}. Then the cycle delimiting the outer face of $G_i$ belongs to a single $\{a,b\}$-component $H$ of $G_i$, and there is a non-trivial $\{a,b\}$-component $H'\neq H$ of $G_i$ that does not contain any external vertices of $G_i$ other than $b$. Since all the edges in $E(G)-E(G_i)$ lie in the outer face of $G_i$ in the plane embedding of $G$, it follows that $\{a,b\}$ is a $2$-cut of $G$, and that $H'$ is a non-trivial $\{a,b\}$-component of $G$ that does not contain any external vertices of $G$ other than $b$. However, this contradicts Property~(d) of $(G,u,v)$.  

\begin{figure}[htb]
	\centering
	\subfloat[]{
		\includegraphics[scale=0.6]{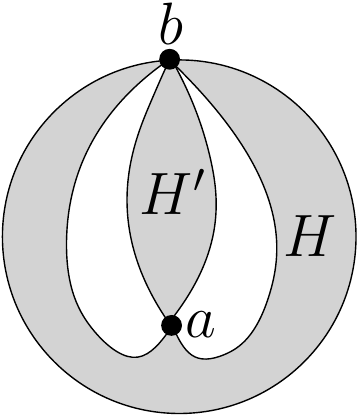}
		\label{fi:PropertyD1}
	}\hfil
	\subfloat[]{
		\includegraphics[scale=0.6]{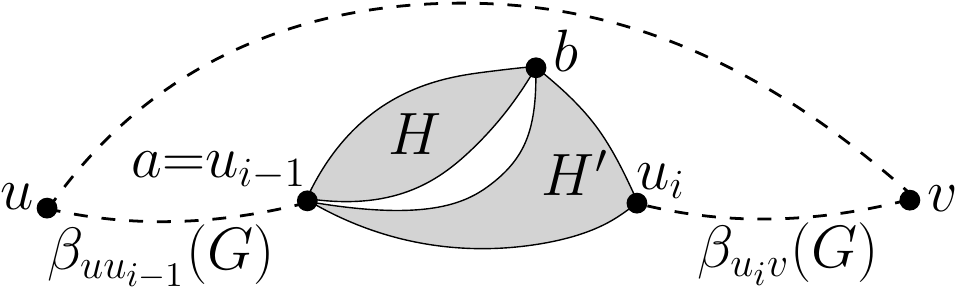}
		\label{fi:PropertyD2}
	}
	\caption{Illustration for the proof that $(G_i,u_{i-1},u_{i})$ satisfies Property~(d) of a strong circuit graph. (a) A vertex $a$ of a $2$-cut $\{a,b\}$ of $G_i$ is an internal vertex of $G_i$. (b) The $\{a,b\}$-component $H$ of $G_i$ containing $\tau_{ab}(G_i)$ contains no internal vertex of $\beta_{ab}(G_i)$.}
	\label{fi:PropertyD}
\end{figure}

Next, we prove that at least one of $a$ and $b$ is an internal vertex of $\beta_{u_{i-1}u_i}(G_i)$. Suppose, for a contradiction, that $a$ and $b$ are both in $\tau_{u_{i-1}u_i}(G_i)$. Assume, w.l.o.g.\ up to renaming of $a$ and $b$, that $u_{i-1}$, $a$, $b$, and $u_{i}$ appear in this order in $\tau_{u_{i-1}u_i}(G_i)$, where possibly $u_{i-1}=a$ and/or $b=u_i$. Let $H$ be the $\{a,b\}$-component of $G_i$ containing $\tau_{ab}(G_i)$; let $H'$ be any non-trivial $\{a,b\}$-component of $G_i$ different from $H$. 
	
\begin{itemize}
	\item If $H$ contains an internal vertex of $\beta_{ab}(G_i)$, then it contains the entire cycle delimiting the outer face of $G_i$. The planarity of $G_i$ implies that $H'$ lies inside an internal face of $H$, except at vertices $a$ and $b$. This has two consequences. First, since all the edges in $E(G)-E(G_i)$ lie in the outer face of $G_i$ (and of $H$) in the plane embedding of $G$, the set $\{a,b\}$ is a $2$-cut of $G$ and hence $H'$ is a non-trivial $\{a,b\}$-component of $G$. Second, no vertex of $H'$ other than $a$ and $b$ is incident to the outer face of $G_i$ or to the outer face of $G$. These two statements contradict Property~(d) for $(G,u,v)$. 
		
	\item If $H$ contains no internal vertex of $\beta_{ab}(G_i)$, as in Fig.~\ref{fi:PropertyD2}, then $u_{i-1},u_i \notin V(H)-\{a,b\}$, hence no edge in $E(G)-E(G_i)$ is incident to a vertex of $H$ different from $a$ and $b$. Since the vertices of $\beta_{u_{i-1}u_i}(G_i)$ are the only external vertices of $G$ in $V(G_i)$, it follows that $H$ is a non-trivial $\{a,b\}$-component of $G$ that contains no external vertex of $G$ other than, possibly, $a$ and $b$. This contradicts Property~(d) for $(G,u,v)$.
\end{itemize}   
	
Finally, we prove that every non-trivial $\{a,b\}$-component $H$ of $G_i$ contains an external vertex of $G_i$ different from $a$ and $b$. Namely, if that is not the case for a non-trivial $\{a,b\}$-component $H$ of $G_i$, then no edge in $E(G)-E(G_i)$ is incident to a vertex of $H$ different from $a$ and $b$. This implies that the set $\{a,b\}$ is a $2$-cut of $G$ and $H$ is a non-trivial $\{a,b\}$-component of $G$. However, no vertex of $H$ other than, possibly, $a$ and $b$ is incident to the outer face of $G$. This contradicts Property~(d) for $(G,u,v)$. 
\end{proof}

Given a strong circuit graph $(G,u,v)$ that is not a single edge, the vertex $u$ belongs to one $2$-connected component of the graph $G-\{v\}$. Indeed, if it belonged to more than one $2$-connected component of $G-\{v\}$, then $\{u\}$ would be a $1$-cut of $G-\{v\}$, hence $\{u,v\}$ would be a $2$-cut of $G$, which contradicts Property~(d) for $(G,u,v)$. We now present the following.

\begin{figure}[htb]
	\centering
	\includegraphics[scale=0.6]{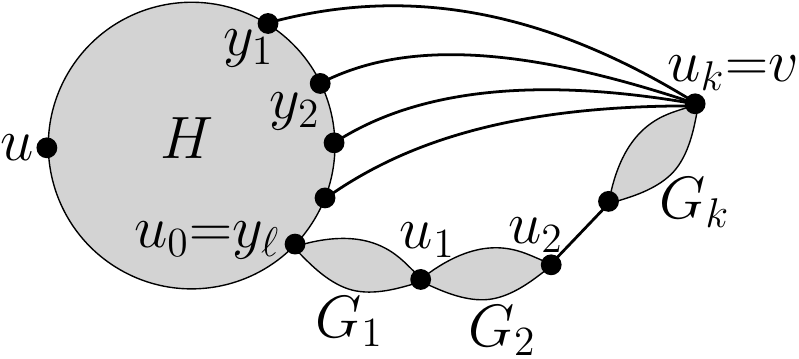}
	\caption{Structure of $(G,u,v)$ in Case~B.}
	\label{fi:structure2}
\end{figure}

\begin{lemma} \label{le:decomposition-B}
Suppose that we are in Case B  (refer to Fig.~\ref{fi:structure2}). Let $H$ be the $2$-connected component of the graph $G-\{v\}$ that contains $u$; then we have $|V(H)|\geq 3$. Further, let $H'$ denote the graph $H\cup\{v\}$. Then $G$ contains $\ell$ distinct $H'$-bridges $B_1,\dots,B_\ell$, for some $\ell\geq 2$, such that: 
\begin{itemize}
	\item\ref{le:decomposition-B}a: each $H'$-bridge $B_i$ has two attachments, namely $v$ and a vertex $y_i\in V(H)$;
	\item\ref{le:decomposition-B}b: the $H'$-bridges $B_1,\dots,B_{\ell-1}$ are trivial, while $B_\ell$ might be trivial or not; 
	\item\ref{le:decomposition-B}c: any two among $y_1,\dots,y_{\ell}$ are distinct except, possibly, for $y_{\ell-1}$ and $y_{\ell}$; also if $\ell=2$, then $y_1$ and $y_2$ are distinct;
	\item\ref{le:decomposition-B}d: $y_1$ is an internal vertex of $\tau_{uv}(G)$; further, $B_1$ is an edge that coincides with $\tau_{y_1v}(G)$;
	\item\ref{le:decomposition-B}e: $y_{\ell}$ is an internal vertex of $\beta_{uv}(G)$ and $\beta_{uy_1}(H)$;
	further, $B_\ell$ contains the path $\beta_{y_\ell v}(G)$;
	\item\ref{le:decomposition-B}f: $B_1,\dots,B_{\ell-1}$ appear in this counter-clockwise order around $v$ and lie in the outer face of $B_{\ell}$ in the plane embedding of $G$;
	\item\ref{le:decomposition-B}g: the triple $(H,u,y_1)$ is a strong circuit graph; and 
	\item\ref{le:decomposition-B}h: $B_{\ell}$ consists of a sequence of graphs $G_1,\dots,G_k$, with $k\geq 1$, such that:
	\begin{itemize}
		\item for $i=1,\dots,k-1$, the graphs $G_i$ and $G_{i+1}$ share a single vertex $u_i$; further, $G_i$ is in the outer face of $G_{i+1}$ and vice versa in the plane embedding of $G$;
		\item for $1\leq i,j \leq k$ with $j\geq i+2$, the graphs $G_i$ and $G_j$ do not share any vertex; and
		\item for $i=1,\dots,k$ with $u_0$$=$$y_{\ell}$ and $u_k$$=$$v$, the triple $(G_i,u_{i-1},u_i)$ is a strong circuit graph.
	\end{itemize}	
\end{itemize}

\end{lemma}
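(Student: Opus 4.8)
The plan is to analyze the block structure of $G-\{v\}$ and to repeatedly exploit Property~(d) of $(G,u,v)$ to control where cuts can occur. Since $G$ is $2$-connected, $G-\{v\}$ is connected, so the block $H$ containing $u$ is well defined. The enabling observation is that a vertex $c$ is a cut vertex of $G-\{v\}$ if and only if $\{c,v\}$ is a $2$-cut of $G$; since $v$ is an endpoint of $\beta_{uv}(G)$, Property~(d) of $(G,u,v)$ forces every such $c$ to be an \emph{internal} vertex of $\beta_{uv}(G)$. In particular neither $u$ nor any internal vertex of $\tau_{uv}(G)$ is a cut vertex of $G-\{v\}$. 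Consequently all edges incident to $u$ lie in $H$, and since the two boundary neighbours of $u$ are distinct we get $|V(H)|\ge 3$; moreover the entire path $\tau_{uv}(G)-v$ lies in $H$. Writing $y_1$ for the neighbour of $v$ along $\tau_{uv}(G)$, this yields $y_1\in V(H)$, $y_1$ internal to $\tau_{uv}(G)$, and that the edge $B_1=y_1v$ coincides with $\tau_{y_1v}(G)$, i.e.\ conclusion (d).

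Next I would classify the $H'$-bridges. Because $H$ is a block of $G-\{v\}$, each connected component of $G-V(H')=(G-\{v\})-V(H)$ attaches to $H$ at a single cut vertex, and by $2$-connectivity of $G$ it must also send an edge to $v$ (otherwise that single attachment would be a $1$-cut of $G$); hence every non-trivial $H'$-bridge has attachment set exactly $\{v,y_i\}$ with $y_i\in V(H)$. Every trivial $H'$-bridge is an edge $vy_i$ with $y_i\in V(H)$, since all edges between vertices of $H$ already belong to the block $H$. This gives conclusion (a). To bound the number of non-trivial bridges I would reuse Property~(d): a non-trivial bridge lies in a non-trivial $\{y_i,v\}$-component of $G$ and must therefore contain an external vertex of $G$ distinct from $y_i,v$; as the internal vertices of $\tau_{uv}(G)$ all lie in $H$, this extra external vertex is an internal vertex of $\beta_{uv}(G)$ lying outside $H$. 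Since the vertices of $\beta_{uv}(G)$ outside $H$ form a single connected sub-path (the path leaves $H$ once, at the last $\beta$-vertex $y_\ell$ contained in $H$, and never returns, again by the single-attachment property of blocks), they all lie in one bridge $B_\ell$. Thus $B_\ell$ is the unique non-trivial bridge, which establishes (b), forces $\ell\ge 2$ (the $\tau$-side edge $B_1$ and the $\beta$-side bridge $B_\ell$ are distinct), and identifies $y_\ell$ as an internal vertex of both $\beta_{uv}(G)$ and $\beta_{uy_1}(H)$, i.e.\ (e). Finally, reading the planar embedding around $v$ I would order the bridges counter-clockwise, with $B_1$ and $B_\ell$ the two extreme bridges incident to the outer face, obtaining (c) and (f); distinctness of the $y_i$'s follows from the absence of multiple edges, and only $y_{\ell-1}$ and $y_\ell$ may coincide because an edge $vy_\ell$ would necessarily be the bridge immediately preceding $B_\ell$ in the circular order.

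It then remains to show that the pieces are again strong circuit graphs, i.e.\ (g) and (h). For (g) I would check the four conditions of Definition~\ref{def:strong-circuit} for $(H,u,y_1)$: $H$ is $2$-connected with $|V(H)|\ge 3$ and inherits its embedding; $u$ and $y_1$ are distinct external vertices of $H$; and the chord $uy_1$ cannot exist when $\tau_{uv}(G)$ has more than two edges (otherwise $\{u,y_1\}$ would be a $2$-cut of $G$ with neither endpoint internal to $\beta_{uv}(G)$, contradicting Property~(d)), which yields condition (c) of the definition. For (h) I would add the edge $y_\ell v$ to $B_\ell$ to form $\hat B_\ell$, show that $(\hat B_\ell,y_\ell,v)$ is a strong circuit graph falling into Case~A (its $\tau$-path being exactly the edge $y_\ell v$), and then invoke Lemma~\ref{le:decomposition-A} to obtain the sequence $G_1,\dots,G_k$; the degenerate cases in which $B_\ell$ is a single edge or a simple cycle are handled directly with $k=1$.

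The main obstacle, I expect, is verifying the cut condition (Definition~\ref{def:strong-circuit}, item (d)) for the sub-instances $(H,u,y_1)$ and $(\hat B_\ell,y_\ell,v)$. The difficulty is that a $2$-cut $\{a,b\}$ of $H$ need not be a $2$-cut of $G$, since $v$ together with its incident bridges can reconnect the components of $H-\{a,b\}$; one therefore cannot simply inherit Property~(d) from $(G,u,v)$. The plan is to split into two cases. If $\{a,b\}$ is still a $2$-cut of $G$, I would apply Property~(d) of $(G,u,v)$ and translate the conclusion back to $H$ by relating the outer boundary of $H$ to that of $G$, in particular identifying $\beta_{uy_1}(H)$ with the relevant portion of $\beta_{uv}(G)$. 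If instead $\{a,b\}$ is healed by $v$, then every component of $H-\{a,b\}$ must contain some attachment $y_i$, and one argues directly that the external-vertex and $\beta$-internal-vertex requirements hold. Carrying out this boundary-tracking carefully, together with the analogous analysis for $\hat B_\ell$, is where the bulk of the technical work lies.
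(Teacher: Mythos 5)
Your proposal follows essentially the same route as the paper's proof: analyse the block structure of $G-\{v\}$, use Property~(d) of $(G,u,v)$ to force every cut vertex of $G-\{v\}$ (equivalently every attachment of a non-trivial $H'$-bridge) into the interior of $\beta_{uv}(G)$, identify $B_1$ as the last edge of $\tau_{uv}(G)$ and $B_\ell$ as the unique possibly non-trivial bridge, and obtain Property~\ref{le:decomposition-B}h by adding the edge $y_\ell v$ to $B_\ell$ and invoking Lemma~\ref{le:decomposition-A}. The step you defer --- verifying condition~(d) of Definition~\ref{def:strong-circuit} for $(H,u,y_1)$ and for $B_\ell$ plus the edge $y_\ell v$ --- is indeed where the paper spends most of its effort, and its case analysis (for each $2$-cut of the subgraph, exhibiting one $\{a,b\}$-component that remains separated in $G$ so that Property~(d) of $(G,u,v)$ applies) matches the two-case plan you sketch.
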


\begin{proof}
We first prove that $|V(H)|\geq 3$. Suppose, for a contradiction, that $H$ is a single edge $uy_1$. If the degree of $u$ in $G$ is one, then $\{y_1\}$ is a $1$-cut of $G$; this contradicts Property~(a) for $(G,u,v)$. Otherwise, there is a $(H\cup \{v\})$-bridge $B_i$ of $G$ whose attachment in $H$ is $u$. If $B_i$ is trivial, then it coincides with the edge $uv$; however, this contradicts the hypothesis of Case~B. Otherwise, $B_i$ is non-trivial; however, this implies that $\{u,v\}$ is a $2$-cut of $G$, as the removal of $u$ and $v$ from $G$ disconnects $y_1$ from the vertices in $V(B_i)-\{u,v\}$; since neither $u$ nor $v$ is an internal vertex of $\beta_{uv}(G)$, this contradicts Property~(d) for $(G,u,v)$. 

We now prove the properties of the lemma. First, if $G$ had no $H'$-bridge, then it would not be connected, while it is $2$-connected. Hence, $G$ contains $\ell$ distinct $H'$-bridges $B_1,\dots,B_\ell$ with $\ell\geq 1$. Each $H'$-bridge $B_i$ has at most one attachment $y_i\in V(H)$, as if $B_i$ had at least two attachments in $V(H)$ then it would contain a path (not passing through $v$) between two vertices of $H$; however, such a path would be in $H$, and not in $B_i$, given that $H$ is a maximal $2$-connected subgraph of $G-\{v\}$. It follows that $\ell\geq 2$, as if $\ell=1$ then $y_1$ would be a $1$-cut of $G$, whereas $G$ is $2$-connected. Further, for $i=1,2,\dots,\ell$, the vertex $v$ is an attachment of $B_i$, as otherwise $y_i$ would be a $1$-cut of $G$, whereas $G$ is $2$-connected. Analogously, for $i=1,2,\dots,\ell$, the vertex $y_i$ is an attachment of $B_i$, as otherwise $v$ would be a $1$-cut of $G$, whereas $G$ is $2$-connected. This proves Property~\ref{le:decomposition-B}a. 

Suppose, for a contradiction, that $y_i=u$, for some $i\in \{1,2,\dots,\ell\}$. If $B_i$ is a trivial $H'$-bridge, then it coincides with the edge $uv$; however, this contradicts the fact that we are in Case~B. If $B_i$ is a non-trivial $H'$-bridge, then $\{u,v\}$ is a $2$-cut of $G$; namely, the removal of $u$ and $v$ from $G$ disconnects the vertices in $V(H)-\{u\}$ from the vertices in $V(B_i)-\{u,v\}$ -- the latter set is non-empty given that $B_i$ is non-trivial. However, this contradicts Property~(d) for $(G,u,v)$, given that neither $u$ nor $v$ is an internal vertex of $\beta_{uv}(G)$. It follows that $y_i\neq u$, for $i=1,2,\dots,\ell$.

We now prove Properties~\ref{le:decomposition-B}b--\ref{le:decomposition-B}f. Since $v$ is incident to the outer face of $G$, it lies in the outer face of $H$. It follows that all the $H'$-bridges  $B_1,\dots,B_\ell$ lie in the outer face of $H$, except at the vertices $y_1,\dots,y_\ell$, respectively. By the planarity of $G$, there are at most two $H'$-bridges among $B_1,\dots,B_\ell$ that contain edges incident to the outer face of $G$. If there were only one $H'$-bridge $B_i$ containing edges incident to the outer face of $G$, as in Fig.~\ref{fi:CaseB-Proof1}, then $\{y_i\}$ would be a $1$-cut of $G$, whereas $G$ is $2$-connected. Hence, there are exactly two $H'$-bridges among $B_1,\dots,B_\ell$ containing edges incident to the outer face of $G$. Denote them by $B_1$ and $B_\ell$, as in Fig.~\ref{fi:CaseB-Proof2}, so that $u$, $y_1$, and $y_\ell$ appear in this clockwise order along the outer face of $H$. Then $y_1\neq y_\ell$, as otherwise $\{y_1\}$ would be a $1$-cut of $G$, whereas $G$ is $2$-connected; in particular, $y_1\neq y_2$ if $\ell=2$. It also follows that $y_1$ is an internal vertex of $\tau_{uv}(G)$, that $y_{\ell}$ is an internal vertex of $\beta_{uv}(G)$ and $\beta_{uy_1}(H)$, that $B_1$ contains $\tau_{y_1 v}(G)$, that $B_\ell$ contains $\beta_{y_\ell v}(G)$, and that every vertex $y_i\neq y_1,y_{\ell}$ is not incident to the outer face of $G$. Now consider any $H'$-bridge $B_i$ of $G$ with $y_i\neq y_\ell$. The graph $B_i$ is a $\{y_i,v\}$-component of $G$, however $\{y_i,v\}$ is a pair of vertices none of which is internal to $\beta_{uv}(G)$, hence if $B_i$ were non-trivial, then Property~(d) of $(G,u,v)$ would be violated.  It follows that the only $H'$-bridges which might be non-trivial are those whose attachment in $H$ is $y_\ell$; in particular, since $y_1\neq y_\ell$, we have that $B_1$ is trivial and coincides with the edge $y_1v=\tau_{y_1v}(G)$. If there were at least two non-trivial $H'$-bridges whose attachment in $H$ is $y_\ell$, then at least one of them (in fact all the ones different from $B_{\ell}$) would not contain any vertex incident to the outer face of $G$ other than $y_\ell$ and $v$; however, this would violate Property~(d) of $(G,u,v)$. It follows that $B_\ell$ is the only $H'$-bridge of $G$ that is possibly non-trivial. By the planarity of $G$ and the connectivity of $B_\ell-\{y_\ell,v\}$, all the trivial $H'$-bridges of $G$ lie in the outer face of $B_\ell$; denote them by $B_1,\dots,B_{\ell-1}$ in their counter-clockwise order around $v$. Since $B_1,\dots,B_{\ell-1}$ are trivial and incident to $v$, then  $y_1,\dots,y_{\ell-1}$ are all distinct. By planarity and since $y_1\neq y_{\ell}$, it follows that $B_{\ell-1}$ and $B_{\ell}$ are the only $H'$-bridges which might share their attachment in $H$. This concludes the proof of Properties~\ref{le:decomposition-B}b--\ref{le:decomposition-B}f.

\begin{figure}[htb]
	\centering
	\subfloat[]{
		\includegraphics[scale=0.6]{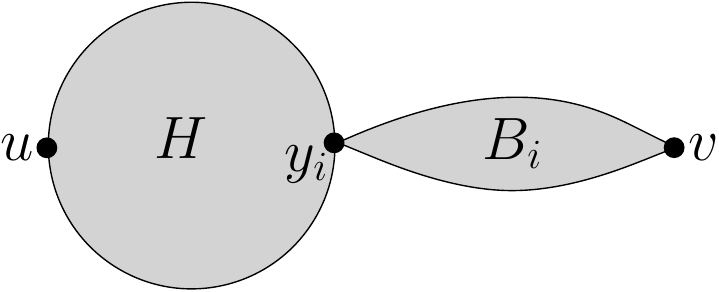}
		\label{fi:CaseB-Proof1}
	}\hfil
	\subfloat[]{
		\includegraphics[scale=0.6]{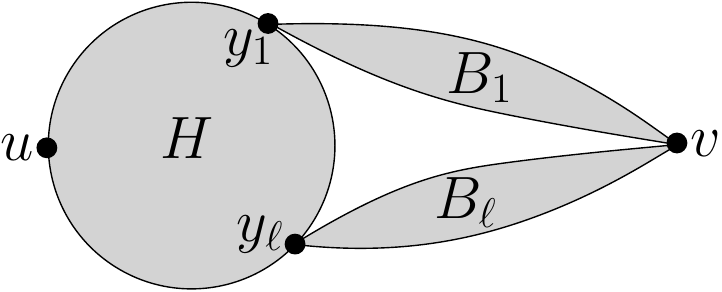}
		\label{fi:CaseB-Proof2}
	}
	\caption{(a) If there were exactly one $H'$-bridge $B_i$ containing edges incident to the outer face of $G$, then $y_i$ would be a $1$-cut of $G$. (b) The $H'$-bridges $B_1$ and $B_\ell$ of $G$.}
	\label{fi:CaseB-Proof}
\end{figure}

We now prove that the triple $(H,u,y_1)$ is a strong circuit graph. 

(a) Graph $H$ is $2$-connected by assumption and it is associated with a plane embedding, given that it is a subgraph of the plane graph $G$.

(b) The vertex $u$ is incident to the outer face of $H$ since $(G,u,v)$ satisfies Property~(b). The vertex $y_1$ is a vertex of $\tau_{uv}(G)$, as argued above, and hence it is incident to the outer face of $G$ and to the one of $H$. Finally, $u$ and $y_1$ are distinct, as otherwise $\tau_{uv}(G)$ would coincide with the edge $uv$, contradicting the fact that we are in Case~B. 

(c) Suppose, for a contradiction, that the edge $u y_1$ exists and does not coincide with $\tau_{uy_1}(H)$. Then $\{u,y_1\}$ is a $2$-cut of $G$, since the removal of $u$ and $y_1$ disconnects the internal vertices of $\tau_{uy_1}(H)$ (which exist since $\tau_{uy_1}(H)$ is not the edge $uy_1$) from $v$. However, none of $u$ and $y_1$ is an internal vertex of $\beta_{uv}(G)$; this contradicts Property~(d) for $(G,u,v)$.    

(d) The proof that $(H,u,y_1)$ satisfies Property~(d) is very similar to the proof that $(G_i,u_{i-1},u_{i})$ satisfies Property~(d) in Lemma~\ref{le:decomposition-A}, hence it is only sketched here. 

Consider any $2$-cut $\{a,b\}$ of $H$. If one of $a$ and $b$ is an internal vertex of $H$, then one non-trivial $\{a,b\}$-component $L'$ of $H$ lies inside an internal face of another non-trivial $\{a,b\}$-component $L$ of $H$. This implies that $\{a,b\}$ is a $2$-cut of $G$ and that $L'$ is an $\{a,b\}$-component of $G$ that does not contain any external vertices of $G$ other than $a$ or $b$; this contradicts Property~(d) for $(G,u,v)$. It follows that $a$ and $b$ are external vertices of $H$.   

If $a$ and $b$ are both in $\tau_{uy_1}(H)$, then assume that $u$, $a$, $b$, and $y_1$ appear in this order in $\tau_{uy_1}(H)$, where possibly $u=a$ and/or $b=y_1$. Let $L$ be the $\{a,b\}$-component of $H$ containing $\tau_{ab}(H)$ and let $L'$ be any non-trivial $\{a,b\}$-component of $H$ different from $L$. If $L$ contains an internal vertex of $\beta_{ab}(H)$, then it contains the entire cycle delimiting the outer face of $H$. It follows that $L'$ lies inside an internal face of $L$, except at $a$ and $b$, and hence that $L'$ is an $\{a,b\}$-component of $G$ that does not contain any external vertices of $G$ other than $a$ or $b$. This contradicts Property~(d) for $(G,u,v)$. If $L$ contains no internal vertex of $\beta_{ab}(H)$, then no edge in $E(G)-E(H)$ is incident to a vertex of $L$ different from $a$ and $b$. It follows that $L$ is a non-trivial $\{a,b\}$-component of $G$; further, neither $a$ nor $b$ is an internal vertex of $\beta_{uv}(G)$, given that $V(\tau_{ab}(H))\subseteq V(\tau_{uv}(G))$. This contradicts Property~(d) for $(G,u,v)$. It follows that at least one of $a$ and $b$ is an internal vertex of $\beta_{uy_1}(H)$. 

Finally, assume that a non-trivial $\{a,b\}$-component $L$ of $H$ contains no external vertex of $H$ other than $a$ and $b$. Then $\{a,b\}$ is a $2$-cut of $G$ and $L$ is a non-trivial $\{a,b\}$-component of $G$ that contains no external vertex of $G$ other than, possibly, $a$ and $b$. This contradicts Property~(d) for $(G,u,v)$.  Hence, every non-trivial $\{a,b\}$-component of $H$ contains an external vertex of $H$ other than $a$ and $b$. This proves Property~\ref{le:decomposition-B}g for $(H,u,y_1)$.  

In order to prove Property~\ref{le:decomposition-B}h, assume that $B_{\ell}$ does not coincide with the edge $y_\ell v$, as otherwise there is nothing to prove. Let $B'_{\ell}$ be the plane graph obtained by adding the edge $y_\ell v$ to $B_{\ell}$, so that $y_\ell$ immediately precedes $v$ in the clockwise order of the vertices along the outer face of $B'_{\ell}$ (both $y_{\ell}$ and $v$ are indeed incident to the outer face of $B_{\ell}$); see Fig.~\ref{fi:CaseB-Bell}. We prove that $(B'_{\ell}, y_\ell, v)$ is a strong circuit graph; then Property~\ref{le:decomposition-B}h follows by applying Lemma~\ref{le:decomposition-A} to~$(B'_{\ell}, y_\ell, v)$.

\begin{figure}[htb]
	\centering
	\includegraphics[scale=0.6]{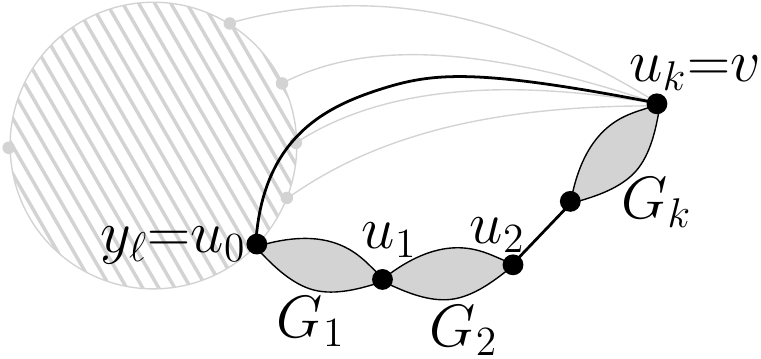}
	\caption{The graph $B'_{\ell}$.}
	\label{fi:CaseB-Bell}
\end{figure}

(a) The graph $B_{\ell}$ is associated with a plane embedding, given that it is a subgraph of the plane graph $G$. Further, $y_{\ell}$ and $v$ are both incident to the outer face of $B_{\ell}$, hence the plane graph $B'_{\ell}$ is well-defined. We prove that $B'_{\ell}$ is $2$-connected. Since $y_{\ell}$ and $v$ are adjacent in $B'_{\ell}$, they belong to the same $2$-connected component of $B'_{\ell}$. However, the only vertices of $B_{\ell}$ that are incident to edges in $E(G)-E(B_{\ell})$ are $y_{\ell}$ and $v$. It follows that any $1$-cut of $B'_{\ell}$ is also a $1$-cut of $G$. Then $B'_{\ell}$ is $2$-connected since $G$ is.

(b) The vertices $y_{\ell}$ and $v$ are distinct since the first one belongs to $H$, while the second one does not. Further, both $y_{\ell}$ and $v$ are incident to the outer face of $B_{\ell}$, as argued above, and hence are external vertices of $B'_{\ell}$.

(c) The edge $y_{\ell}v$ exists and coincides with $\tau_{y_{\ell}v}(B'_{\ell})$, by construction.    

(d) Consider any $2$-cut $\{a,b\}$ of $B'_{\ell}$ (possibly $\{a,b\}\cap \{y_\ell,v\}\neq \emptyset$). 

Since $y_\ell v\in E(B'_{\ell})$, we have that $y_\ell$ and $v$ are in the same $\{a,b\}$-component $L$ of $B'_{\ell}$. Since $y_{\ell}$ and $v$ are the only vertices of $B_{\ell}$ incident to edges in $E(G)-E(B_{\ell})$, it follows that $\{a,b\}$ is also a $2$-cut of $G$. Then $a$ and $b$ are external vertices of $B'_{\ell}$ since they are external vertices of $G$. 

Next suppose, for a contradiction, that neither $a$ nor $b$ is an internal vertex of $\beta_{y_\ell v}(B'_{\ell})$. Since $a$ and $b$ are external vertices of $B'_{\ell}$ and since $\tau_{y_\ell v}(B'_{\ell})$ coincides with the edge $y_\ell v$, it follows that $a=y_\ell$ and $b=v$ (or vice versa). However, $B_{\ell}$ is a $\{y_\ell,v\}$-component of $G$, hence the removal of $y_\ell$ and $v$ from $B_{\ell}$ (or $B'_{\ell}$) does not disconnect $B_{\ell}$ (or $B'_{\ell}$); this contradicts the assumption that $\{a,b\}$ is a $2$-cut of $B'_{\ell}$, and implies that one of $a$ and $b$ is an internal vertex of $\beta_{y_\ell v}(B'_{\ell})$. 

Finally, consider any non-trivial $\{a,b\}$-component $L$ of $B'_{\ell}$. As proved above $\{a,b\}$ is a $2$-cut of $G$ and at least one of $a$ and $b$ is not in $\{y_\ell,v\}$. If $L$ contains the edge $y_\ell v$, then it contains an external vertex of $B'_{\ell}$ other than $a$ and $b$, namely whichever vertex of $\{y_\ell,v\}$ that is not in $\{a,b\}$. Otherwise, $L$ is also an $\{a,b\}$-component of $G$ and it contains an external vertex of $B'_{\ell}$ other than $a$ and $b$ since it contains an external vertex of $G$ other than $a$ and $b$. 

This concludes the proof that $(B'_{\ell}, y_\ell, v)$ is a strong circuit graph, hence it implies Property~\ref{le:decomposition-B}h via Lemma~\ref{le:decomposition-A}. The lemma follows. 
\end{proof}

We prove that any strong circuit graph $(G,u,v)$ has a planar greedy drawing by exploiting Lemmata~\ref{le:decomposition-A} and~\ref{le:decomposition-B} in a natural way. Indeed, if we are in Case~A (in Case~B) then Lemma~\ref{le:decomposition-A} (resp.\ Lemma~\ref{le:decomposition-B}) is applied in order to construct strong circuit graphs $(G_i,u_{i-1},u_i)$ with $i=1,\dots,k$ (resp.\ strong circuit graphs $(H,u,y_1)$ and $(G_i,u_{i-1},u_i)$  with $i=1,\dots,k$) for which planar greedy drawings are inductively constructed and then combined together in order to get a planar greedy drawing of $(G,u,v)$. The base cases of the induction are the ones in which $G$ is an edge or a simple cycle. Then a planar greedy drawing of $G$ is directly constructed.


In order to be able to combine planar greedy drawings for the strong circuit graphs $(G_i,u_{i-1},u_i)$ (and $(H,u,y_1)$ if we are in Case~B) to construct a planar greedy drawing of $(G,u,v)$, we need the inductively constructed drawings to satisfy some restrictive geometric requirements, which are expressed in the following theorem, which is the core of the proof of Theorem~\ref{th:main}.

\begin{theorem} \label{th:main-aux}
Let $(G,u,v)$ be a strong circuit graph with at least three vertices and let $0<\alpha<\frac{\pi}{4}$ be an arbitrary parameter. Let $\beta_{uv}(G)=(u=b_1,b_2,\dots,b_m=v)$. There exists a straight-line drawing $\Gamma$ of $G$ in the Cartesian plane such that the following holds. For any value $\delta\geq 0$, denote by $\Gamma_{\delta}$ the straight-line drawing obtained from $\Gamma$ by moving the position of vertex $u$ by $\delta$ units to the left. Then $\Gamma_{\delta}$ satisfies the following properties (refer to Fig.~\ref{fi:theorem-statement}). 

\begin{figure}[htb]
	\centering
	\includegraphics[scale=0.6]{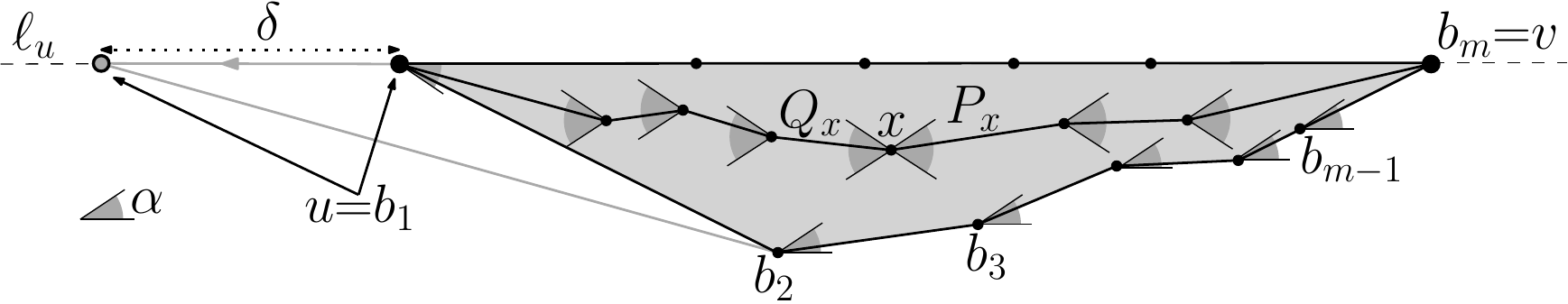}
	\caption{Illustration for the statement of Theorem~\ref{th:main-aux}.}
	\label{fi:theorem-statement}
\end{figure}

\begin{enumerate}
	\item $\Gamma_{\delta}$ is planar;
	\item $\tau_{uv}(G)$ lies entirely on a horizontal line $\ell_u$ with $u$ to the left of $v$;
	\item the edge $b_1b_2$ has slope in the interval $(-\alpha;0)$ and the edge $b_ib_{i+1}$ has slope in the interval $(0; \alpha)$, for each $i=2,3,\dots,m-1$;
	\item for every vertex $x\in V(G)$ there is a path $P_x=(x=v_1,v_2,\dots,v_p=v)$ from $x$ to $v$ in $G$ such that the edge $v_iv_{i+1}$ has slope in the interval $(-\alpha; \alpha)$ in $\Gamma_{\delta}$, for each $i=1,2,\dots,p-1$; further, if $x\neq u$, then $u\notin V(P_{x})$;
	\item for every vertex $x\in V(G)$ there is a path $Q_x=(x=w_1,w_2,\dots,w_q=u)$ from $x$ to $u$ in $G$ such that the edge $w_iw_{i+1}$ has slope in the interval $(\pi-\alpha; \pi+\alpha)$ in $\Gamma_{\delta}$, for each $i=1,2,\dots,q-1$; and 
	\item for every ordered pair of vertices $(x,y)$ in $V(G)$ there is a path $P_{xy}$ from $x$ to $y$ in $G$ such that $P_{xy}$ is distance-decreasing in $\Gamma_{\delta}$; further, if $x,y\neq u$, then $u\notin V(P_{xy})$.
\end{enumerate}
\end{theorem}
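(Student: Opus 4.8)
The plan is to prove Theorem~\ref{th:main-aux} by induction on $|V(G)|$, following exactly the decomposition provided by Lemmata~\ref{le:decomposition-A} and~\ref{le:decomposition-B}. In the base case $G$ is a simple cycle; then $\tau_{uv}(G)$ is placed on a horizontal line $\ell_u$ and $\beta_{uv}(G)$ is drawn as a shallow convex ``valley'' whose first edge has a small negative slope and whose remaining edges have small positive slopes, so that Properties~2 and~3 hold by construction. For such a drawing Properties~4--6 are immediate: the two arcs of the cycle provide the monotone paths $P_x$ and $Q_x$ and, for any ordered pair, the arc heading toward the target is distance-decreasing because the drawing lies in a thin horizontal strip and is convex. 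In the inductive step I would apply Lemma~\ref{le:decomposition-A} (Case~A) or Lemma~\ref{le:decomposition-B} (Case~B), invoke the inductive hypothesis on each smaller strong circuit graph --- the pieces $(G_i,u_{i-1},u_i)$, and additionally $(H,u,y_1)$ in Case~B --- with a carefully chosen parameter $\alpha_i\ll\alpha$, and then assemble the resulting sub-drawings.

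The crucial design decision is that each sub-drawing is inserted into the composite only by a \emph{similarity} (a rotation, a uniform scaling, and a translation), never by an anisotropic transformation. This is because being distance-decreasing is invariant under similarities but not under general affine maps, so Property~6 of a sub-drawing survives verbatim once it is placed. The ``thinness'' that lets the pieces be packed tightly is not obtained by squashing: by choosing $\alpha_i$ small, the inductive drawing of $G_i$ already lies in a horizontal strip of height proportional to $\tan\alpha_i$ times its width (its top boundary is horizontal and its bottom boundary has slopes within $\alpha_i$ of horizontal). I would then place the pieces left-to-right along a polygonal valley joining $u$ to $v$ beneath the horizontal top boundary, rotating piece $i$ by a small angle $\theta_i$ so that its slopes, which lie in windows of width $2\alpha_i$, map into the global windows $(-\alpha;0)$ and $(0;\alpha)$ demanded by Property~3. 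Lemma~\ref{le:same-side} is the main geometric tool here: it certifies that each placed piece is confined to a narrow wedge, which simultaneously yields planarity of the composite (pieces occupy disjoint wedges and meet only at the shared cut vertices) and the directional control needed below.

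With the geometry fixed, Properties~2--5 follow by bookkeeping of slopes across the concatenation, together with the constraint that the monotone paths never revisit $u$ when the source differs from $u$ --- a condition inherited from the inductive hypothesis and preserved because $u$ is the unique leftmost attachment of the composite. Properties~4 and~5 are the true engine of the argument: a path all of whose edges have slope in $(-\alpha;\alpha)$ advances monotonically rightward within a thin strip and is therefore distance-decreasing toward the extreme right vertex $v$, and symmetrically toward $u$; hence Property~6 holds immediately whenever the target is $u$ or $v$. The role of the parameter $\delta$, which slides $u$ leftward, is to make the induction compositional: when $G_1$ (Case~A) or $H$ (Case~B) is glued in, the global vertex $u$ is exactly the distinguished left vertex of that piece, and the inductive guarantee that the piece tolerates an arbitrary leftward displacement of its own $u$ is precisely what creates the slack to attach the neighbouring piece while keeping every slope window valid; I would verify that moving $u$ left only pushes the slope of $b_1b_2$ closer to $0$ from below and never disturbs the other pieces.

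The step I expect to be the main obstacle is Property~6 for a pair $(x,y)$ with $x,y\notin\{u,v\}$ lying in \emph{different} pieces (or in different $H'$-bridges in Case~B). Here one cannot simply quote a sub-drawing's distance-decreasing path; instead I would route from $x$ toward the side on which $y$ lies using the monotone paths $P_x$ or $Q_x$ to reach the appropriate shared cut vertex, traverse the intervening pieces along their boundaries, and finally enter $y$'s piece and descend to $y$, and then argue that the entire concatenation is distance-decreasing. The difficulty is that distance-decrease toward $y$ must be maintained as the path crosses from one wedge into the next, where the local ``forward'' direction changes; controlling this requires the wedge-confinement of Lemma~\ref{le:same-side} (so that $y$ always lies in the forward cone of the current vertex) and the freedom to take every $\alpha_i$ as small as needed relative to $\alpha$ and to the separations produced by the scalings. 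Finally, a uniform application of Lemma~\ref{le:perturbation-preserves-greedy} would absorb the finitely many boundary cases where a computed distance difference is only non-strict, guaranteeing strict monotonicity throughout.
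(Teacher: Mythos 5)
Your overall architecture matches the paper's: induction on the Chen--Yu decomposition, with the cycle as base case, Lemma~\ref{le:same-side} for wedge confinement, and a perpendicular-bisector argument for cross-piece distance-decreasing paths. But your central design rule --- that every sub-drawing is inserted \emph{only} by a similarity --- breaks down exactly at the piece containing the distinguished vertex $u$, and this is a genuine gap. In Case~A the new edge $uv$ must lie on the horizontal line $\ell_u$ with every other vertex strictly below it (Properties~1--2), yet the inductively drawn $G_1$ has its own path $\tau_{u_0u_1}(G_1)$ sitting \emph{on} that same horizontal line through $u$; placed by a similarity fixing $u$ it overlaps the edge $uv$. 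Rotating $G_1$ clockwise to push it below $\ell_u$ fails twice over: the edges $b_ib_{i+1}$ ($i\geq 2$) of $\beta_{uv}(G_1)$, inductively in $(0;\alpha_1)$, acquire slopes that can drop below $0$, violating Property~3; and, more fatally, the inductive $\delta$-guarantee for $G_1$ licenses only a \emph{horizontal} leftward displacement of $u$ in $G_1$'s own frame, so after a rotation the global displacement of $u$ by $\delta$ is no longer covered and you lose the planarity and slope guarantees for $\Gamma_\delta$. The paper's resolution is precisely a non-similarity deformation: it lowers only the vertices of $\tau_{u_0u_1}(G_1)$ by a tiny $\varepsilon$ (leaving $\beta_{uv}(G_1)$ and $u$ fixed) and invokes Lemma~\ref{le:perturbation-preserves-greedy} with $\varepsilon<\varepsilon^*_{\Gamma_1}$ to certify that planarity and all distance-decreasing paths survive. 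Your plan has no mechanism for this step.

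A second, related gap is your account of the $\delta$-freedom. Its essential use in the paper is not for $H$ in Case~B or for passing the global $\delta$ to $G_1$ in Case~A (that part is routine); it is for the first component $G_1$ of the non-trivial bridge $B_\ell$ in Case~B, whose left endpoint $u_0=y_\ell$ sits deep inside $\Gamma_H$ while all of its remaining vertices must be scaled into a disk $D_\rho$ of radius $\rho\leq d_{y_1v}/3$ around $v$. No similarity realizes such a placement, since it fixes the ratio of $d(u_0u_1)$ to the diameter of the rest of the piece; the paper instead moves $u_0$ by $d^*$ units to the left \emph{before} rotating and translating, which is exactly the operation the inductive statement's $\delta$-clause authorizes. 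Finally, your closing suggestion that Lemma~\ref{le:perturbation-preserves-greedy} can ``absorb'' cases where a distance difference is only non-strict is unsound: that lemma preserves already-strict inequalities under small perturbations; it cannot convert a tie into a strict decrease. Every strict inequality in Property~6 must be established directly (as the paper does via the separating-line arguments), not recovered by perturbation afterwards.
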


Before proceeding with the proof of Theorem~\ref{th:main-aux}, we comment on its statement. First, let us set $\delta=0$ and argue about $\Gamma_0=\Gamma$. Properties~1 and~6 are those that one would expect, as they state that $\Gamma$ is planar and greedy, respectively. Properties~2 and~3 state that all the edges incident to the outer face of $\Gamma$ are ``close'' to horizontal; indeed, the edges of $\tau_{uv}(G)$ are horizontal, the edge $b_1b_2$ has a slightly negative slope, and all the other edges of $\beta_{uv}(G)$ have a slightly positive slope. Since $\Gamma$ is planar, this implies that $\Gamma$ is contained in a wedge delimited by two half-lines with slopes $0$ and $-\alpha$ starting at $u$. Properties~4 and~5 argue about the existence of certain paths from any vertex to $u$ and $v$; these two vertices play an important role in the structural decomposition we employ, since distinct subgraphs are joined on those vertices, and the paths incident to them are inductively combined together in order to construct distance-decreasing paths. Finally, all these properties still hold true if $u$ is moved by an arbitrary non-negative amount $\delta$ to the left. This is an important feature we exploit in one of our inductive cases.


We now present an inductive proof of Theorem~\ref{th:main-aux}. In the base cases $G$ is a single edge (we call this the {\em Trivial Case}) or a simple cycle (we call this the {\em Cycle Case}). 

We start with the {\bf Trivial Case}, in which $G$ is a single edge. Although Theorem~\ref{th:main-aux} assumes that $|V(G)|\geq 3$, for its proof we need to inductively draw certain subgraphs of $G$ which might be single edges. Whenever we need to draw a strong circuit graph $(G,u,v)$ such that $G$ is a single edge $uv$, we draw it as a horizontal straight-line segment with positive length, with $u$ to the left of $v$. We remark that, since Theorem~\ref{th:main-aux} assumes that $|V(G)|\geq 3$, we do not need the constructed drawing to satisfy Properties~1--6. 

We next deal with the {\bf Cycle Case}, in which $G$ is a simple cycle with at least $3$ vertices. Refer to Fig.~\ref{fi:basecase}. By Property~(d) of $(G,u,v)$, the set $\{u,v\}$ is not a $2$-cut of $G$, hence $u$ and $v$ appear consecutively along the cycle $G$. By Property~(c) of $(G,u,v)$, the edge $uv$ coincides with the path $\tau_{uv}(G)$. Drawing $\Gamma$ is constructed as follows. Place $b_1$, $b_2$, and $b_m$ at the vertices of an isosceles triangle $\Delta b_1b_2b_m$ in which the edge $b_1b_m$ lies on a horizontal line $\ell_u$, with $b_1$ to the left of $b_m$, and in which the angles $\measuredangle b_2b_1b_m$ and $\measuredangle b_1b_mb_2$ are $\frac{\alpha}{2}$, with $b_2$ below $\ell_u$. Place the vertices $b_3,\dots,b_{m-1}$ on the straight-line segment $\overline{b_2b_m}$ in this order from $b_2$ to $b_m$. This completes the construction of $\Gamma$. We have the following.

\begin{figure}[htb]
	\centering
	\includegraphics[scale=0.6]{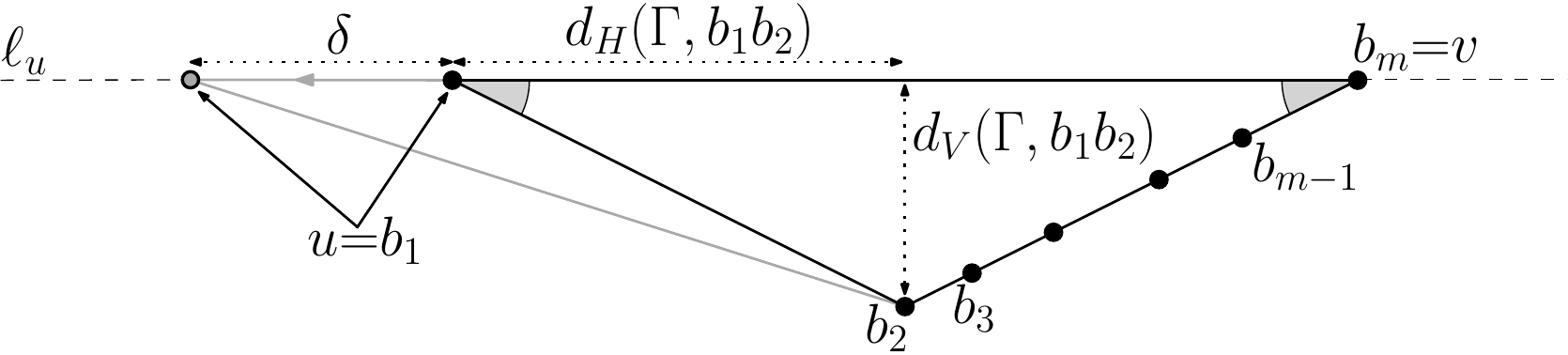}
	\caption{The Cycle Case of the algorithm for the proof of Theorem~\ref{th:main-aux}. The gray angles are $\frac{\alpha}{2}$.}
	\label{fi:basecase}
\end{figure}

\begin{lemma} \label{le:base-case}
For any $\delta\geq 0$, the drawing $\Gamma_{\delta}$ constructed in the Cycle Case satisfies Properties~1--6 of Theorem~\ref{th:main-aux}. 	
\end{lemma}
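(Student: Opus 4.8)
The plan is to fix coordinates for $\Gamma$ and then verify the six properties one at a time, treating Property~6 as the heart of the matter. Put $u=b_1$ at the origin and $v=b_m$ at $(L,0)$ for some $L>0$, so that by construction $b_2=(L/2,-h)$ with $h=\frac{L}{2}\tan\frac{\alpha}{2}$, while $b_3,\dots,b_{m-1}$ lie, in this order, on the segment $\overline{b_2b_m}$. The single fact that will drive everything is that $\alpha<\frac{\pi}{4}$ forces $\tan\frac{\alpha}{2}<1$ and hence $h<\frac{L}{2}$. Passing from $\Gamma$ to $\Gamma_{\delta}$ replaces $u$ by $(-\delta,0)$ and leaves every other vertex fixed, so the only edges whose geometry changes are the two edges incident to $u$, namely $b_1b_2$ and $uv$; in particular $u$ remains the leftmost vertex, on $\ell_u$, for every $\delta\ge 0$.

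First I would dispatch Properties~1--5 by inspection. For Property~1, observe that $\Gamma_{\delta}$ is the triangle $\Delta ub_2v$ with the side $b_2v$ subdivided by $b_3,\dots,b_{m-1}$; since $u=(-\delta,0)$ is off the line through $b_2$ and $v$ (which meets $\ell_u$ only at $v$), this triangle is non-degenerate, hence convex, hence a simple polygon, hence planar, for all $\delta\ge 0$. Property~2 is immediate since $\tau_{uv}(G)=uv$ lies on $\ell_u$ with $u$ left of $v$. For Property~3 I would check that each edge $b_ib_{i+1}$ with $i\ge 2$ has slope exactly $\frac{\alpha}{2}\in(0;\alpha)$ (it is fixed under the perturbation), while $b_1b_2$ has slope $-\frac{\alpha}{2}$ when $\delta=0$ and, as $u$ slides left, a slope increasing monotonically toward $0$ and so remaining in $(-\alpha;0)$. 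Properties~4 and~5 are then witnessed by the obvious monotone paths: for $x=b_i$ with $i\ge 2$ take $P_x=(b_i,b_{i+1},\dots,v)$ (all slopes $\frac{\alpha}{2}$, and $u\notin V(P_x)$), and take $P_u=(u,v)$; for $Q_x$ take the reversed path $(b_i,b_{i-1},\dots,b_1=u)$, whose edge slopes are the slopes of Property~3 shifted by $\pi$ and therefore lie in $(\pi-\alpha;\pi+\alpha)$.

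The core is Property~6, which I would prove by splitting the ordered pairs $(x,y)$ (ignoring the trivial case $x=y$) into three families. (i) If $x,y\in\{b_2,\dots,b_m\}$, the required path is the sub-path of $\beta_{uv}(G)$ joining them: since $b_2,\dots,b_m$ are collinear and occur in this order along $\overline{b_2b_m}$, this sub-path traverses consecutive collinear points monotonically toward its endpoint $y$, so the distance to $y$ strictly decreases, and it avoids $u$ as the ``if $x,y\ne u$'' clause demands. (ii) For a pair $(u,b_j)$ I would use the incident edge $ub_2$ or $uv$ when $j\in\{2,m\}$, and otherwise route $P_{u b_j}=(u,v,b_{m-1},\dots,b_j)$; its tail is distance-decreasing by family~(i), and the first step is valid because $d(\Gamma_{\delta},ub_j)>d(\Gamma_{\delta},vb_j)$. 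After cancelling the equal vertical contributions, this inequality reduces to $x_{b_j}>\frac{L-\delta}{2}$, which holds since $x_{b_j}\ge\frac{L}{2}$ for every $j\ge 2$, with strict inequality once $j\ge 3$. (iii) For a pair $(b_i,u)$ I would use $P_{b_i u}=(b_i,b_{i-1},\dots,b_2,u)$, the decisive point being that the distance to $u$ strictly decreases along $b_i,\dots,b_2$ (the final edge $b_2u$ then needs no constraint).

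The main obstacle, and the only place where $\alpha<\frac{\pi}{4}$ is genuinely used, is that last claim in family~(iii): that $b_2$ is the point of $\overline{b_2b_m}$ closest to $u$, uniformly in $\delta$. Since the squared distance to $u$ is a convex quadratic along the line carrying the segment, it suffices to show that the foot of the perpendicular from $u$ onto that line occurs strictly before $b_2$; computing the projection parameter yields a numerator of the form $h^2-(\tfrac{L}{2})^2-\tfrac{L}{2}\delta$, which is negative precisely because $h<\frac{L}{2}$, and stays negative as $\delta$ grows. Granting this, the distance to $u$ is increasing along the segment from $b_2$ to $b_m$, hence strictly decreasing along $b_i,\dots,b_2$, which makes $P_{b_i u}$ distance-decreasing. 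Together with the trivial pairs, the three families exhaust all ordered pairs and establish Property~6, completing the verification for every $\delta\ge 0$.
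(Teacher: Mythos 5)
Your proof is correct and follows essentially the same route as the paper's: the same coordinates, the same witnesses for Properties~1--5, and the same three-way split of the ordered pairs for Property~6, with the collinearity of $b_2,\dots,b_m$ handling pairs inside $\{b_2,\dots,b_m\}$. You deviate in two local places, both sound. For a pair $(u,b_j)$ with $3\le j\le m$ the paper routes $(b_1,b_2,\dots,b_j)$ and justifies the first step synthetically, by noting that $\measuredangle b_jb_2b_1\ge\pi-\alpha$ is the largest angle of $\Delta b_1b_2b_j$; you route the other way around the cycle, $(u,v,b_{m-1},\dots,b_j)$, and verify $d(\Gamma_\delta,ub_j)>d(\Gamma_\delta,vb_j)$ by the coordinate inequality $x_{b_j}>\frac{L-\delta}{2}$, which indeed holds strictly for $j\ge 3$ and, unlike the paper's argument for this family, needs no hypothesis on $\alpha$. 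For pairs $(b_i,u)$ you take the same path as the paper but replace its largest-angle argument with the observation that the foot of the perpendicular from $u$ onto the line through $b_2$ and $b_m$ has parameter with numerator $h^2-(L/2)^2-(L/2)\delta<0$, so the distance to $u$ is strictly increasing along $\overline{b_2b_m}$; this is exactly where $h<L/2$, i.e.\ $\tan(\alpha/2)<1$, enters your version. Both computations check out, so there is no gap; the differences are a matter of analytic versus synthetic justification and of which arc of the cycle is used for the $(u,b_j)$ pairs.
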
 

\begin{proof}
Properties~1 and~2 are trivially satisfied. 

Concerning Property~3, by construction the edge $b_ib_{i+1}$ has slope $\frac{\alpha}{2}\in (0;\alpha)$ for $i=2,\dots,m-1$. Further, the edge $b_1b_2$ has slope $-\arctan\frac{d_V(\Gamma,b_1b_2)}{\delta+d_H(\Gamma,b_1b_2)}$, which is smaller than $0$, given that $d_V(\Gamma,b_1b_2), d_H(\Gamma,b_1b_2)>0$ and $\delta\geq 0$, and larger than or equal to $-\arctan\frac{d_V(\Gamma,b_1b_2)}{d_H(\Gamma,b_1b_2)}=-\frac{\alpha}{2}$, hence it is in $(-\alpha;0)$. This implies that $\Gamma_{\delta}$ satisfies Property~3.

Concerning Property~4, let $x=b_i$ with $i<m$. Then a path $P_x$ satisfying the requirements can be defined as $P_x=(b_1,b_m)$ if $i=1$ or as $P_x=(b_i,b_{i+1},\dots,b_m)$ if $i>1$. In the former case, the only edge of $P_x$ has slope $0\in (-\alpha;\alpha)$; in the latter case, all the edges of $P_x$ have slope $\frac{\alpha}{2}\in (-\alpha;\alpha)$ and $P_x$ does not pass through $u$. Hence $\Gamma_{\delta}$ satisfies Property~4. 

Concerning Property~5, let $x=b_i$ with $i>1$. Then a path $Q_x$ satisfying the requirements can be defined as $Q_x=(b_i,b_{i-1},\dots,b_1)$. Any edge $b_jb_{j-1}$ with $j\geq 3$ has slope $\pi+\frac{\alpha}{2}\in (\pi-\alpha;\pi+\alpha)$ and edge $b_2b_1$ has slope $\pi-\arctan\frac{d_V(\Gamma,b_1b_2)}{\delta+d_H(\Gamma,b_1b_2)}$, which is smaller than $\pi$, given that $d_V(\Gamma,b_1b_2), d_H(\Gamma,b_1b_2)>0$ and $\delta\geq 0$, and larger than or equal to $\pi-\arctan\frac{d_V(\Gamma,b_1b_2)}{d_H(\Gamma,b_1b_2)}=\pi-\frac{\alpha}{2}$, hence it is in $(\pi-\alpha;\pi+\alpha)$. This implies that $\Gamma_{\delta}$ satisfies Property~5. 

Finally we deal with Property~6. Let $x=b_i$ and $y=b_j$, for some $1\leq i,j \leq m$. 

\begin{itemize}
	\item If $2\leq i,j \leq m$ and $i<j$ (and $j<i$), then the path $P_{xy}=(b_i,b_{i+1},\dots,b_j)$ (resp.\ $P_{xy}=(b_i,b_{i-1},\dots,b_j)$) is distance-decreasing in $\Gamma_{\delta}$. Namely, it suffices to observe that the vertex $b_{h+1}$ (resp.\ $b_{h-1}$) lies on the open straight-line segment $\overline{b_{h}b_j}$ for $h=i,i+1,\dots,j-2$ (resp.\ for $h=i,i-1,\dots,j+2$). Further, $P_{xy}$ does not pass through $u$. 
	\item If $i=1$ and $j\geq 3$, then the path $P_{xy}=(b_1,b_2,\dots,b_j)$ is distance-decreasing in $\Gamma_{\delta}$. Namely, since the angles $\measuredangle b_2b_1b_m$ and $\measuredangle b_1b_mb_2$ are $\frac{\alpha}{2}$ by construction, the angle $\measuredangle b_mb_2b_1 = \measuredangle b_jb_2b_1$ is equal to $\pi-\alpha$ in $\Gamma$, and to at least $\pi-\alpha$ in $\Gamma_{\delta}$. Since by assumption $\alpha<\frac{\pi}{4}$, it follows that $\measuredangle b_jb_2b_1$ is the largest angle of the triangle $\Delta b_1 b_2 b_j$ in $\Gamma_{\delta}$, hence $d(\Gamma_{\delta},b_1b_j)>d(\Gamma_{\delta},b_2b_j)$. That $(b_2,b_3,\dots,b_j)$ is distance-decreasing can be proved as in the previous point. 
	\item If $j=1$ and $i\geq 3$, then the path $P_{xy}=(b_i,b_{i-1},\dots,b_1)$ is distance-decreasing in $\Gamma_{\delta}$. In order to prove that, it suffices to argue that $d(\Gamma_{\delta},b_1b_h)>d(\Gamma_{\delta},b_1b_{h-1})$ for any $h=3,4,\dots, i$. Since$\measuredangle b_hb_2b_1$ is at least $\pi-\alpha$ in $\Gamma_{\delta}$ (as from the previous point), the angle $\measuredangle b_hb_{h-1}b_1$ is also at least $\pi-\alpha$. Since by assumption $\alpha<\frac{\pi}{4}$, it follows that $\measuredangle b_hb_{h-1}b_1$ is the largest angle of the triangle $\Delta b_1 b_{h-1} b_h$ in $\Gamma_{\delta}$, hence $d(\Gamma_{\delta},b_1b_h)>d(\Gamma_{\delta},b_1b_{h-1})$. 
\end{itemize}

This concludes the proof of the lemma.
\end{proof}


We now discuss the inductive cases. In Case~A the path $\tau_{uv}(G)$ coincides with the edge $uv$, while in Case~B it does not. We discuss {\bf Case~A} first. Let $G'=G-uv$, where $G'$ consists of a sequence of graphs $G_1,\dots,G_k$, with $k\geq 1$, satisfying the properties described in Lemma~\ref{le:decomposition-A}. Our construction is different if $k=1$ and $k\geq 2$. 

Suppose first that {$\bf k=1$}; by Lemma~\ref{le:decomposition-A} the triple $(G'=G_1,u,v)$ is a strong circuit graph (and $G_1$ is not a single edge, as otherwise we would be in the Trivial Case). Apply induction in order to construct a straight-line drawing $\Gamma'$ of $G'$ with $\frac{\alpha}{2}$ as a parameter. Let $\tau_{uv}(G')=(u=a_1,a_2,\dots,a_t=v)$. By Property~2 the path $\tau_{uv}(G')$ lies on a horizontal line $\ell_u$ in $\Gamma'$ with $u$ to the left of $v$. Let $Y>0$ be the minimum distance in $\Gamma'$ of any vertex strictly below $\ell_u$ from $\ell_u$. Let 

$$\varepsilon=\frac{1}{2}\min\{\varepsilon^*_{\Gamma'},Y,\tan(\alpha) \cdot d(\Gamma',a_1a_2),\tan(\alpha) \cdot d(\Gamma',a_2a_t)\}.$$ 

We construct a straight-line drawing $\Gamma$ of $G$ from $\Gamma'$ as follows; refer to Fig.~\ref{fi:inductive1A-statement}. Decrease the $y$-coordinate of the vertex $a_2$ by $\varepsilon$. Further, decrease the $y$-coordinate of the vertex $a_i$, with $i=3,4,\dots,t-1$, so that it ends up on the straight-line segment $\overline{a_2a_t}$. Draw $uv$ as a straight-line segment. We have the following. 

\begin{figure}[htb]
	\centering
		\includegraphics[scale=0.6]{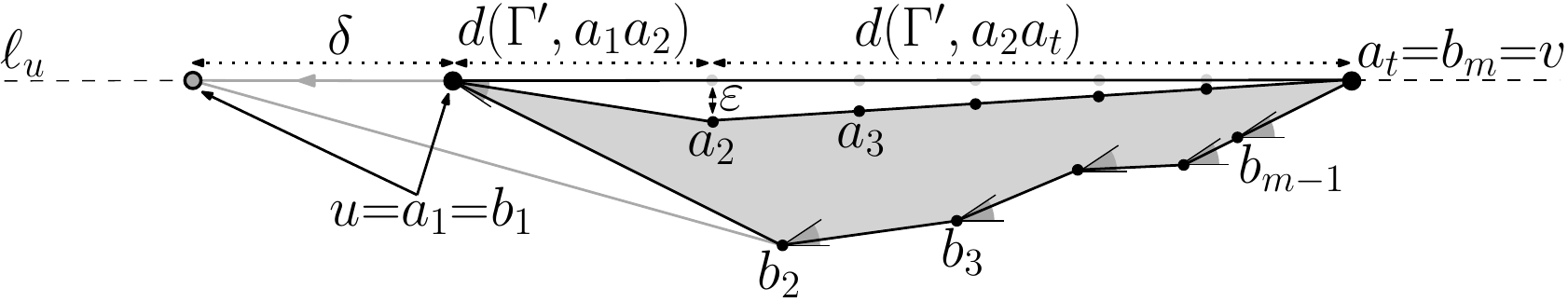}
	\caption{The straight-line drawing $\Gamma$ of $G$  in Case~A if $k=1$.}
	\label{fi:inductive1A-statement}
\end{figure}

\begin{lemma} \label{le:case1A}
For any $\delta\geq 0$, the drawing $\Gamma_{\delta}$ constructed in Case~A if $k=1$ satisfies Properties~1--6 of Theorem~\ref{th:main-aux}.	
\end{lemma}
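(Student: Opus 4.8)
The plan is to verify Properties~1--6 for $\Gamma_\delta$ by exploiting the inductive guarantees on $\Gamma'$ (which satisfies Properties~1--6 with parameter $\frac{\alpha}{2}$) together with the smallness of $\varepsilon$, which is chosen below $\varepsilon^*_{\Gamma'}$ precisely so that Lemma~\ref{le:perturbation-preserves-greedy} applies. First I would settle the \emph{planarity} (Property~1): the only modifications to $\Gamma'$ are lowering $a_2,\dots,a_{t-1}$ by at most $\varepsilon$ and adding the edge $uv$. Since $\varepsilon\le\frac{1}{2}\varepsilon^*_{\Gamma'}$ and $\varepsilon\le\frac{1}{2}Y$, each moved vertex stays within $\varepsilon^*_{\Gamma'}$ of its old position and no vertex is pushed below the other vertices that were strictly below $\ell_u$; Lemma~\ref{le:perturbation-preserves-planarity} then guarantees the perturbed $G'$ is planar. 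The new edge $uv$ lies on $\ell_u$, and since after the perturbation every vertex other than $u,v$ and the interior of $\tau_{uv}$ is strictly below $\ell_u$ (by the choice $\varepsilon<Y$), the segment $uv$ does not cross anything; moving $u$ left by $\delta$ keeps it on $\ell_u$ and only lengthens this segment, preserving planarity. Property~2 is immediate because $\tau_{uv}(G)$ is exactly the edge $uv$, drawn on $\ell_u$ with $u$ to the left.

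Next I would handle the boundary slopes (Property~3). Here $\beta_{uv}(G)$ is the path $(u=a_1,a_2,\dots,a_t=v)$, now with $a_2,\dots,a_{t-1}$ lowered. The edge $a_1a_2=b_1b_2$ drops by $\varepsilon$, so its slope becomes $-\arctan\frac{\varepsilon}{\delta+d_H(\Gamma',a_1a_2)}$; the bound $\varepsilon\le\frac{1}{2}\tan(\alpha)\,d(\Gamma',a_1a_2)$ is exactly what forces this angle into $(-\alpha;0)$ for every $\delta\ge 0$ (the left shift only makes the angle shallower). For the edges $a_ia_{i+1}$ with $2\le i\le t-1$, after lowering they all lie on the segment $\overline{a_2a_t}$, whose slope is controlled by how far $a_2$ sits below $\ell_u$: again $\varepsilon\le\frac{1}{2}\tan(\alpha)\,d(\Gamma',a_2a_t)$ bounds this slope inside $(0;\alpha)$. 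I expect this slope-accounting to be the most delicate part, since one must check the arctangent inequalities uniformly in $\delta$ and confirm the lowered vertices genuinely land on $\overline{a_2a_t}$ with a \emph{positive} slope.

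For Properties~4 and~5 I would reuse the inductive paths $P_x,Q_x$ from $\Gamma'$. These paths had all slopes in $(-\frac{\alpha}{2};\frac{\alpha}{2})$ (resp.\ in $(\pi-\frac{\alpha}{2};\pi+\frac{\alpha}{2})$) in $\Gamma'$; lowering a handful of vertices by at most $\varepsilon$ perturbs each slope by an arbitrarily small amount, so the slopes stay within the wider target intervals $(-\alpha;\alpha)$ and $(\pi-\alpha;\pi+\alpha)$. The condition ``if $x\neq u$ then $u\notin V(P_x)$'' is inherited directly from the inductive statement, and moving $u$ left by $\delta$ does not affect any edge of a path that avoids $u$; for the paths that do start at $u$, only the first edge changes and I would verify its slope separately, much as in the Cycle Case. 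Finally, for Property~6 (greediness), the key observation is that the perturbation from $\Gamma'$ to the lowered drawing is smaller than $\varepsilon^*_{\Gamma'}$, so by Lemma~\ref{le:perturbation-preserves-greedy} every distance-decreasing path of $\Gamma'$ remains distance-decreasing after lowering; the inductive distance-decreasing paths $P_{xy}$ (with the property $u\notin V(P_{xy})$ when $x,y\neq u$) are thus preserved. It then remains to produce distance-decreasing paths for ordered pairs involving $u$ or $v$ via the edge $uv$ and the Property~4/5 paths, and to argue these survive the left shift of $u$; since moving $u$ left only increases its distance to every other vertex, I expect the paths \emph{into} $u$ to need the most care, and I would route them using $Q_x$ so that the final approach to $u$ is along near-horizontal edges whose distance-decreasing character is unaffected by enlarging the $x$-gap.
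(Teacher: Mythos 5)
There is a genuine gap in your treatment of Property~1. The only effect of the left-shift of $u$ by $\delta$ that you address is on the edge $uv$; but $u$ has other incident edges $uu'$ with $u'\neq v$, and when $u$ moves left by an \emph{arbitrary} amount $\delta$ these edges sweep across the drawing and could a priori cross edges of $G'$. Neither Lemma~\ref{le:perturbation-preserves-planarity} (which only covers displacements up to $\varepsilon^*_{\Gamma'}$) nor the inductive planarity of $\Gamma'_\delta$ (which concerns the drawing \emph{before} the lowering of $a_2,\dots,a_{t-1}$, and whose own perturbation threshold $\varepsilon^*_{\Gamma'_\delta}$ is not controlled uniformly in $\delta$) disposes of this. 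The paper devotes the bulk of its Property-1 argument to exactly this point: it shows that a crossing between $uu'$ and $ww'$ in $\Gamma_\delta$ would force an end-vertex $w$ into a region $R$ bounded by $\ell_u$, a horizontal line through $u'$, and the edge $uu'$ as drawn in $\Gamma$, and then derives a contradiction in four cases according to whether $u'$ and $w$ lie on $\tau_{uv}(G')$; the last case crucially invokes the inductive hypothesis for $\Gamma'_{\delta_q}$ at a specific value $\delta_q$ read off from the geometry. Your proposal is silent on all of this.

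Two further problems. First, you have misidentified $\beta_{uv}(G)$: in Case~A the edge $uv$ \emph{is} $\tau_{uv}(G)$, so $\beta_{uv}(G)=\beta_{uv}(G')$ is the other boundary path (the vertices $b_1,\dots,b_m$), no internal vertex of which moves when $\Gamma'$ is transformed into $\Gamma$; Property~3 is therefore inherited verbatim from $\Gamma'_\delta$. The slope computations you carry out concern $\tau_{uv}(G')=(a_1,\dots,a_t)$, which is not the path Property~3 speaks about (though those computations are exactly what is needed when the Property~4/5 paths are rerouted along $\tau_{uv}(G')$). Second, for Properties~4 and~5 the claim that lowering vertices by at most $\varepsilon$ ``perturbs each slope by an arbitrarily small amount'' is not a valid argument: $\varepsilon$ is a fixed positive quantity and the slope of a short edge is not stable under it. What actually makes the slopes survive is that every vertex strictly below $\ell_u$ lies at depth at least $Y$ while the lowered vertices descend by at most $\varepsilon\leq\frac{Y}{2}$, so the sign of the vertical gap across each affected edge is preserved; the paper combines this observation with an explicit rerouting of the tail of $P'_x$ (resp.\ $Q'_x$) along $\tau_{uv}(G')$ starting from the first vertex of $\tau_{uv}(G')$ that the path meets. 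Your sketch of Property~6 for pairs involving $u$ gestures at the right ideas (the paper indeed proves $d(\Gamma_\delta,uy)\geq d(\Gamma,uy)$ via a triangle argument, and routes paths into $u$ along $Q_x$, which is distance-decreasing because it is a $\pi$-path), but as written the proposal cannot stand without repairing the planarity argument and the slope bookkeeping.
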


\begin{proof}
Concerning Property~1, note first that $\Gamma$ is planar, given that $\varepsilon<\varepsilon^*_{\Gamma'}$. Since $\Gamma_{\delta}$ and $\Gamma$ coincide, except for the position of the vertex $u$, we only need to prove that no edge incident to $u$ crosses any other edge in $\Gamma_{\delta}$. Then consider any two edges $uu'$ and $ww'$ with $u',w,w'\in V(G)$ (possibly $w=u$ or $w=u'$) and suppose, for a contradiction, that they cross or overlap in $\Gamma_{\delta}$. 

\begin{figure}[htb]
	\centering
	\includegraphics[scale=0.7]{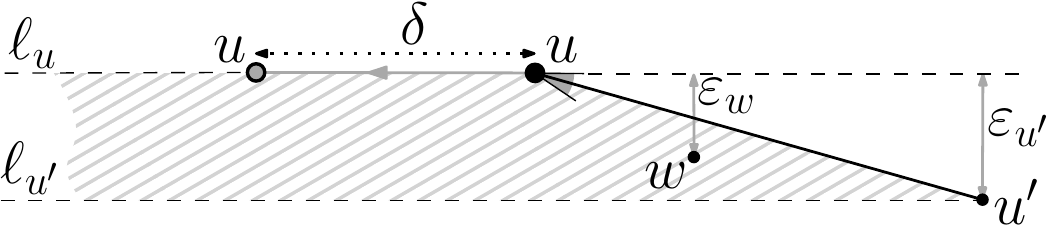}
	\caption{Illustration for the proof that the edges $uu'$ and $ww'$ do not cross in $\Gamma_{\delta}$.}
	\label{fi:inductive1A-crossings}
\end{figure}

Refer to Fig.~\ref{fi:inductive1A-crossings}. If $u'=v$, then $uu'$ and $ww'$ do not cross in $\Gamma_{\delta}$, given that no vertex other than $u$ and $v$ lies on or above $\ell_u$ in $\Gamma_{\delta}$. We can hence assume that $u'\neq v$ and that $y(u')<y(u)$. By Properties~1--3, we have that $\Gamma'$ lies in the closed wedge that is delimited by the half-lines starting at $u$ with slopes $0$ and $-\frac{\pi}{4}$. It follows that $x(u')>x(u)$ in $\Gamma'$, $\Gamma$, and $\Gamma_{\delta}$ (given that every vertex has the same $x$-coordinate in $\Gamma'$, $\Gamma$, and $\Gamma_{\delta}$, except for $u$, whose $x$-coordinate might be smaller in $\Gamma_{\delta}$ than in $\Gamma'$ and $\Gamma$). Consider the unbounded region $R$ of the plane that is delimited by $\ell_u$ from above, by the horizontal line $\ell_{u'}$ through $u'$ from below, and by the representation of the edge $uu'$ in $\Gamma$ from the right. For any value $\delta>0$, we have that $uu'$ lies in the interior of $R$ (except at points $u$ and $u'$) in $\Gamma_{\delta}$, hence if $uu'$ and $ww'$ cross in $\Gamma_{\delta}$ then at least one end-vertex of $ww'$, say $w$, lies in the interior of $R$, given that $y(w),y(w')\leq y(u)$ and that $ww'$ does not cross $uu'$ in $\Gamma$. This implies that $x(u)<x(w)<x(u')$ in $\Gamma'$, $\Gamma$, and $\Gamma_{\delta}$. We now distinguish four cases, based on whether $u'$ and/or $w$ belong to $V(\tau_{uv}(G'))$.

\begin{itemize}
\item If $u', w \in V(\tau_{uv}(G'))$, then by Property~2 we have that $u'$ and $w$ lie on $\ell_u$ in $\Gamma'$. However, since $x(u)<x(w)<x(u')$, it follows that the edge $uu'$ overlaps the vertex $w$ in $\Gamma'$, a contradiction to Property~1 of $\Gamma'$. 
\item If $u'\in V(\tau_{uv}(G'))$ and $w\notin V(\tau_{uv}(G'))$, then when transforming $\Gamma'$ into $\Gamma$ the $y$-coordinate of $u'$ has been decreased by a value $\varepsilon_{u'}\leq \varepsilon$ which is larger than the distance $\varepsilon_w\geq Y$ between $w$ and $\ell_u$. This contradicts $\varepsilon\leq \frac{Y}{2}<Y$.
\item If $u'\notin V(\tau_{uv}(G'))$ and $w\in V(\tau_{uv}(G'))$, then when transforming $\Gamma'$ into $\Gamma$ the $y$-coordinate of $w$ has been decreased by a value $\varepsilon_w\leq \varepsilon$. The point $p$ on the edge $uu'$ with $x$-coordinate equal to $x(w)$ has $y$-coordinate larger than $y(w)$, hence the distance from $p$ to $\ell_u$ is a value $\varepsilon_p<\varepsilon_w$. This implies that the drawing obtained from $\Gamma'$ by decreasing the $y$-coordinate of $w$ by $\varepsilon_p$, while every other vertex stays put, is not planar, given that the edge $uu'$ overlaps the vertex $w$. However, since $\varepsilon_p<\varepsilon_w$, this contradicts $\varepsilon_w\leq \varepsilon<\varepsilon^*_{\Gamma'}$. 
\item Finally, if $u',w\notin V(\tau_{uv}(G'))$, then $u$, $u'$ and $w$ have the same positions in $\Gamma'$ and $\Gamma$. Consider the line through $u'$ and $w$; let $q$ be its intersection point with $\ell_u$ and let $\delta_q$ be the Euclidean distance between $q$ and $u$ in $\Gamma'$. Then the drawing $\Gamma'_{\delta_q}$ is not planar as the edge $uu'$ overlaps the vertex $w$. This contradicts Property~1 of $\Gamma'$. 
\end{itemize}

Concerning Property~2, note that $u$ and $v$ lie on the same horizontal line $\ell_u$ (with $u$ to the left of $v$) in $\Gamma$ since they do in $\Gamma'$ and since they have not been moved when transforming $\Gamma'$ into $\Gamma$. Since $\tau_{uv}(G)$ coincides with the edge $uv$, it follows that $\Gamma_\delta$ satisfies Property~2.

Property~3 is satisfied by $\Gamma_\delta$ since it is satisfied by $\Gamma'_\delta$ and since no vertex of $\beta_{uv}(G')=\beta_{uv}(G)$ moves when transforming $\Gamma'$ into $\Gamma$ (indeed, $\tau_{uv}(G')$ and $\beta_{uv}(G')$ do not share any vertex other than $u$ and $v$, given that $G'$ is $2$-connected).

We now discuss Property~4. Let $x\in V(G)$. If $x=u$, let $P_x=(u,v)$; then the only edge of $P_x$ has slope $0\in (-\alpha;\alpha)$. If $x\neq u$, then let $P'_x=(x=v_1,v_2,\dots,v_p=v)$ be a path in $G'$ such that the slope of $v_iv_{i+1}$ in $\Gamma'$ is in the interval $(-\frac{\alpha}{2};\frac{\alpha}{2})$, for $i=1,\dots,p-1$, and such that $u\notin V(P'_x)$. This path exists since $\Gamma'$ satisfies Property~4, by induction. We distinguish two cases.

\begin{itemize}
\item If no vertex of $P'_x-\{v\}$ belongs to $\tau_{uv}(G')$, then $P_x=P'_x$ satisfies the required properties. Indeed, no vertex other than those internal to $\tau_{uv}(G')$ moves when transforming $\Gamma'$ into $\Gamma$ and no vertex other than $u$ moves when transforming $\Gamma$ into $\Gamma_{\delta}$; thus, $P_x$ has the same representation (and in particular each edge of $P_x$ has the same slope) in $\Gamma'$ and $\Gamma_{\delta}$.

\begin{figure}[htb]
	\centering
	\includegraphics[scale=0.6]{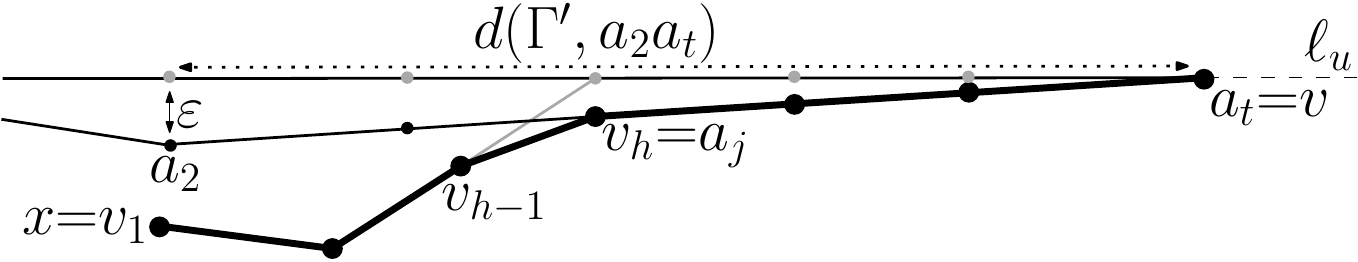}
	\caption{Illustration for the proof that the slope in $\Gamma_{\delta}$ of every edge in the path $P_x= (x=v_1,v_2,\dots,v_h=a_j,a_{j+1},\dots,a_t=v)$ is in $(-\alpha;\alpha)$. The path $P_x$ is thick.}
	\label{fi:inductive1A-pathToV}
\end{figure}

\item Otherwise, a vertex of $P'_x-\{v\}$ belongs to $\tau_{uv}(G')$; let $h$ be the smallest index such that $v_h=a_j$, for some $a_j\in V(\tau_{uv}(G'))-\{v\}$ and define $P_x= (x=v_1,v_2,\dots,v_h=a_j,a_{j+1},\dots,a_t=v)$. Refer to Fig.~\ref{fi:inductive1A-pathToV}. Note that $u\notin V(P_x)$, given that $u\notin V(P'_x)$. Hence, it suffices to argue about the slopes of the edges of $P_x$ in $\Gamma$ (rather than in $\Gamma_\delta$). For $i=1,\dots,h-2$, the slope of the edge $v_iv_{i+1}$ is in $(-\alpha;\alpha)$ in $\Gamma$ since it is in $(-\frac{\alpha}{2};\frac{\alpha}{2})\subset (-\alpha;\alpha)$ in $\Gamma'$ and since neither $v_i$ nor $v_{i+1}$ moves when transforming $\Gamma'$ into $\Gamma$. Further, for $i=j,\dots,t-1$, the slope of the edge $a_ia_{i+1}$ in $\Gamma$ is $\arctan\left(\frac{\varepsilon}{d(\Gamma',a_2a_t)}\right)$, which is in the interval $(0;\alpha)\subset(-\alpha;\alpha)$, given that $\varepsilon,d(\Gamma',a_2a_t)>0$ and that $\varepsilon<\tan(\alpha) \cdot d(\Gamma',a_2a_t)$. Finally, let $s'$ and $s$ be the slopes of the edge $v_{h-1}v_h$ in $\Gamma'$ and $\Gamma$, respectively. Since $v_{h-1}v_h\in E(P'_x)$, we have $s'\in (-\frac{\alpha}{2};\frac{\alpha}{2})$; since $\alpha\leq \frac{\pi}{4}$, this implies that $x(v_{h-1})<x(v_{h})$ in $\Gamma'$ and $\Gamma$ (note that the $x$-coordinates of the vertices do not change when transforming $\Gamma'$ into $\Gamma$).  Further, by Properties~1--4 of $\Gamma'$, we have that $v_{h-1}$ lies below $\ell_u$, which contains $v_h$; hence, $y(v_{h-1})<y(v_{h})$ in $\Gamma'$. Since the vertex $v_{h}$ moves down (while $v_{h-1}$ stays put) when transforming $\Gamma'$ into $\Gamma$, and since $\varepsilon\leq \frac{Y}{2}<d_V(\Gamma',v_{h-1}v_h)$, it follows that $0<s<s'$; hence $s\in (0;\frac{\alpha}{2})\subset(-\alpha;\alpha)$.
\end{itemize}

We next deal with Property~5. Let $x\in V(G)$ and let $Q'_x=(x=w_1,w_2,\dots,w_q=u)$ be a path in $\Gamma'_\delta$ such that the slope of $w_iw_{i+1}$ is in $(\pi-\frac{\alpha}{2};\pi+\frac{\alpha}{2})$, for $i=1,2,\dots,q-1$. This path exists since $\Gamma'_\delta$ satisfies Property~5, by induction. Similarly to the proof that $\Gamma_\delta$ satisfies Property~4, we distinguish two cases. If no vertex of $Q'_x-\{u\}$ belongs to $\tau_{uv}(G')$, then let $Q_x=Q'_x$ and observe that $Q_x$ satisfies the required properties in $\Gamma_\delta$ since $Q'_x$ does in $\Gamma'_\delta$. Otherwise, let $h$ be the smallest index such that $w_h=a_j$, for some $a_j\in V(\tau_{uv}(G'))-\{u\}$ and define $Q_x=(x=w_1,w_2,\dots,w_h=a_j,a_{j-1},\dots,a_1=u)$. Refer to Fig.~\ref{fi:inductive1A-pathToU}. For $i=1,\dots,h-2$, the slope of the edge $w_iw_{i+1}$ is in $(\pi-\alpha;\pi+\alpha)$ in $\Gamma_{\delta}$ since it is in $(\pi-\frac{\alpha}{2};\pi+\frac{\alpha}{2})\subset (\pi-\alpha;\pi+\alpha)$ in $\Gamma'_{\delta}$. Further, similarly to the proof that the edge $v_{h-1}v_h$ has slope in $(-\alpha;\alpha)$ and hence $\Gamma_{\delta}$ satisfies Property~4, we have that the edge $w_{h-1}w_h$ has slope $s\in (\pi-\alpha;\pi+\alpha)$ in $\Gamma_{\delta}$. Indeed, the slope $s'$ of the edge $w_{h-1}w_h$ in $\Gamma'$ is in the interval $(\pi-\frac{\alpha}{2};\pi+\frac{\alpha}{2})\subset (\pi-\alpha;\pi+\alpha)$ in $\Gamma'_{\delta}$, given that $w_{h-1}w_h$ belongs to $Q'_x$. Further, since $x(w_{h-1})>x(w_h)$ and $y(w_{h-1})<y(w_h)$, we have that $s'\in (\pi-\alpha;\pi)$. Since the vertex $w_h$ moves down while $w_{h-1}$ stays put when transforming $\Gamma'$ into $\Gamma$, and since $\varepsilon\leq \frac{Y}{2}<d_V(\Gamma',w_{h-1}w_h)$, we have that $s'<s<\pi$, hence $s\in (\pi-\alpha;\pi)\subset (\pi-\alpha;\pi+\alpha)$. For $i=j,j-1\dots,3$, the edge $a_ia_{i-1}$ has slope $\pi+\arctan\left(\frac{\varepsilon}{d(\Gamma',a_2a_t)}\right)$, which is larger than $\pi$, given that $\varepsilon,d(\Gamma',a_2a_t)>0$, and smaller than $\pi+\alpha$, given that $\varepsilon<\tan(\alpha) \cdot d(\Gamma',a_2a_t)$. Finally, the edge $a_2a_1$ has slope $\pi - \arctan\left(\frac{\varepsilon}{\delta+d(\Gamma',a_1a_2)}\right)$, which is smaller than $\pi$, given that $\varepsilon,d(\Gamma',a_1a_2)>0$ and $\delta\geq 0$, and larger than $\pi-\alpha$, given that $\frac{\varepsilon}{\delta+d(\Gamma',a_1a_2)}\leq \frac{\varepsilon}{d(\Gamma',a_1a_2)}$ and that $\varepsilon<\tan(\alpha) \cdot d(\Gamma',a_1a_2)$.

\begin{figure}[htb]
	\centering
	\includegraphics[scale=0.6]{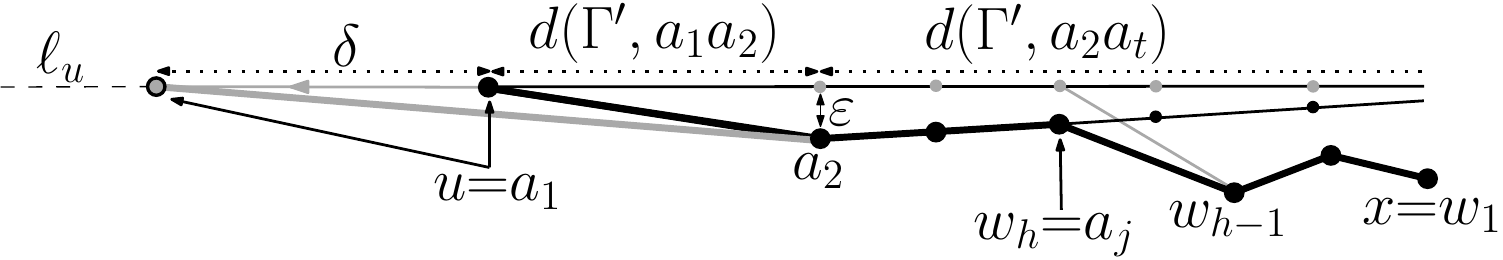}
	\caption{Illustration for the proof that the slope in $\Gamma_{\delta}$ of every edge in the path $Q_x= (x=w_1,w_2,\dots,w_h=a_j,a_{j-1},\dots,a_1=u)$ is in $(\pi-\alpha;\pi+\alpha)$. The path $Q_x$ is thick.}
	\label{fi:inductive1A-pathToU}
\end{figure}

Finally, we deal with Property~6. Consider any two vertices $x,y\in V(G)$. 

\begin{itemize}
\item First, assume that $x,y\neq u$. By induction, there exists a path $P_{xy}$ from $x$ to $y$ in $G'$ that is distance-decreasing in $\Gamma'$ with $u\notin V(P_{xy})$. By Lemma~\ref{le:perturbation-preserves-greedy} and since, for every vertex $z\in V(G)$, the Euclidean distance between the positions of $z$ in $\Gamma'$ and $\Gamma$ is at most $\varepsilon<\varepsilon^*_{\Gamma'}$, we have that $P_{xy}$ is also distance-decreasing in $\Gamma$. Further, since all the vertices other than $u$ have the same position in $\Gamma$ and $\Gamma_{\delta}$, it follows that $P_{xy}$ is a distance-decreasing path from $x$ to $y$ not passing through $u$ in $\Gamma_{\delta}$. 
\item Second, suppose that $y=u$. Consider the path $Q_x$ in $G$ from Property~5, whose every edge has slope in $(\pi-\alpha;\pi+\alpha)$ in $\Gamma_{\delta}$. Since $\alpha\leq \frac{\pi}{4}$, it follows that $Q_x$ is a $\pi$-path (according to the definition in~\cite{dfg-icgps-15}) or is $\pi$-monotone (according to the definition in~\cite{bbcklv-gt-16}), where for some angle $\beta$ a path $(q_1,q_2,\dots,q_r)$ is a {\em $\beta$-path} or equivalently is {\em $\beta$-monotone} if every edge $q_iq_{i+1}$ has slope in the interval $(\beta-\frac{\pi}{4};\beta+\frac{\pi}{4})$. In~\cite[Lemma 3]{dfg-icgps-15} it is proved that a $\beta$-path is distance-decreasing (in fact, it satisfies a much stronger property, namely it is {\em increasing-chord}); hence, $Q_x$ is distance-decreasing in $\Gamma_{\delta}$. 
\item Finally, suppose that $x=u$ and consider a path $P_{xy}$ from $x$ to $y$ in $G'$ that is distance-decreasing in $\Gamma'$. We prove that $P_{xy}$ is distance-decreasing in $\Gamma_{\delta}$, as well. Let $xx'$ be the edge of $P_{xy}$ incident to $x$. Differently from the case in which $x,y\neq u$, we cannot directly apply Lemma~\ref{le:perturbation-preserves-greedy}, given that it is not guaranteed that $\varepsilon<\varepsilon^*_{\Gamma'_{\delta}}$. However, since for every vertex $z\in V(P_{xy})$ the Euclidean distance between the positions of $z$ in $\Gamma'$ and $\Gamma$ is at most $\varepsilon<\varepsilon^*_{\Gamma'}$, by Lemma~\ref{le:perturbation-preserves-greedy} we have that $P_{xy}$ is distance-decreasing in $\Gamma$. Further, the path obtained from $P_{xy}$ by removing the vertex $x=u$ and the edge $xx'$ has the same representation in $\Gamma_{\delta}$ and $\Gamma$, given that it does not contain $u$, hence it is distance-decreasing in $\Gamma_{\delta}$. Thus, it only remains to show that $d(\Gamma_{\delta},xy)>d(\Gamma_{\delta},x'y)$. First, since $x=u$, we have that $x',y\neq u$. Hence, $d(\Gamma_{\delta},x'y)=d(\Gamma,x'y)$. Second, denote by $u_{\Gamma}$ and $u_{\Gamma_{\delta}}$ the positions of $u$ in $\Gamma$ and $\Gamma_{\delta}$, respectively. By Properties~1--3, the entire drawing $\Gamma$, and in particular vertex $y$, lies in the closed wedge that is delimited by the half-lines starting at $u_\Gamma$ and with slopes $0$ and $-\alpha$. Then the angle incident to $u_{\Gamma}$ in the triangle $\Delta yu_{\Gamma} u_{\Gamma_{\delta}} $ is at least $\pi-\alpha>\frac{\pi}{2}$, hence the straight-line segment between $u_{\Gamma_{\delta}}$ and $y$ is the longest side of that triangle. It follows that $d(\Gamma_{\delta},xy)\geq d(\Gamma,xy)$. Thus, $d(\Gamma_{\delta},xy)\geq d(\Gamma,xy)>d(\Gamma,x'y)=d(\Gamma_{\delta},x'y)$, where the second inequality holds true since $P_{xy}$ is distance-decreasing in $\Gamma$. Hence, $P_{xy}$ is distance-decreasing in $\Gamma_{\delta}$. 
\end{itemize}
This concludes the proof of the lemma.
\end{proof}


We now discuss the case in which {$\bf k\geq 2$}. Refer to Fig.~\ref{fi:inductive1B-statement}. By Lemma~\ref{le:decomposition-A}, for $i=1,\dots,k$, the triple $(G_i,u_{i-1},u_i)$ is a strong circuit graph, where $u_0=u$, $u_k=v$, and $u_i$ is the only vertex shared by $G_i$ and $G_{i+1}$, for $i=1,\dots,k-1$.

\begin{figure}[htb]
	\centering
	\includegraphics[scale=0.6]{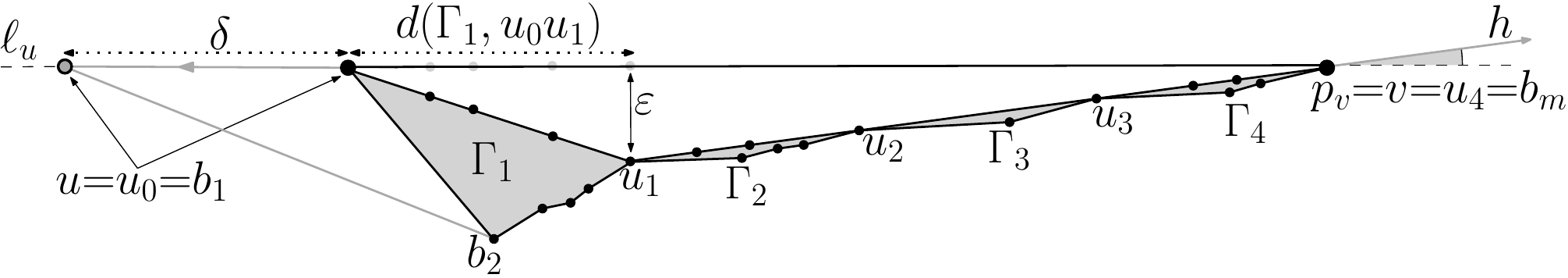}
	\caption{The straight-line drawing $\Gamma$ of $G$ in Case~A if $k\geq 2$. In this example $k=4$. The gray angle in the drawing is $\frac{\alpha}{2}$.}
	\label{fi:inductive1B-statement}
\end{figure}


If $G_1$ is a single edge, then apply induction in order to construct a straight-line drawing $\Gamma_1$ of $G_1$ and define $\varepsilon=\frac{1}{2}\min\{\varepsilon^*_{\Gamma_1},\tan(\alpha) \cdot d(\Gamma_1,u_0u_1)\}$.

If $G_1$ is not a single edge, then apply induction in order to construct a straight-line drawing $\Gamma_1$ of $G_1$ with $\frac{\alpha}{2}$ as a parameter. By Property~2 of $\Gamma_1$, the path $\tau_{u_0u_1}(G_1)$ lies on a horizontal line $\ell_u$. Let $Y>0$ be the minimum distance in $\Gamma_1$ of any vertex strictly below $\ell_u$ from $\ell_u$. Let $\varepsilon=\frac{1}{2}\min\{\varepsilon^*_{\Gamma_1},Y,\tan(\alpha) \cdot d(\Gamma_1,u_0u_1)\}$.

In both cases, decrease the $y$-coordinate of $u_1$ by $\varepsilon$. Further, decrease the $y$-coordinate of every internal vertex of the path $\tau_{u_0u_1}(G_1)$, if any, so that it ends up on the straight-line segment $\overline{u_0u_1}$.

Now consider a half-line $h$ with slope $s=\frac{\alpha}{2}$ starting at $u_1$. Denote by $p_v$ the point at which $h$ intersects the horizontal line $\ell_u$ through $u$. For $i=2,\dots,k$, apply induction in order to construct a straight-line drawing $\Gamma_i$ of $G_i$ with $\frac{\alpha}{3}$ as a parameter (if $G_i$ is a single edge, then the parameter does not matter). Uniformly scale the drawings $\Gamma_2,\dots,\Gamma_k$ so that the Euclidean distance between $u_{i-1}$ and $u_i$ is equal to $\frac{d(\Gamma_1,u_1 p_v)}{k-1}$. For $i=2,\dots,k$, rotate the scaled drawing $\Gamma_i$ around $u_{i-1}$ counter-clockwise by $s$ radians. Translate the scaled and rotated drawings $\Gamma_2,\dots,\Gamma_k$ so that the representations of $u_i$ in $\Gamma_i$ and $\Gamma_{i+1}$ coincide, for $i=1,\dots,k-1$. Finally, draw the edge $uv$ as a straight-line segment. This completes the construction of a drawing $\Gamma$ of $G$. We have the following. 

\begin{lemma} \label{le:case1B}
	For any $\delta\geq 0$, the drawing $\Gamma_{\delta}$ constructed in Case~A if $k\geq 2$ satisfies Properties~1--6 of Theorem~\ref{th:main-aux}.	
\end{lemma}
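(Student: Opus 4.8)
The plan is to verify Properties 1–6 for $\Gamma_\delta$ by leveraging the inductively constructed drawings $\Gamma_1,\dots,\Gamma_k$ and the specific geometric placement described before the lemma. The key geometric structure is that $G_1$ sits in a horizontal strip near $\ell_u$ (with $u_1$ lowered by $\varepsilon$), while $G_2,\dots,G_k$ are chained along the half-line $h$ of slope $s=\frac{\alpha}{2}$ emanating from $u_1$, with successive copies scaled so that $u_1,u_2,\dots,u_k=v$ are collinear on $h$, reaching the point $p_v$ on $\ell_u$. The edge $uv=\tau_{uv}(G)$ is horizontal along $\ell_u$.

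For \emph{Property~1} (planarity), I would argue that the three pieces occupy essentially disjoint regions: $G_1$ lives below $\ell_u$ in a thin wedge near $u$, the chain $G_2,\dots,G_k$ lives in a wedge between $\ell_u$ and slope $s$ based at $u_1$, and consecutive $G_i,G_{i+1}$ meet only at $u_i$ (by Lemma~\ref{le:decomposition-A}a). The rotation by $s$ and the scaling are chosen precisely so that each rotated $\Gamma_i$ — which by its own Property~2–3 lies in a wedge of opening $\frac{\alpha}{3}$ below its local $\ell$ — stays within the global strip and does not collide with $G_1$ or with non-adjacent pieces (Lemma~\ref{le:decomposition-A}b). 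The perturbation of $u$ leftward is handled exactly as in the $k=1$ case (Lemma~\ref{le:case1A}): since the whole drawing lies in the wedge of slopes $0$ and $-\alpha$ based at $u$, moving $u$ further left cannot create crossings on edges incident to $u$. I would apply Lemma~\ref{le:same-side} to certify that each $\Gamma_i$ lies on the correct side of the lines bounding its allotted region.

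For \emph{Properties~2 and~3} I use that $\tau_{uv}(G)=uv$ is drawn horizontally, and that $\beta_{uv}(G)$ decomposes as $\beta_{uu_1}(G_1)\cup\beta_{u_1u_2}(G_2)\cup\dots\cup\beta_{u_{k-1}v}(G_k)$; the edge $b_1b_2$ gets a slightly negative slope from the $u_1$-lowering (bounded by $\tan\alpha\cdot d(\Gamma_1,u_0u_1)>\varepsilon$ as in Lemma~\ref{le:case1A}), while every other outer edge either lies in $\Gamma_1$ near $\ell_u$ or sits along the chain of slope $\frac{\alpha}{2}$ rotated from each $\Gamma_i$'s own small-slope edges — all staying in $(0;\alpha)$. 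For \emph{Properties~4 and~5}, given $x\in V(G_i)$ I would concatenate the inductive path within $G_i$ (to $u_{i-1}$ or $u_i$) with the chain of $\tau$-edges through the cut vertices $u_{i-1},\dots$ to reach $v$ (for $P_x$) or $u$ (for $Q_x$); the slope bounds add up correctly because each local path has slopes in $(-\frac{\alpha}{3};\frac{\alpha}{3})$ or $(-\frac{\alpha}{2};\frac{\alpha}{2})$ and the rotation by $\frac{\alpha}{2}$ keeps the cumulative slope within $(-\alpha;\alpha)$ resp.\ $(\pi-\alpha;\pi+\alpha)$, and the avoidance of $u$ for $x\neq u$ follows from the inductive avoidance clauses.

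The \emph{main obstacle} is \emph{Property~6}, the distance-decreasing paths, for pairs $(x,y)$ lying in \emph{different} blocks $G_i,G_j$. Within a single block it follows by induction plus Lemma~\ref{le:perturbation-preserves-greedy}, and the cases $y=u$ or $y=v$ reduce to the slope-confined paths of Properties~4,5 being $\beta$-monotone and hence increasing-chord (as in Lemma~\ref{le:case1A}, via~\cite{dfg-icgps-15}). The hard part is when $x\in G_i$, $y\in G_j$ with $i<j$: I would route $P_{xy}$ as the $Q$-type or $P$-type path from $x$ to the shared cut vertex $u_i$ (or $u_{i-1}$), then along the chain toward $y$'s block, then the inductive path inside $G_j$ to $y$. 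Establishing that this concatenation is globally distance-decreasing is delicate because the Euclidean distance to $y$ must strictly drop at \emph{every} vertex, across block boundaries; I expect to need the overall thinness of the strip — forcing the chain direction $h$ to make the path essentially monotone toward $y$ — together with careful angle estimates at the junction cut vertices, exploiting that $\alpha<\frac{\pi}{4}$ so that the largest angle of each relevant triangle faces the long side, exactly as in the base-case and $k=1$ arguments.
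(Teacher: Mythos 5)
Your construction matches the paper's: the same placement of $G_1$ lowered by $\varepsilon$, the chain $G_2,\dots,G_k$ scaled and rotated onto the half-line of slope $\tfrac{\alpha}{2}$, the same concatenated paths for Properties~4--6, and the same reduction of the $u$-perturbation and of the intra-$G_1$ cases to the arguments of Lemma~\ref{le:case1A}. Properties~2--5 are adequately sketched (one small correction: for blocks $G_2,\dots,G_k$ you do not need Lemma~\ref{le:perturbation-preserves-greedy} at all, since their drawings in $\Gamma_\delta$ are congruent to the inductive $\Gamma_i$ up to similarity transformations, which preserve distance-decreasing paths exactly; the perturbation lemma is only needed for $G_1$).

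The genuine gap is in the verification of Property~6 for $x\in V(G_i)$, $y\in V(G_j)$ with $i\neq j$, which you correctly flag as the crux but then resolve only by an ``expectation'' of thinness plus a largest-angle-of-a-triangle argument. That triangle argument is the one used in the Cycle Case for consecutive vertices on a segment; it does not address the actual difficulty here, namely that for an edge $z_hz_{h+1}$ lying deep inside the block $G_i$ (on the path $P^i_x$ toward $u_i$) or on the chain, the target $y$ is an arbitrary vertex of a downstream block, and one must certify that $y$ lies strictly on the $z_{h+1}$-side of the \emph{perpendicular bisector} of $z_hz_{h+1}$. The paper's mechanism is precise: the edge $z_hz_{h+1}$ has slope in a narrow cone around $\tfrac{\alpha}{2}$ (this is exactly why the inductive parameter is shrunk to $\tfrac{\alpha}{3}$ for $G_2,\dots,G_k$), so the line $\ell_h$ through the cut vertex $u_i$ orthogonal to $z_hz_{h+1}$ has slope near $\tfrac{\pi}{2}$; Lemma~\ref{le:same-side}, fed with the Property-3 slope bounds on $\beta_{u_iv}(G)$ and the slope of the chain, puts all of $G_{i+1},\dots,G_k$ (hence $y$) strictly to the right of $\ell_h$ and all of $G_1,\dots,G_i$ (hence the midpoint of $z_hz_{h+1}$) to its left, so the bisector is parallel to $\ell_h$ and even further from $y$'s side. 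Without this separation step (and its mirror image for the case $j<i$, where the $Q$-type paths and the left half-planes are used), the claim that the concatenation is distance-decreasing at every vertex across block boundaries is unproven; ``thinness of the strip'' alone does not rule out that some vertex of $G_j$ is closer to $z_h$ than to $z_{h+1}$. You invoke Lemma~\ref{le:same-side} only for planarity, where the paper instead uses the simpler $x$-monotonicity of the two boundary paths to obtain vertical separating lines through the cut vertices; the place where Lemma~\ref{le:same-side} is truly indispensable is precisely the Property-6 argument you left open.
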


\begin{proof}
Throughout the proof, we denote by $\Gamma_{1,\delta}$ the drawing obtained from $\Gamma_1$ by moving the position of the vertex $u_0=u$ by $\delta$ units to the left (where $\Gamma_1$ is understood as the drawing of $G_1$ in which the vertices of $\tau_{u_0u_1}(G_1)$ all lie on $\ell_u$).
	
We first prove Property~2. Because of the uniform scaling which has been applied to $\Gamma_2,\dots,\Gamma_k$, we have that $d(\Gamma,u_{i-1}u_i)=\frac{d(\Gamma,u_1 p_v)}{k-1}$ for $i=2,\dots,k$. Since the vertices $u_1,\dots,u_k$ all lie on $h$, we have that $d_V(\Gamma,u_{i-1}u_i)=\frac{d_V(\Gamma,u_1 p_v)}{k-1}$ for $i=2,\dots,k$. Hence, the $y$-coordinate of $v=u_k$ is equal to $y(u_1)+d_V(\Gamma,u_1 p_v)=y(u)$, which implies that $u$ and $v$ lie on $\ell_u$ in $\Gamma$ and $\Gamma_\delta$. Further, $u$ is to the left of $v$ in $\Gamma$, given that $\Gamma_1$ satisfies Property~2 and that $0<s<\frac{\pi}{2}$. Since $\tau_{uv}(G)$ coincides with the edge $uv$, it follows that $\Gamma_\delta$ satisfies Property~2.

We next prove that $\Gamma_\delta$ satisfies Property~3. Observe that $\beta_{uv}(G)=\beta_{u_0u_1}(G_1)\cup \beta_{u_1u_2}(G_2)\cup \dots \beta_{u_{k-1}u_k}(G_k)$. We first argue about the slope of the edge $b_1b_2$. 
	
\begin{itemize}
	\item If $G_1$ is a single edge $u_0u_1$, then we have $u_0=b_1$ and $u_1=b_2$. Then the slope of the edge $b_1b_2$ in $\Gamma_{\delta}$ is $-\arctan\left(\frac{\varepsilon}{\delta+d(\Gamma_1,u_0u_1)}\right)$, which is smaller than $0$, given that $\varepsilon,d(\Gamma_1,u_0u_1)>0$ and $\delta\geq 0$, and larger than $-\alpha$, given that $\delta\geq 0$ and that $\varepsilon<\tan{\alpha}\cdot d(\Gamma_1,u_0u_1)$, hence it is in $(-\alpha;0)$. 
	\item If $G_1$ has more than two vertices, then $\beta_{u_0u_1}(G_1)$ is not a single edge $u_0u_1$, by Property~(c) of $(G_1,u_0,u_1)$; hence, $u_0=b_1$ and $b_2\neq u_1$. Since $\Gamma_{1,\delta}$ satisfies Property~3 and since $u_1$ is the only vertex of $\beta_{u_0u_1}(G_1)$ whose positions in $\Gamma_{1,\delta}$ and $\Gamma_{\delta}$ do not coincide, it follows that the slope of $b_1b_2$ in $\Gamma_{\delta}$ is in the interval $(-\alpha;0)$ since it is in the interval $(-\frac{\alpha}{2};0)\subset (-\alpha;0)$ in $\Gamma_{1,\delta}$. 
\end{itemize}

We now argue about the slope $s_j$ of the edge $b_jb_{j+1}$ in $\Gamma_{\delta}$, for any $j=2,\dots,m-1$.

\begin{itemize}
	\item If $b_jb_{j+1}$ coincides with a graph $G_i$, then $s_j=s=\frac{\alpha}{2}\in (0;\alpha)$.
	\item If $b_jb_{j+1}$ belongs to a graph $G_i$ with $|V(G_i)|\geq 3$, with $i\geq 2$, and with $b_j\neq u_{i-1}$, then $s_j$ is given by the slope $b_jb_{j+1}$ has in $\Gamma_i$, which is in $(0;\frac{\alpha}{3})$ by Property~3 of $\Gamma_i$, plus $s$, which results from the rotation of $\Gamma_i$. Hence $s_j\in (\frac{\alpha}{2};\frac{5\alpha}{6})\subset (0;\alpha)$.   
	\item If $b_jb_{j+1}$ belongs to a graph $G_i$ with $|V(G_i)|\geq 3$, with $i\geq 2$, and with $b_j= u_{i-1}$, then $s_j$ is given by the slope $b_jb_{j+1}$ has in $\Gamma_i$, which is in $(-\frac{\alpha}{3};0)$ by Property~3 of $\Gamma_i$, plus $s$, which results from the rotation of $\Gamma_i$. Hence $s_j\in (\frac{\alpha}{6};\frac{\alpha}{2})\subset (0;\alpha)$.  
	\item If $b_jb_{j+1}$ belongs to $G_1$, if $|V(G_1)|\geq 3$, and if $b_{j+1}\neq u_1$, then since $\Gamma_{1,\delta}$ satisfies Property~3 and since $u_1$ is the only vertex of $\beta_{u_0u_1}(G_1)$ whose positions in $\Gamma_{1,\delta}$ and $\Gamma_{\delta}$ do not coincide, it follows that $s_j\in (0; \alpha)$ since the slope of $b_jb_{j+1}$ in $\Gamma_{1,\delta}$ is in $(0;\alpha)$.
	\item Finally, assume that $b_jb_{j+1}$ belongs to $G_1$, that $|V(G_1)|\geq 3$, and that $b_{j+1}= u_1$. Note that $b_j\neq u$, given that $j\geq 2$, hence by Property~3 of $\Gamma_1$ we have that $x(b_j)<x(b_{j+1})$ and that $y(b_j)<y(b_{j+1})$ in $\Gamma_1$. Note that the positions of $b_{j}$ in $\Gamma_1$ and $\Gamma_{\delta}$ coincide, given that $b_j\notin V(\tau_{u_0u_1}(G_1))$; further, $b_{j+1}$ moves down by $\varepsilon$ when transforming $\Gamma_1$ into $\Gamma_{\delta}$, however its $x$-coordinate stays unchanged; this implies that $s_j$ is smaller than the slope of $b_jb_{j+1}$ in $\Gamma_1$, hence smaller than $\alpha$. Since $\varepsilon\leq \frac{Y}{2}<d_V(\Gamma_1,b_jb_{j+1})$ -- given that $u_1$ lies on $\ell_u$ in $\Gamma_1$ -- it follows that $y(b_j)<y(b_{j+1})$ holds true in $\Gamma_{\delta}$, hence $s_j>0$. Thus, $s_j\in (0; \alpha)$. 
\end{itemize}

We now prove Property~1. First, the edge $uv$ does not cross or overlap any other edge of $G$, since no vertex other than $u$ and $v$ lies on or above $\ell_u$ in $\Gamma$ and $\Gamma_{\delta}$. Hence, we only need to argue about crossings among edges in the graphs $G_1,\dots,G_k$. 

We first deal with $\Gamma$. For $i=1,\dots, k$, the inductively constructed drawing $\Gamma_i$ of $G_i$ is planar, by Property~1. Further, for $i=2,\dots, k$, the drawing of $G_i$ in $\Gamma$ is congruent to $\Gamma_i$, up to affine transformations (a uniform scaling, a rotation, and a translation), which preserve planarity. Moreover, since $\varepsilon< \varepsilon^*_{\Gamma_1}$, by Lemma~\ref{le:perturbation-preserves-planarity} we have that the drawing of $G_1$ in $\Gamma$ is planar, as well. It follows that no two edges in the same graph $G_i$ cross each other in $\Gamma$, for each $i=1,\dots, k$. Since $\Gamma$ satisfies Property~3, the path $\beta_{uv}(G)$ is represented in $\Gamma$ by a curve monotonically increasing in the $x$-direction from $u$ to $v$. Further, the path $\tau=\bigcup_{i=1}^{k} \tau_{u_{i-1}u_i}(G_i)$ is also represented in $\Gamma$ by a curve monotonically increasing in the $x$-direction from $u$ to $v$, since it is composed of the straight-line segment $\overline{u_0u_1}$, which has slope $-\arctan\left(\frac{\varepsilon}{d(\Gamma_1,u_0u_1)}\right)\in (-\frac{\pi}{2};0)$, and of the straight-line segment $\overline{u_1u_k}$, which has slope $s=\frac{\alpha}{2}\in (0;\frac{\pi}{2})$. Hence, for $i=1,2,\dots,k-1$, the vertical line through $u_i$ has the drawings of $G_1,\dots,G_i$ to its left and the drawings of $G_{i+1},\dots,G_k$ to its right in $\Gamma$. It follows that no two edges in distinct graphs $G_i$ and $G_j$ cross in $\Gamma$. This proves the planarity of $\Gamma$.

Since $\Gamma_{\delta}$ and $\Gamma$ coincide, except for the position of $u$, it remains to prove that no edge $uu'$ incident to $u$ with $u'\neq v$ crosses or overlaps any other edge in $\Gamma_{\delta}$. Since the vertical line through $u_1$ has the drawing of $G_1$ to its left and the drawings of $G_2,\dots,G_k$ to its right in $\Gamma_{\delta}$, such a crossing might only occur between $uu'$ and another edge $ww'$ of $G_1$. The proof that $uu'$ and $ww'$ do not cross or overlap is the same as in the proof of Lemma~\ref{le:case1A}, with $G_1$ playing the role of $G'$ and $\Gamma_1$ playing the role of $\Gamma'$. 
    
We now deal with Property~4. Let $x\in V(G)$. If $x=u$, let $P_x=(u,v)$; then the only edge of $P_x$ has slope $0\in (-\alpha;\alpha)$ in $\Gamma_{\delta}$. If $x=u_i$, for some $i\in \{1,\dots,k-1\}$, then let $P_x=\bigcup_{j=i+1}^{k} \tau_{u_{j-1}u_j}(G_j)$ and observe that all the edges of $P_x$ have slope $s=\frac{\alpha}{2}\in (-\alpha;\alpha)$; further $P_x$ does not pass through $u$. If $x\neq u_i$, for every $i\in \{0,1,\dots,k\}$, then $x$ belongs to a unique graph $G_i$, for some $i\in \{1,2,\dots,k\}$. We distinguish two cases.

\begin{itemize}
	\item Assume first that $i\geq 2$. Since $\Gamma_i$ satisfies Property~4, there exists a path $P^i_x$ from $x$ to $u_i$ in $G_i$ whose every edge has slope in $(-\frac{\alpha}{3};\frac{\alpha}{3})$ in $\Gamma_i$; then $P_x$ consists of $P^i_x$ and of the path $\bigcup_{j=i+1}^{k} \tau_{u_{j-1}u_j}(G_j)$. Since the drawing of $G_i$ in $\Gamma_{\delta}$ is congruent to $\Gamma_i$ up to a uniform scaling, a counter-clockwise rotation by $s=\frac{\alpha}{2}$ radians, and a translation, it follows that every edge of $P^i_x$ has slope in $(s-\frac{\alpha}{3};s+\frac{\alpha}{3})=(\frac{\alpha}{6};\frac{5\alpha}{6})\subset (-\alpha;\alpha)$ in $\Gamma_{\delta}$; further, as noted above, all the edges of $\bigcup_{j=i+1}^{k} \tau_{u_{j-1}u_j}(G_j)$ have slope $s=\frac{\alpha}{2}\in (-\alpha;\alpha)$. Hence, all the edges of $P_x$ have slope in $(-\alpha;\alpha)$; further, $P_x$ does not pass through $u$.
	
	\begin{figure}[htb]
		\centering
		\includegraphics[scale=0.6]{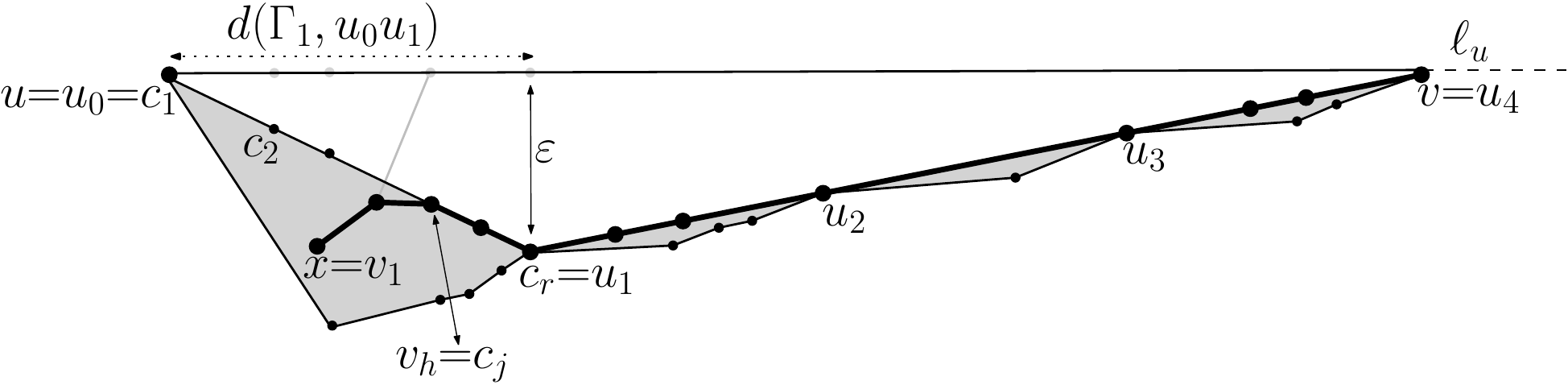}
		\caption{Illustration for the proof that the slope in $\Gamma_{\delta}$ of every edge in the path $P_x$ is in $(-\alpha;\alpha)$, in the case in which $x$ belongs to $G_1$. The path $P_x$ is thick.}
		\label{fi:inductive1B-pathToV}
	\end{figure}

	\item Assume next that $i=1$. Refer to Fig.~\ref{fi:inductive1B-pathToV}. Let $\tau_{u_0u_1}(G_1)=(u_0=c_1,c_2,\dots,c_r=u_1)$. Since $\Gamma_1$ satisfies Property~4, there exists a path $P^1_x=(x=v_1,v_2,\dots,v_p=u_1)$ from $x$ to $u_1$ in $G_1$, not passing through $u$, whose every edge has slope in $(-\frac{\alpha}{2};\frac{\alpha}{2})$ in $\Gamma_1$; let $h$ be the smallest index such that $v_h=c_j$, for some $j\in\{2,3,\dots,r\}$. Such an index $h$ exists (possibly $h=p$ and $j=r$). Then let $P_x$ consist of the paths $(x=v_1,v_2,\dots,v_h)$, $(v_h=c_j,c_{j+1},\dots,c_r)$, and $\bigcup_{j=2}^{k} \tau_{u_{j-1}u_j}(G_j)$. Since $u\notin V(P^1_x)$, we have that $u\notin V(P_x)$, hence it suffices to argue about the slopes of the edges of $P_x$ in $\Gamma$ rather than in $\Gamma_{\delta}$. 
	
	For $l=1,\dots,h-2$, the slope of $v_lv_{l+1}$ in $\Gamma$ is in the interval $(-\alpha;\alpha)$  since it is in the interval $(-\frac{\alpha}{2};\frac{\alpha}{2})\subset (-\alpha;\alpha)$ in $\Gamma_1$ and since neither $v_l$ nor $v_{l+1}$ moves when transforming $\Gamma_1$ into $\Gamma$. Further, for $l=j,\dots,r-1$, the slope of the edge $c_lc_{l+1}$ in $\Gamma$ is $-\arctan\left(\frac{\varepsilon}{d(\Gamma_1,u_0u_1)}\right)$, which is in the interval $(-\alpha;0)\subset(-\alpha;\alpha)$, given that $\varepsilon,d(\Gamma_1,u_0u_1)>0$ and that $\varepsilon<\tan(\alpha) \cdot d(\Gamma_1,u_0u_1)$. Moreover, as noted above, the edges of $\bigcup_{j=2}^{k} \tau_{u_{j-1}u_j}(G_j)$ have slope $s=\frac{\alpha}{2} \in (-\alpha;\alpha)$. Finally, let $\sigma_1$ and $\sigma$ be the slopes of the edge $v_{h-1}v_h$ in $\Gamma_1$ and $\Gamma$, respectively. Since $v_{h-1}v_h\in E(P^1_x)$, we have $\sigma_1 \in (-\frac{\alpha}{2};\frac{\alpha}{2})$; since $\alpha\leq \frac{\pi}{4}$, we have $x(v_{h-1})<x(v_{h})$ in $\Gamma_1$ and $\Gamma$ (note that the $x$-coordinates of the vertices do not change when transforming $\Gamma_1$ into $\Gamma$). Further, by Properties~1--4 of $\Gamma_1$, we have that $v_{h-1}$ lies below $\ell_u$, which contains $v_h$; hence, $y(v_{h-1})<y(v_{h})$ in $\Gamma_1$. Since the vertex $v_{h}$ moves down (while $v_{h-1}$ stays put) when transforming $\Gamma_1$ into $\Gamma$, and since $\varepsilon\leq \frac{Y}{2}<d_V(\Gamma_1,v_{h-1}v_h)$, it follows that $0<\sigma<\sigma_1$; hence $\sigma\in (0;\frac{\alpha}{2})\subset(-\alpha;\alpha)$. 
\end{itemize}

We now argue about Property~5. Let $x\in V(G)$. If $x=v$, let $Q_x=(v,u)$; then the only edge of $Q_x$ has slope $\pi\in (\pi-\alpha;\pi+\alpha)$ in $\Gamma_{\delta}$. If $x=u_i$, for some $i\in \{1,\dots,k-1\}$, then let $Q_x=\bigcup_{j=1}^{i} \beta_{u_j u_{j-1}}(G_j)$; recall that $\beta_{u_j u_{j-1}}(G_j)$ has the same vertices as $\tau_{u_{j-1} u_j}(G_j)$, however in the reverse linear order. Denote the vertices of $Q_x$ by $(x=w_1,w_2,\dots,w_q=u)$. Consider the edge $w_lw_{l+1}$, for any $1\leq l \leq q-1$. 

\begin{itemize}
	\item If $w_lw_{l+1}$ is in $G_j$, for some $j\geq 2$, then its slope in $\Gamma_{\delta}$ is $\pi+s=\pi+\frac{\alpha}{2}\in (\pi-\alpha;\pi+\alpha)$.
	\item If $w_lw_{l+1}$ is in $G_1$ and $l\leq q-2$, then its slope in $\Gamma_{\delta}$ is $\pi-\arctan\left(\frac{\varepsilon}{d(\Gamma_1,u_0u_1)}\right)$, which is in the interval $(\pi-\alpha;\pi)\subset (\pi-\alpha; \pi+\alpha)$, given that $\varepsilon,d(\Gamma_1,u_0u_1)>0$ and that $\varepsilon<\tan(\alpha) \cdot d(\Gamma_1,u_0u_1)$. 
	\item Finally, the slope of $w_{q-1}w_q$ in $\Gamma_{\delta}$ is $\pi-\arctan\left(\frac{\varepsilon}{\delta+d(\Gamma_1,u_0u_1)}\right)$, which is in the interval $(\pi-\alpha;\pi)\subset (\pi-\alpha; \pi+\alpha)$, given that $\varepsilon,d(\Gamma_1,u_0u_1)>0$, that $\delta\geq 0$, and that $\varepsilon<\tan(\alpha) \cdot d(\Gamma_1,u_0u_1)$. 
\end{itemize}

If $x\neq u_i$, for every $i\in \{0,1,\dots,k\}$, then $x$ belongs to a unique graph $G_i$, for some $i\in \{1,2,\dots,k\}$. We distinguish two cases. 

\begin{itemize}
	\item Assume first that $i\geq 2$. Since $\Gamma_i$ satisfies Property~5, there exists a path $Q^i_x$ from $x$ to $u_{i-1}$ in $G_i$ whose every edge has slope in $(\pi-\frac{\alpha}{3};\pi+\frac{\alpha}{3})$ in $\Gamma_i$; then $Q_x$ consists of $Q^i_x$ and of the path $\bigcup_{j=1}^{i-1} \beta_{u_ju_{j-1}}(G_j)$; denote the vertices of $Q_x$ by $(x=w_1,w_2,\dots,w_q=u)$. Consider the edge $w_lw_{l+1}$, for any $1\leq l \leq q-1$.  
	
	\begin{itemize}
		\item If $w_lw_{l+1}$ is in $G_i$, then it belongs to the path $Q^i_x$. Then $w_lw_{l+1}$ has slope in $(\pi-\frac{\alpha}{3};\pi+\frac{\alpha}{3})$ in $\Gamma_i$, hence it has slope $(\pi+s-\frac{\alpha}{3};\pi+s+\frac{\alpha}{3})=(\pi+\frac{\alpha}{6};\pi+\frac{5\alpha}{6})\subset (\pi-\alpha;\pi+\alpha)$ in $\Gamma_{\delta}$.
		\item If $w_lw_{l+1}$ is in $G_j$, for some $2\leq j\leq i-1$, then its slope in $\Gamma_{\delta}$ is $\pi+s=\pi+\frac{\alpha}{2}\in (\pi-\alpha;\pi+\alpha)$.
		\item If $w_lw_{l+1}$ is in $G_1$ and $l\leq q-2$, then its slope in $\Gamma_{\delta}$ is $\pi-\arctan\left(\frac{\varepsilon}{d(\Gamma_1,u_0u_1)}\right)$, which is in the interval $(\pi-\alpha; \pi+\alpha)$, as proved in the case $x=u_i$.
		 \item Finally, the slope of $w_{q-1}w_q$ in $\Gamma_{\delta}$ is $\pi-\arctan\left(\frac{\varepsilon}{\delta+d(\Gamma_1,u_0u_1)}\right)$, which is in the interval $(\pi-\alpha; \pi+\alpha)$, as proved in the case $x=u_i$. 
	\end{itemize}
	
	\item Assume next that $i=1$. Let $\tau_{u_0u_1}(G_1)=(u_0=c_1,c_2,\dots,c_r=u_1)$. Since $\Gamma_1$ satisfies Property~5, there exists a path $Q^1_x=(x=w_1,w_2,\dots,w_q=u)$ from $x$ to $u$ in $G_1$, whose every edge has slope in $(\pi-\frac{\alpha}{2};\pi+\frac{\alpha}{2})$ in $\Gamma_1$. Similarly to the proof that $\Gamma_{\delta}$ satisfies Property~5 in Lemma~\ref{le:case1A}, we distinguish two cases. If no vertex of $Q^1_x-\{u\}$ belongs to $\tau_{u_0u_1}(G_1)$, then let $Q_x=Q^1_x$ and observe that $Q_x$ satisfies the required properties in $\Gamma_\delta$ since $Q^1_x$ does in $\Gamma_{1,\delta}$. Otherwise, let $h$ be the smallest index such that $w_h=c_j$, for some $j\in \{2,3,\dots,r\}$ and define $Q_x=(x=w_1,w_2,\dots,w_h=c_j,c_{j-1},\dots,c_1=u)$. For $l=1,\dots,h-2$, the slope of the edge $w_lw_{l+1}$ is in $(\pi-\alpha;\pi+\alpha)$ in $\Gamma_{\delta}$ since it is in $(\pi-\alpha;\pi+\alpha)$ in $\Gamma_{1,\delta}$. Further, the edge $w_{h-1}w_h$ has slope in $(\pi-\alpha;\pi+\alpha)$ in $\Gamma_{\delta}$ since it has slope in that range in $\Gamma_1$ (given that it belongs to $Q^1_x$), since $x(w_{h-1})>x(w_h)$ and $y(w_{h-1})<y(w_h)$, since $u\neq w_{h-1},w_h$, and since $\varepsilon\leq \frac{Y}{2}<d_V(\Gamma_1,w_{h-1}w_h)$. Moreover, for $l=j,j-1\dots,3$, the slope of the edge $c_lc_{l-1}$ in $\Gamma_\delta$ is $\pi-\arctan\left(\frac{\varepsilon}{d(\Gamma_1,u_0u_1)}\right)$, which is in the interval $(\pi-\alpha; \pi+\alpha)$, as proved in the case $x=u_i$. Finally, the slope of the edge $c_2c_1$ in $\Gamma_\delta$ is $\pi - \arctan\left(\frac{\varepsilon}{\delta+d(\Gamma_1,u_0u_1)}\right)$, which is in the interval $(\pi-\alpha; \pi+\alpha)$, as proved in the case $x=u_i$.
\end{itemize}

We finally deal with Property~6. Consider any two distinct vertices $x,y\in V(G)$. 

If $x$ and $y$ belong to the same graph $G_i$, for some $i\in \{1,\dots,k\}$, then there exists a distance-decreasing path $P_{xy}$ from $x$ to $y$ in $\Gamma_i$, given that $\Gamma_i$ satisfies Property~6. If $i\in\{2,\dots, k\}$, the drawing of $G_i$ in $\Gamma_{\delta}$ is congruent to $\Gamma_i$, up to three affine transformations (a uniform scaling, a rotation, and a translation) that preserve the property of a path to be distance-decreasing; hence $P_{xy}$ is distance-decreasing in $\Gamma_{\delta}$ as well. If $i=1$, then the proof that $P_{xy}$ is distance-decreasing in $\Gamma_{\delta}$ is the same as the proof that $\Gamma_{\delta}$ satisfies Property~6 in Lemma~\ref{le:case1A}, with $\Gamma_1$ playing the role of $\Gamma'$. 

We can hence assume that $x$ and $y$ belong to two distinct graphs $G_i$ and $G_j$, respectively.

\begin{itemize}
	\item Suppose first that $2\leq i <j\leq k$. Then let $P_{xy}$ be the path composed of:
	\begin{itemize}
		\item a path $P^i_x$ in $G_i$ from $x$ to $u_i$ whose every edge has slope in $(-\frac{\alpha}{3}; \frac{\alpha}{3})$ in $\Gamma_i$;
		\item the path $\bigcup_{l=i+1}^{j-1} \tau_{u_{l-1}u_l}(G_l)$; and
		\item a path $P_{u_{j-1}y}$ in $G_j$ that is distance-decreasing in $\Gamma_j$. 
	\end{itemize}
	By induction, the paths $P^i_x$ and $P_{u_{j-1}y}$ exist since $\Gamma_i$ satisfies Property~4 and $\Gamma_j$ satisfies Property~6, respectively. We prove that $P_{xy}$ is distance-decreasing in $\Gamma_{\delta}$; note that $u\notin V(P_{xy})$. Let $P_{xy}=(z_1,z_2,\dots,z_s)$; then we need to prove that $d(\Gamma_{\delta},z_hz_s)>d(\Gamma_{\delta},z_{h+1}z_s)$, for $h=1,2,\dots,s-2$. We distinguish three cases. 
	\begin{itemize}
		\item If $z_hz_{h+1}$ is in $G_j$, then $(z_h,z_{h+1},\dots,z_s)$ is a sub-path of $P_{u_{j-1}y}$, hence it is distance-decreasing in $\Gamma_{\delta}$ since it is distance-decreasing in $\Gamma_j$ and since the drawing of $G_j$ in $\Gamma_{\delta}$ is congruent to $\Gamma_j$, up to three affine transformations (a uniform scaling, a rotation, and a translation) that preserve the property of a path to be distance-decreasing.
		
		\item If $z_hz_{h+1}$ is in $\tau_{u_{l-1}u_l}(G_l)$, for some $l\in\{i+1,i+2,\dots,j-1\}$, as in Fig.~\ref{fi:inductive1B-distancedecreasing}, then it has slope $s=\frac{\alpha}{2}$. Consider the line $\ell_{h}$ with slope $\frac{\pi+\alpha}{2}$ through $u_l$, oriented towards increasing $y$-coordinates. By Lemma~\ref{le:same-side}, this line has the drawings of $G_{l+1},G_{l+2},\dots,G_k$ to its right; this is because by Property~3 of $\Gamma_{\delta}$ every edge in $\beta_{u_l v}(G)$ has slope in the interval $(0;\alpha)$, where $\frac{-\pi+\alpha}{2}<0<\alpha<\frac{\pi+\alpha}{2}$, and because the path $\bigcup_{m=l+1}^{k} \tau_{u_{m-1}u_m}(G_m)$ has slope $s=\frac{\alpha}{2}$, where $\frac{-\pi+\alpha}{2}<\frac{\alpha}{2}<\frac{\pi+\alpha}{2}$. Further, by Lemma~\ref{le:same-side}, the line $\ell_{h}$ has the drawing of the path $\beta_{u_l u_{l-1}}(G_l)$ to its left; this is because every edge in $\beta_{u_l u_{l-1}}(G_l)$ has slope $s=\pi+\frac{\alpha}{2}$, where $\frac{\pi+\alpha}{2}<\pi+\frac{\alpha}{2}<\frac{3\pi+\alpha}{2}$. Then the line $\ell'_{h}$ parallel to $\ell_{h}$, passing through the midpoint of the edge $z_hz_{h+1}$, and oriented towards increasing $y$-coordinates has $\ell_{h}$ to its right, given that the path $\beta_{u_l u_{l-1}}(G_l)$ (and in particular the midpoint of the edge $z_hz_{h+1}$) is to the left of $\ell_{h}$, hence $\ell'_{h}$ has the drawings of $G_{l+1},G_{l+2},\dots,G_k$ (and in particular the vertex $z_s$) to its right. Since the half-plane to the right of $\ell'_{h}$ represents the locus of the points of the plane that are closer to $z_{h+1}$ than to $z_h$, it follows that $d(\Gamma_{\delta},z_hz_s)>d(\Gamma_{\delta},z_{h+1}z_s)$.  

		\begin{figure}[htb]
			\centering
			\includegraphics[scale=0.6]{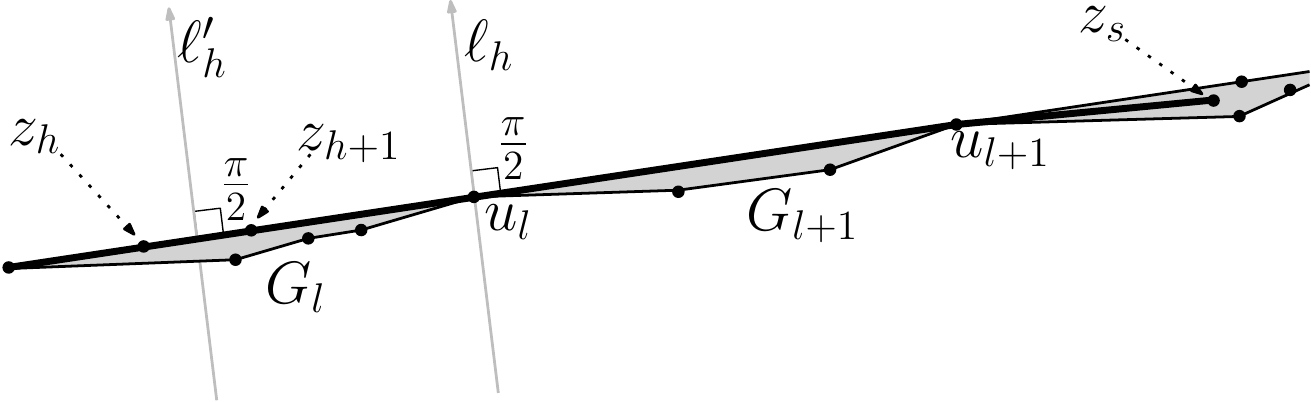}
			\caption{Illustration for the proof that $d(\Gamma_{\delta},z_hz_s)>d(\Gamma_{\delta},z_{h+1}z_s)$ if $z_hz_{h+1}$ is in $\tau_{u_{l-1}u_l}(G_l)$.}
			\label{fi:inductive1B-distancedecreasing}
		\end{figure}
		
		\item If $z_hz_{h+1}$ is in $P^i_x$, as in Fig.~\ref{fi:inductive1B-distancedecreasing2}, then by Property~4 it has slope in $(-\frac{\alpha}{3}; \frac{\alpha}{3})$ in $\Gamma_i$. Since $\Gamma_i$ is counter-clockwise rotated by $s$ radians in $\Gamma_{\delta}$, it follows that $z_hz_{h+1}$ has slope in $(s-\frac{\alpha}{3}; s+\frac{\alpha}{3})=(\frac{\alpha}{6};\frac{5\alpha}{6})$ in $\Gamma_{\delta}$. Consider the line $\ell_h$ that passes through $u_i$, that is directed towards increasing $y$-coordinates and that is orthogonal to the line through $z_h$ and $z_{h+1}$. Denote by $s_h$ the slope of $\ell_h$. Then $s_h\in (\frac{\pi}{2}+\frac{\alpha}{6};\frac{\pi}{2}+\frac{5\alpha}{6})$. By Lemma~\ref{le:same-side}, the line $\ell_h$ has the drawings of $G_{i+1},\dots,G_k$ to its right; this is because by Property~3 of $\Gamma_{\delta}$ every edge in $\beta_{u_i v}(G)$ has slope in $(0;\alpha)$ with $s_h-\pi<-\frac{\pi}{2}+\frac{5\alpha}{6}<0<\alpha<\frac{\pi}{2}+\frac{\alpha}{6}<s_h$ and because the path $\bigcup_{m=i+1}^{k} \tau_{u_{m-1}u_m}(G_m)$ has slope $s=\frac{\alpha}{2}$, where $s_h-\pi<-\frac{\pi}{2}+\frac{5\alpha}{6}<\frac{\alpha}{2}<\frac{\pi}{2}+\frac{\alpha}{6}<s_h$. Further, by Lemma~\ref{le:same-side}, the line $\ell_h$ has the drawings of $G_2,\dots,G_i$ to its left; this is because by Property~3 of $\Gamma_{\delta}$ every edge in $\tau_{u_i u_1}(G)$ has slope in $(\pi;\pi+\alpha)$ with $s_h<\frac{\pi}{2}+\frac{5\alpha}{6}<\pi<\pi+\alpha<\frac{3\pi}{2}+\frac{\alpha}{6}<\pi+s_h$ and because the path $\bigcup_{m=2}^{i} \beta_{u_{m}u_{m-1}}(G_m)$ has slope $s=\pi+\frac{\alpha}{2}$, where $s_h<\frac{\pi}{2}+\frac{5\alpha}{6}<\pi+\frac{\alpha}{2}<\frac{3\pi}{2}+\frac{\alpha}{6}<\pi+s_h$. Now consider the line $\ell'_{h}$ parallel to $\ell_{h}$, passing through the midpoint of the edge $z_hz_{h+1}$, and oriented towards increasing $y$-coordinates. This line has $\ell_{h}$ to its right, given that the drawing of $G_i$ (and in particular the midpoint of $z_hz_{h+1}$) is to the left of $\ell_{h}$ in $\Gamma_{\delta}$. Thus, $\ell'_{h}$ has the drawings of $G_{i+1},G_{i+2},\dots,G_k$ (and in particular the vertex $z_s$) to its right. Since the half-plane to the right of $\ell'_{h}$ represents the locus of the points of the plane that are closer to $z_{h+1}$ than to $z_h$, it follows that $d(\Gamma_{\delta},z_hz_s)>d(\Gamma_{\delta},z_{h+1}z_s)$.

		\begin{figure}[htb]
			\centering
			\includegraphics[scale=0.6]{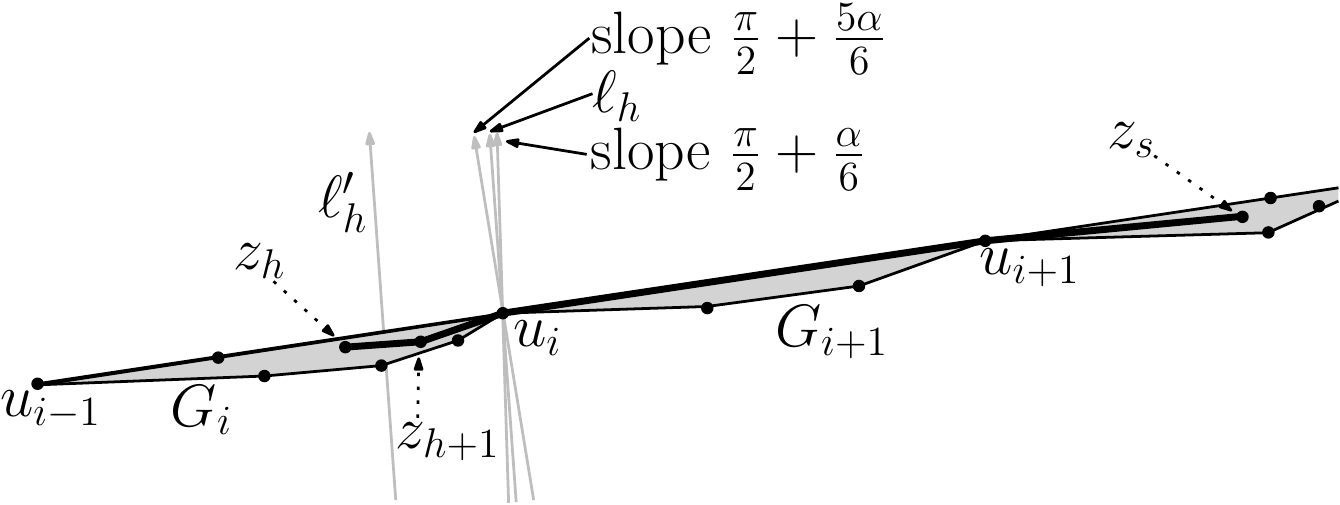}
			\caption{Illustration for the proof that $d(\Gamma_{\delta},z_hz_s)>d(\Gamma_{\delta},z_{h+1}z_s)$ if $z_hz_{h+1}$ is in $P^i_x$.}
			\label{fi:inductive1B-distancedecreasing2}
		\end{figure}
		
	\end{itemize}
	\item The case in which $2\leq j <i\leq k$ is symmetric to the case in which $2\leq i <j\leq k$. 
	\item Suppose next that $i=1$ and $j>1$. Then let $P_{xy}$ be the path composed of:
	\begin{itemize}
		\item a path $P^1_x$ in $G_1$ from $x$ to $u_1$ whose every edge has slope in $(-\alpha;\alpha)$ in $\Gamma_{\delta}$, where $P^1_x$ does not pass through $u$, unless $x=u$;
		\item the path $\bigcup_{l=2}^{j-1} \tau_{u_{l-1}u_l}(G_l)$; and
		\item a path $P_{u_{j-1}y}$ in $G_j$ that is distance-decreasing in $\Gamma_j$. 
	\end{itemize}
	The path $P_{u_{j-1}y}$ exists by induction since $\Gamma_j$ satisfies Property~6. 
	
	We prove that a path $P^1_x$ satisfying the above properties exists in $\Gamma_{\delta}$. Let $\tau_{u_0u_1}(G_1)=(u_0=c_1,c_2,\dots,c_r=u_1)$. If $x=u$, then let $P^1_x=\tau_{u_0u_1}(G_1)$. Every edge of $P^1_x$ other than $c_1c_2$ has slope $-\arctan\left(\frac{\varepsilon}{d(\Gamma_1,u_0u_1)}\right)$ in $\Gamma_{\delta}$, while $c_1c_2$ has slope $-\arctan\left(\frac{\varepsilon}{\delta+d(\Gamma_1,u_0u_1)}\right)$. These slopes are smaller than $0$, given that $\varepsilon,d(\Gamma_1,u_0u_1)>0$ and $\delta\geq 0$, and larger than $-\alpha$, given that $\delta\geq 0$ and $\varepsilon<\tan{\alpha}\cdot d(\Gamma_1,u_0u_1)$. Thus, every edge of $P^1_x$ has slope in $(-\alpha; \alpha)$ in $\Gamma_{\delta}$. If $x\neq u$, then $P^1_x$ can be shown to exist as in the proof that $\Gamma_{\delta}$ satisfies Property~4, by considering a path $(x=v_1,v_2,\dots,v_p=u_1)$ in $G_1$ that does not pass through $u$ and whose every edge has slope in $(-\frac{\alpha}{2};\frac{\alpha}{2})$ in $\Gamma_1$ and by defining $P^1_x=(v_1,v_2,\dots,v_h=c_j,c_{j+1},\dots,c_r)$, where $h$ is the smallest index such that $v_h\in V(\tau_{u_0u_1}(G_1))$. This concludes the proof that a path $P^1_x$ satisfying the required properties exists in $\Gamma_{\delta}$.
	
	Note that $u\notin V(P_{xy})$, unless $x=u$, given that $u\notin V(P^1_{x})$, unless $x=u$. Let $P_{xy}=(z_1,z_2,\dots,z_s)$; we prove that, for any $h=1,2,\dots,s-2$, it holds true that $d(\Gamma_{\delta},z_hz_s)>d(\Gamma_{\delta},z_{h+1}z_s)$. This can be proved exactly as in the case $2\leq i <j\leq k$ if $z_hz_{h+1}$ is in $P_{u_{j-1}y}$ or if $z_hz_{h+1}$ is in $\bigcup_{l=2}^{j-1} \tau_{u_{l-1}u_l}(G_l)$. Assume hence that $z_hz_{h+1}$ is in $P^1_x$ and recall that the slope of every edge of $P^1_x$ in $\Gamma_{\delta}$ is in $(-\alpha; \alpha)$. Similarly to the case $2\leq i <j\leq k$, consider the line $\ell_h$ that passes through $u_1$, that is directed towards increasing $y$-coordinates and that is orthogonal to the line through $z_h$ and $z_{h+1}$. Denote by $s_h$ the slope of $\ell_h$. Then $s_h\in (\frac{\pi}{2}-\alpha;\frac{\pi}{2}+\alpha)$. By Lemma~\ref{le:same-side}, the line $\ell_h$ has the drawings of $G_2,\dots,G_k$ to its right; this is because by Property~3 of $\Gamma_{\delta}$ every edge in $\beta_{u_1 v}(G)$ has slope in $(0;\alpha)$ with $s_h-\pi<-\frac{\pi}{2}+\alpha<0<\alpha<\frac{\pi}{2}-\alpha<s_h$ and because the path $\bigcup_{m=2}^{k} \tau_{u_{m-1}u_m}(G_m)$ has slope $s=\frac{\alpha}{2}$, where $s_h-\pi<-\frac{\pi}{2}+\alpha<\frac{\alpha}{2}<\frac{\pi}{2}-\alpha<s_h$. Further, by Lemma~\ref{le:same-side}, the line  $\ell_h$ has the drawing of $G_1$ to its left; this is because by Property~3 of $\Gamma_1$ every edge in $\tau_{u_1 u_0}(G)$ has slope in $(\pi-\frac{\alpha}{2};\pi+\frac{\alpha}{2})$, where $s_h<\frac{\pi}{2}+\alpha<\pi-\frac{\alpha}{2}<\pi+\frac{\alpha}{2}<\frac{3\pi}{2}-\alpha<\pi+s_h$ and because every edge of the path $\beta_{u_1u_0}(G_1)$ has slope either $\pi-\arctan\left(\frac{\varepsilon}{d(\Gamma_1,u_0u_1)}\right)$ or $\pi-\arctan\left(\frac{\varepsilon}{\delta+d(\Gamma_1,u_0u_1)}\right)$, where $s_h<\frac{\pi}{2}+\alpha<\pi-\alpha<\pi-\arctan\left(\frac{\varepsilon}{d(\Gamma_1,u_0u_1)}\right)\leq \pi-\arctan\left(\frac{\varepsilon}{\delta+d(\Gamma_1,u_0u_1)}\right)<\pi<\frac{3\pi}{2}-\alpha<\pi+s_h$ -- these inequalities exploit $\varepsilon,d(\Gamma_1,u_0u_1)>0$, $\delta\geq 0$, and $\varepsilon<\tan{\alpha}\cdot d(\Gamma_1,u_0u_1)$. Now consider the line $\ell'_{h}$ parallel to $\ell_{h}$, passing through the midpoint of the edge $z_hz_{h+1}$, and oriented towards increasing $y$-coordinates. This line has $\ell_{h}$ to its right, given that the drawing of $G_1$ (and in particular the midpoint of $z_hz_{h+1}$) is to the left of $\ell_{h}$ in $\Gamma_{\delta}$. Thus, $\ell'_{h}$ has the drawings of $G_{2},G_{3},\dots,G_k$ (and in particular the vertex $z_s$) to its right. Since the half-plane to the right of $\ell'_{h}$ represents the locus of the points of the plane that are closer to $z_{h+1}$ than to $z_h$, it follows that $d(\Gamma_{\delta},z_hz_s)>d(\Gamma_{\delta},z_{h+1}z_s)$.
			
	\item Suppose finally that $i>1$ and $j=1$. Then $P_{xy}$ consists of three paths, one contained in $G_i$, one coinciding with $\bigcup_{l=2}^{i-1} \beta_{u_lu_{l-1}}(G_l)$, and one contained in $G_1$.
	
	\begin{itemize}
		\item The first path in $P_{xy}$ is a path $Q^i_x$ in $G_i$ from $x$ to $u_{i-1}$ whose every slope in $\Gamma_i$ is in $(\pi-\frac{\alpha}{3}; \pi+\frac{\alpha}{3})$; this path exists since $\Gamma_i$ satisfies Property~5. Since $\Gamma_i$ is counter-clockwise rotated by $s$ radians in $\Gamma_{\delta}$, it follows that every edge of $Q^i_x$ has slope in $(\pi+s-\frac{\alpha}{3}; \pi+s+\frac{\alpha}{3})=(\pi+\frac{\alpha}{6};\pi+\frac{5\alpha}{6})$ in $\Gamma_{\delta}$. We prove that, for every edge $z_hz_{h+1}$ of $Q^i_x$, it holds true that $d(\Gamma_{\delta},z_hy)>d(\Gamma_{\delta},z_{h+1}y)$. Consider the line $\ell_h$ that passes through $u_{i-1}$, that is directed towards increasing $y$-coordinates and that is orthogonal to the line through $z_h$ and $z_{h+1}$. Denote by $s_h$ the slope of $\ell_h$. Then $s_h\in (\frac{\pi}{2}+\frac{\alpha}{6};\frac{\pi}{2}+\frac{5\alpha}{6})$. By Lemma~\ref{le:same-side}, the line $\ell_h$ has the drawing of $G_i$ to its right; this is because by Property~3 of $\Gamma_{\delta}$ every edge in $\beta_{u_{i-1} u_i}(G)$ has slope in $(-\alpha;\alpha)$ with $s_h-\pi<-\frac{\pi}{2}+\frac{5\alpha}{6}<-\alpha<\alpha<\frac{\pi}{2}+\frac{\alpha}{6}<s_h$ and because the path $\tau_{u_{i-1}u_i}(G_i)$ has slope $s=\frac{\alpha}{2}$, where $s_h-\pi<-\frac{\pi}{2}+\frac{5\alpha}{6}<\frac{\alpha}{2}<\frac{\pi}{2}+\frac{\alpha}{6}<s_h$. Further, by Lemma~\ref{le:same-side}, the line $\ell_h$ has the drawings of $G_1,\dots,G_{i-1}$ (and in particular $y$) to its left; this is because by Property~3 of $\Gamma_{\delta}$ every edge in $\tau_{u_{i-1} u_0}(G)$ has slope in $(\pi-\alpha;\pi+\alpha)$ where $s_h<\frac{\pi}{2}+\frac{5\alpha}{6}<\pi-\alpha<\pi+\alpha<\frac{3\pi}{2}+\frac{\alpha}{6}<\pi+s_h$ and because every edge of the path $\bigcup_{m=1}^{i-1} \beta_{u_{m}u_{m-1}}(G_m)$ has slope either $s=\pi+\frac{\alpha}{2}$, or $\pi-\arctan\left(\frac{\varepsilon}{d(\Gamma_1,u_0u_1)}\right)$, or $\pi-\arctan\left(\frac{\varepsilon}{\delta+d(\Gamma_1,u_0u_1)}\right)$, where $s_h<\frac{\pi}{2}+\frac{5\alpha}{6}<\pi-\alpha<\pi-\arctan\left(\frac{\varepsilon}{d(\Gamma_1,u_0u_1)}\right)\leq \pi-\arctan\left(\frac{\varepsilon}{\delta+d(\Gamma_1,u_0u_1)}\right)<\pi<\pi+\frac{\alpha}{2}< \frac{3\pi}{2}+\frac{\alpha}{6}<\pi+s_h$. Now consider the line $\ell'_{h}$ parallel to $\ell_{h}$, passing through the midpoint of the edge $z_hz_{h+1}$, and oriented towards increasing $y$-coordinates. This line has $\ell_{h}$ to its left, given that the drawing of $G_i$ (and in particular the midpoint of $z_hz_{h+1}$) is to the right of $\ell_{h}$ in $\Gamma_{\delta}$. Thus, $\ell'_{h}$ has the drawings of $G_{i-1},G_{i-2},\dots,G_1$ (and in particular the vertex $y$) to its left. Since the half-plane to the left of $\ell'_{h}$ represents the locus of the points of the plane that are closer to $z_{h+1}$ than to $z_h$, it follows that $d(\Gamma_{\delta},z_hy)>d(\Gamma_{\delta},z_{h+1}y)$.
		
		\item The second path in $P_{xy}$ is $\bigcup_{l=2}^{i-1} \beta_{u_lu_{l-1}}(G_l)$. Consider an edge $z_hz_{h+1}$ of this path in a graph $G_l$, for some $l\in\{2,\dots,i-1\}$. Then $z_hz_{h+1}$ has slope $\pi+s=\pi+\frac{\alpha}{2}$. Consider the line $\ell_{h}$ that has slope $s_h=\frac{\pi+\alpha}{2}$, that passes through $u_{l-1}$, and that is oriented towards increasing $y$-coordinates. By Lemma~\ref{le:same-side}, the line $\ell_h$ has the drawing of $G_l$ to its right; this is because by Property~3 of $\Gamma_{\delta}$ every edge in $\beta_{u_{l-1} u_l}(G)$ has slope in $(0;\alpha)$ with $s_h-\pi=\frac{-\pi+\alpha}{2}<0<\alpha<\frac{\pi+\alpha}{2}=s_h$ and because the path $\tau_{u_{l-1}u_l}(G_l)$ has slope $s=\frac{\alpha}{2}$, where $s_h-\pi=\frac{-\pi+\alpha}{2}<\frac{\alpha}{2}<\frac{\pi+\alpha}{2}=s_h$. Further, by Lemma~\ref{le:same-side}, the line $\ell_h$ has the drawings of $G_{l-1},G_{l-2},\dots,G_1$ to its left; this is because by Property~3 of $\Gamma_{\delta}$ every edge in $\tau_{u_{l-1} u_0}(G)$ has slope in $(\pi-\alpha;\pi+\alpha)$, where $s_h=\frac{\pi+\alpha}{2}<\pi-\alpha<\pi+\alpha<\frac{3\pi+\alpha}{2}=s_h+\pi$ and because every edge of the path $\bigcup_{m=1}^{i-1} \beta_{u_m u_{m-1}}(G_m)$ has slope either $\pi+\frac{\alpha}{2}$, or $\pi-\arctan\left(\frac{\varepsilon}{d(\Gamma_1,u_0u_1)}\right)$, or $\pi-\arctan\left(\frac{\varepsilon}{\delta+d(\Gamma_1,u_0u_1)}\right)$, where $s_h=\frac{\pi+\alpha}{2}<\pi-\alpha<\pi-\arctan\left(\frac{\varepsilon}{d(\Gamma_1,u_0u_1)}\right)\leq \pi-\arctan\left(\frac{\varepsilon}{\delta+d(\Gamma_1,u_0u_1)}\right)<\pi<\pi+\frac{\alpha}{2}<\frac{3\pi+\alpha}{2}=s_h+\pi$. Now consider the line $\ell'_{h}$ parallel to $\ell_{h}$, passing through the midpoint of the edge $z_hz_{h+1}$, and oriented towards increasing $y$-coordinates. This line has $\ell_{h}$ to its left, given that the drawing of $G_l$ (and in particular the midpoint of $z_hz_{h+1}$) is to the right of $\ell_{h}$ in $\Gamma_{\delta}$. Thus, $\ell'_{h}$ has the drawings of $G_{l-1},G_{l-2},\dots,G_1$ (and in particular vertex $y$) to its left. Since the half-plane to the left of $\ell'_{h}$ represents the locus of the points of the plane that are closer to $z_{h+1}$ than to $z_h$, it follows that $d(\Gamma_{\delta},z_hy)>d(\Gamma_{\delta},z_{h+1}y)$.  
		
		\item The third path $P_{u_1y}$ in $P_{xy}$ is defined as follows. If $y=u$, let $P_{u_1y}=\beta_{u_1u_0}(G_1)$. Then every edge of $P_{u_1y}$ has slope either $\pi-\arctan\left(\frac{\varepsilon}{d(\Gamma_1,u_0u_1)}\right)$ or $\pi-\arctan\left(\frac{\varepsilon}{\delta+d(\Gamma_1,u_0u_1)}\right)$. Both these slopes are smaller than $\pi$, given that $\varepsilon,d(\Gamma_1,u_0u_1)>0$ and $\delta\geq 0$, and larger than $\pi-\alpha$, given that $\delta\geq 0$ and $\varepsilon<\tan(\alpha) \cdot d(\Gamma_1,u_0u_1)$. Thus, $P_{u_1y}$ is a $\pi$-path, and hence it is distance-decreasing	(see~\cite{dfg-icgps-15} and the proof of Property~6 in Lemma~\ref{le:case1A}). If $y\neq u$, then let $P_{u_1y}$ be a distance-decreasing path in $\Gamma_1$ not passing through $u$. This path exists by induction, given that $\Gamma_1$ satisfies Property~6. Since $P_{u_1y}$ does not pass through $u$, it has the same representation in $\Gamma_\delta$ and $\Gamma$. Since the Euclidean distance between the positions of any vertex of $G_1$ in $\Gamma_1$ and $\Gamma$ is at most $\varepsilon<\varepsilon^*_{\Gamma_1}$, by Lemma~\ref{le:perturbation-preserves-greedy} we have that $P_{u_1y}$ is distance-decreasing in $\Gamma$ and hence in $\Gamma_\delta$.
\end{itemize}
\end{itemize}
Hence $\Gamma_{\delta}$ satisfies Property~6. This concludes the proof of the lemma.
\end{proof}


We now discuss {\bf Case~B}, in which $(G,u,v)$ is decomposed according to Lemma~\ref{le:decomposition-B}. Refer to Figs.~\ref{fi:inductiveB-statement} and~\ref{fi:inductiveB-statement2}. First, the triple $(H,u,y_1)$ is a strong circuit graph; further, $|V(H)|\geq 3$, hence $H$ is not a single edge. Apply induction in order to construct a straight-line drawing $\Gamma_H$ of $H$ with $\frac{\alpha}{2}$ as a parameter. 

	\begin{figure}[htb]
		\centering
		\includegraphics[scale=0.66]{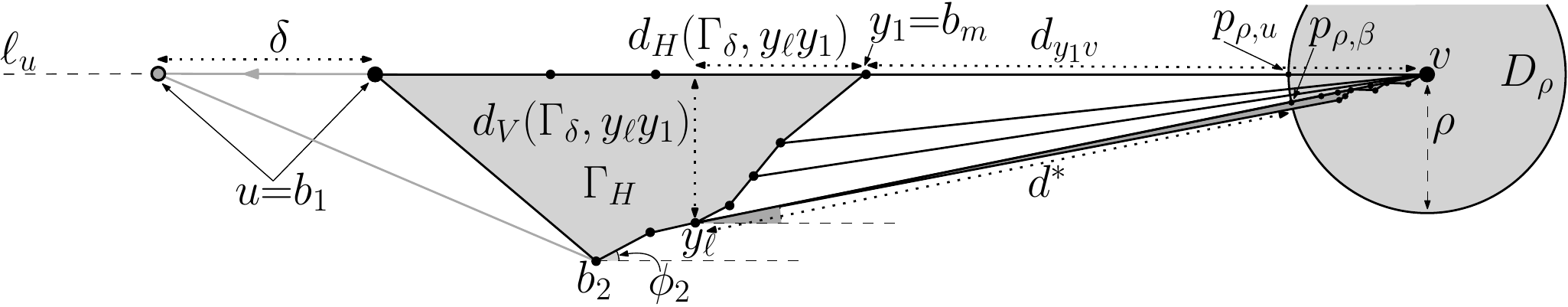}
		\caption{The straight-line drawing $\Gamma$ of $G$ in Case~B. For the sake of readability,  $\phi$ and $\rho$ are larger than they should be. The dark gray angle is equal to $\beta$.}
		\label{fi:inductiveB-statement}
	\end{figure}

Let $\beta_{uy_1}(H)=(u=b_1,b_2,\dots,b_m=y_1)$. Let $\phi_i$ be the slope of the edge $b_ib_{i+1}$ in $\Gamma_H$ and let $\phi=\min_{i=2,\dots,m-1} \{\phi_i\}$. By Property~(c) of $(H,u,y_1)$ if edge $uy_1$ belongs to $H$ then it coincides with the path $\tau_{uy_1}(H)$. Hence, $m\geq 3$ and $\phi$ is well-defined. Further, $\phi$ is in the interval $(0;\frac{\alpha}{2})$ by Property~3 of $\Gamma_H$. 

Let $\beta = \frac{1}{2}\min\left\{\phi, \arctan\left(\frac{d_V(\Gamma_H,y_{\ell}y_1)}{3d_V(\Gamma_H,y_{\ell}y_1)+3d_H(\Gamma_H,y_{\ell}y_1)}\right)\right\}$. Note that $\beta>0$, given that $\phi,d_V(\Gamma_H,y_{\ell}y_1) >0$ and $d_H(\Gamma_H,y_{\ell}y_1)\geq 0$. In particular, $d_V(\Gamma_H,y_{\ell}y_1) >0$ because $y_1$ is an internal vertex of $\tau_{uy_1}(H)$ and $y_{\ell}$ is an internal vertex of $\beta_{uy_1}(H)$ by Lemma~\ref{le:decomposition-B}, and because of Properties~1--3 of $\Gamma_H$. Also note that $\beta<\frac{\alpha}{4}$, given that $\phi<\frac{\alpha}{2}$.

Consider a half-line $h_\beta$ with slope $\beta$ starting at $y_{\ell}$. Place the vertex $v$ at the intersection point between $h_\beta$ and the horizontal line $\ell_u$ through $u$. Draw all the trivial $(H\cup \{v\})$-bridges of $G$ as straight-line segments. This concludes the construction if every $(H\cup \{v\})$-bridge of $G$ is trivial. Otherwise, $B_\ell$ is the only non-trivial $(H\cup \{v\})$-bridge of $G$. Then $B_\ell$ consists of $k$ strong circuit graphs $(G_i,u_{i-1},u_i)$, where $u_0=y_\ell$ and $u_k=v$. With a slight change of notation, in the remainder of the section we assume that, if the edge $y_\ell v$ exists, then it is an edge of $B_\ell$ (rather than an individual trivial $(H\cup \{v\})$-bridge $B_{\ell-1}$ of $G$); in this case $(B_\ell,u_0,u_k)$ is a strong circuit graph (this comes from the proof of Lemma~\ref{le:decomposition-B}, where the graph $B_\ell$ together with the edge $y_\ell v$ was denoted by $B'_\ell$). 

We claim that $v$ lies to the right of $y_1$. The polygonal line representing $\beta_{y_{\ell}y_1}(H)$ in $\Gamma_H$ and the straight-line segment $\overline{y_{\ell}v}$ are both incident to $y_{\ell}$. By definition of $\phi$ and since $\Gamma_H$ satisfies Property~3, $\beta_{y_{\ell}y_1}(H)$ is composed of straight-line segments with slopes in the range $[\phi;\frac{\alpha}{2})$, while $\overline{y_{\ell}v}$ has slope $\beta$. The claim then follows from $0<\beta<\phi<\frac{\pi}{2}$.

Denote by $d_{y_1v}$ the distance between $y_1$ and $v$. Let $Y>0$ be the minimum distance in $\Gamma_H$ of any vertex strictly below $\ell_u$ from $\ell_u$.

Let $\rho=\min\{\frac{d_{y_1v}}{3},\frac{Y}{2}\}$. Let $D_{\rho}$ be the disk with radius $\rho$ centered at $v$. Let $p_{\rho,\beta}$ ($p_{\rho,u}$) be the intersection point closer to $y_{\ell}$ (resp.\ to $y_1$) of the boundary of $D_{\rho}$ with $h_{\beta}$ (resp.\ with $\ell_u$). Let $d^*$ be the Euclidean distance between $y_{\ell}$ and $p_{\rho,\beta}$. 

\begin{figure}[htb]
	\centering
	\subfloat[]{
		\includegraphics[scale=0.6]{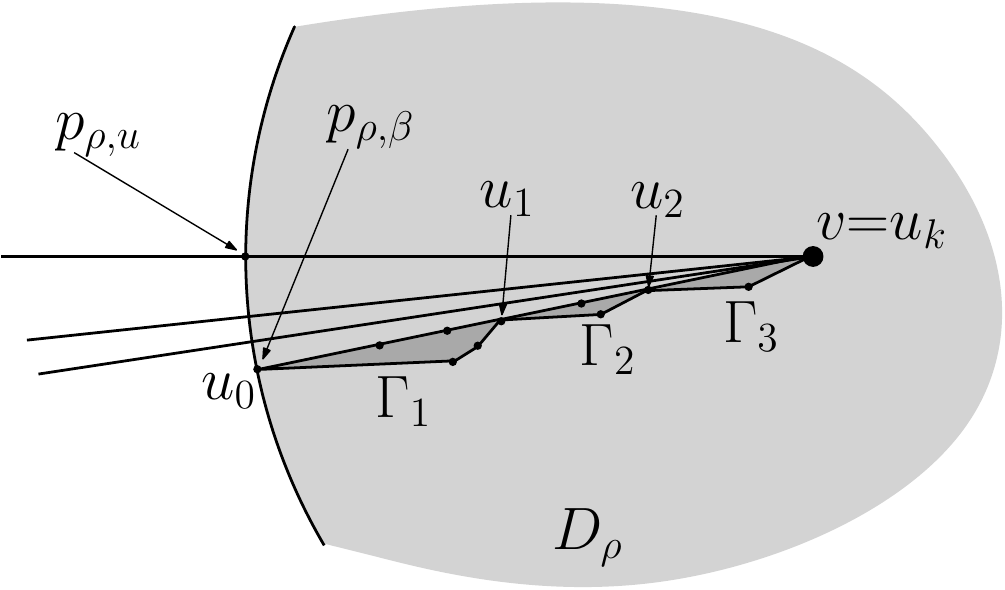}
		\label{fi:closer1}
	}\hfil
	\subfloat[]{
		\includegraphics[scale=0.6]{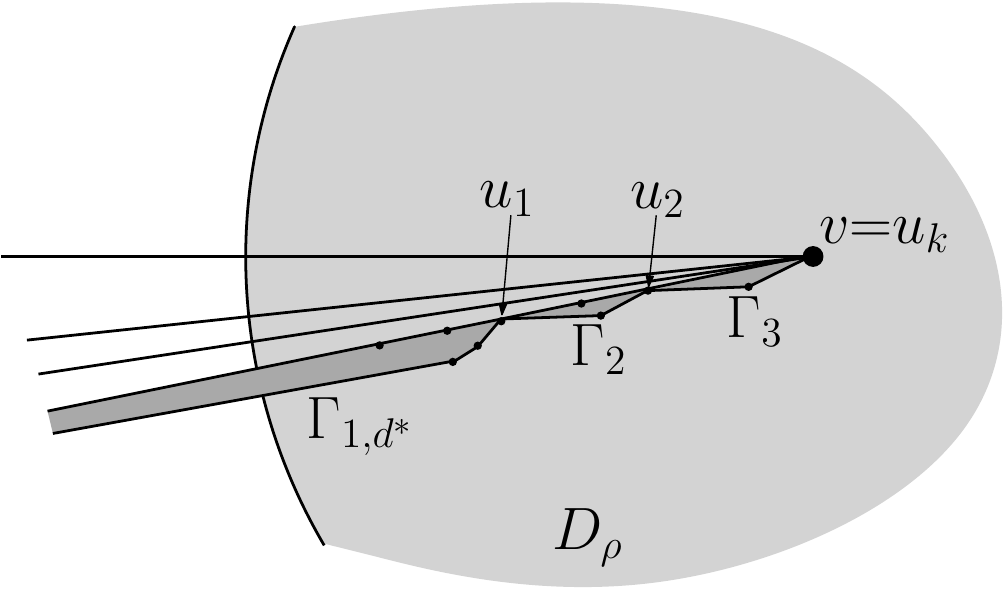}
		\label{fi:closer2}
	}
	\caption{A closer look at $D_{\rho}$. Figure (a) represents the drawings $\Gamma_1,\dots,\Gamma_k$, once they have been uniformly scaled, rotated, and translated, while (b) also has the vertex $u_0$ moved by $d^*$ units (this movement actually happens before the rotation and translation of $\Gamma_1$).}
	\label{fi:inductiveB-statement2}
\end{figure}

Let $\alpha'=\frac{\beta}{2}$. Since $\beta>0$, we have $\alpha'>0$; further, $\alpha'<\frac{\alpha}{8}$, given that $\beta<\frac{\alpha}{4}$. For $i=1,\dots,k$, apply induction in order to construct a straight-line drawing $\Gamma_i$ of $G_i$ with $\alpha'$ as a parameter (if $G_i$ is a single edge, then the parameter does not matter). Uniformly scale the drawings $\Gamma_1,\dots,\Gamma_k$ so that the Euclidean distance between $u_{i-1}$ and $u_i$ is equal to $\frac{\rho}{k}$. Move the vertex $u_0$ in $\Gamma_1$ by $d^*$ units to the left, obtaining a drawing $\Gamma_{1,d^*}$. Rotate the drawings $\Gamma_{1,d^*},\Gamma_2,\dots,\Gamma_k$ counter-clockwise by $\beta$ radians. Translate $\Gamma_{1,d^*},\Gamma_2,\dots,\Gamma_k$ so that, for $i=1,\dots,k-1$, the representations of $u_i$ in $\Gamma_i$ and $\Gamma_{i+1}$ (in $\Gamma_{1,d^*}$ and $\Gamma_2$ if $i=1$) coincide and so that the representation of $u_0$ in the scaled and rotated drawing $\Gamma_{1,d^*}$ coincides with the one of $y_{\ell}$ in $\Gamma_H$. This completes the construction of a straight-line drawing $\Gamma$ of $G$. We have the following.

\begin{lemma} \label{le:caseB}
For any $\delta\geq 0$, the drawing $\Gamma_{\delta}$ constructed in Case~B satisfies Properties~1--6 of Theorem~\ref{th:main-aux}.	
\end{lemma}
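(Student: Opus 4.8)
The plan is to mirror the structure of the proofs of Lemmata~\ref{le:case1A} and~\ref{le:case1B}, using the decomposition of Lemma~\ref{le:decomposition-B}. First I record how $\Gamma_{\delta}$ is assembled. Since we are in Case~B the edge $uv$ does not exist (by Property~(c) of $(G,u,v)$), and by Lemma~\ref{le:decomposition-B} the path $\tau_{uv}(G)$ is $\tau_{uy_1}(H)$ followed by the edge $y_1v=B_1$, whereas $\beta_{uv}(G)$ is $\beta_{uy_\ell}(H)$ followed by $\beta_{y_\ell v}(B_\ell)$. By construction the drawing of $H$ inside $\Gamma_{\delta}$ coincides with the drawing $\Gamma_{H,\delta}$ obtained by moving $u$ leftwards by $\delta$ (which satisfies Properties~1--6 with parameter $\frac{\alpha}{2}$ by the inductive hypothesis); the trivial bridges $B_1,\dots,B_{\ell-1}$ are straight-line segments incident to $v$; and, when $B_\ell$ is non-trivial, the drawing of $B_\ell$ is obtained from the inductively constructed drawings $\Gamma_1,\dots,\Gamma_k$ (each with parameter $\alpha'=\frac{\beta}{2}$) by a uniform scaling, a leftward displacement of $u_0=y_\ell$ by $d^*$ in $\Gamma_1$, a counter-clockwise rotation by $\beta$, and translations chaining $u_0,u_1,\dots,u_k=v$ along $h_\beta$. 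A short computation, using $d(\Gamma,u_{i-1}u_i)=\frac{\rho}{k}$ and the definition of $d^*$, shows that $u_1,\dots,u_k$ all lie on $h_\beta$ with $u_k$ at $v$, so that $B_\ell$ is a thin sliver running from $y_\ell$ to $v$ whose body $G_2,\dots,G_k$ is confined to $D_\rho$.

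Properties~2--5 then follow by bookkeeping on slopes. Property~2 holds because $\tau_{uy_1}(H)$ lies on $\ell_u$ in $\Gamma_{H,\delta}$, because $v$ was placed on $\ell_u$ to the right of $y_1$, and because moving $u$ leftwards keeps it on $\ell_u$. For Properties~3--5 I would treat the $H$-edges via $\Gamma_{H,\delta}$ (whose slopes lie in the $\frac{\alpha}{2}$-ranges) and the $B_\ell$-edges via the drawings $\Gamma_i$: since each $\Gamma_i$ is rotated by $\beta$ and drawn with parameter $\alpha'=\frac{\beta}{2}$, every relevant edge of $\beta_{y_\ell v}(B_\ell)$, of a $\beta$-path to $v$, or of a $\pi$-path to $u$ acquires a slope in $(\beta-\frac{\beta}{2};\beta+\frac{\beta}{2})$ (or its $\pi$-shift), and $\beta<\frac{\alpha}{4}$ keeps all these slopes inside the required intervals $(-\alpha;0)$, $(0;\alpha)$, $(-\alpha;\alpha)$, and $(\pi-\alpha;\pi+\alpha)$; the horizontal edge $y_1v$ and the $d^*$-displaced drawing $\Gamma_{1,d^*}$ (which again satisfies all properties by the inductive hypothesis) are handled exactly as the horizontal edges and the $\delta$-displaced first block were in Lemma~\ref{le:case1B}. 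The paths required by Properties~4 and~5 are obtained by routing monotonically to $v$ (resp.\ to $u$) through the shared vertices $y_1$ and $y_\ell$, concatenating a path inside $B_\ell$ (from the appropriate $\Gamma_i$) with a path inside $H$ (from $\Gamma_{H,\delta}$).

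For Property~1 I would argue that $H$, $B_\ell$, and the trivial bridges are pairwise non-crossing. The drawing of $H$ is planar by the inductive hypothesis, and the drawing of $B_\ell$ is planar because it is an affine image of the chained drawings $\Gamma_1,\dots,\Gamma_k$, exactly as in Lemma~\ref{le:case1B}. To separate the pieces I would use that $\rho\le\min\{\frac{d_{y_1v}}{3},\frac{Y}{2}\}$ confines the body of $B_\ell$ to $D_\rho$, which lies strictly to the right of $y_1$ (hence to the right of all of $H$) and whose vertical extent stays below the threshold $Y$; and that the choice $\beta<\phi$ together with the $\arctan$ term in the definition of $\beta$ keeps the thin sliver $G_1$ of $B_\ell$ inside a wedge of half-width $\alpha'$ about $h_\beta$ that stays below the polygonal line $\beta_{y_\ell y_1}(H)$, so that neither $B_\ell$ nor the trivial bridges (which emanate from $v$ in the counter-clockwise order established in Lemma~\ref{le:decomposition-B}) cross $H$ or one another.

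Property~6 is the crux and I expect it to be the main obstacle. I would split on the location of $x$ and $y$. If both lie in $H$, a distance-decreasing path avoiding $u$ (unless $x$ or $y$ equals $u$) is provided directly by Property~6 of $\Gamma_{H,\delta}$. If both lie in $B_\ell$, the argument is that of Lemma~\ref{le:case1B}, since $B_\ell$ is a chain of blocks $G_1,\dots,G_k$ along a line of slope $\beta$ with the first vertex displaced by $d^*$ (which plays the role of the parameter $\delta$ there); the uniform scaling of all blocks is harmless, because the relevant estimates depend only on slopes and on Lemma~\ref{le:same-side}, which is scale-invariant. The delicate case is a mixed pair. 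For $x\in H$ and $y\in B_\ell$ I would route from $x$ to $y_1$ by a $(-\alpha;\alpha)$-monotone path (Property~4 of $\Gamma_{H,\delta}$), cross the horizontal edge $y_1v$, and finish with a distance-decreasing path from $v$ to $y$ inside $B_\ell$; symmetrically, for $x\in B_\ell$ and $y\in H$ I would route backwards through $B_\ell$ to $y_\ell$ by a $\pi$-monotone path (Property~5) and finish with a distance-decreasing path from $y_\ell$ to $y$ inside $H$. In each case the step-by-step distance decrease is certified, as in Lemma~\ref{le:case1B}, by placing through the appropriate anchor vertex ($u_i$, $y_1$, or $y_\ell$) a line orthogonal to the current edge and invoking Lemma~\ref{le:same-side} to show that the whole target component lies on the correct side of the perpendicular bisector of that edge; the key geometric input is that $D_\rho$ lies entirely to the right of $y_1$, so that from the viewpoint of any near-horizontal edge of the $H$-part the target $y\in B_\ell$ is ``far ahead'', and conversely. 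Verifying that these perpendicular-bisector inequalities hold simultaneously for every edge of the concatenated path, across the seam at $y_1$ (resp.\ $y_\ell$), is the technically heaviest part of the proof.
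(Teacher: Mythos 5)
Your overall strategy coincides with the paper's: the same decomposition into $H$, the trivial bridges, and the chain $G_1,\dots,G_k$ forming $B_\ell$; the same slope bookkeeping for Properties~2--5; the same confinement of all of $B_\ell$ except $y_\ell$ to the disk $D_\rho$ for planarity; and the same three-piece routing through $y_1$ (resp.\ $y_\ell$) for the mixed cases of Property~6, certified edge by edge via perpendicular bisectors and Lemma~\ref{le:same-side}. But there is one genuine gap, and it sits exactly at the step you flag as ``the technically heaviest part'' and then do not carry out: the last edge $z_{s-1}y_\ell$ of the concatenated path in the case $x\in V(G_i)$, $y\in V(H)$.

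For every earlier edge of that path the bisector argument is routine, because the midpoint of the edge lies inside $D_\rho$ and hence to the right of the line $\ell_h$ through $y_1$ orthogonal to the edge; the perpendicular bisector $\ell'_h$ therefore inherits from $\ell_h$ the property of having all of $\Gamma_{H,\delta}$, and in particular $y$, on its left. For the seam edge this breaks down: $y_\ell$ belongs to $H$, not to $D_\rho$, so the midpoint of $z_{s-1}y_\ell$ need not lie to the right of $\ell_h$, and one must instead lower-bound the $x$-coordinate of the intersection point of $\ell'_h$ with $\ell_u$ and show that it exceeds $x(y_1)$. That computation is where the otherwise-mysterious term $\arctan\left(\frac{d_V(\Gamma_H,y_\ell y_1)}{3d_V(\Gamma_H,y_\ell y_1)+3d_H(\Gamma_H,y_\ell y_1)}\right)$ in the definition of $\beta$ is consumed (together with $\rho\le\frac{Y}{2}$, which guarantees that $y_\ell$ lies below $z_{s-1}$, so the seam edge has slope in $(\pi;\pi+\beta]$ and $\ell'_h$ has slope in $(\frac{\pi}{2};\frac{\pi}{2}+\beta)$). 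You instead attribute that $\arctan$ term to planarity, where only $\beta<\phi$ and the confinement to $D_\rho$ are actually needed. Without identifying the seam step as the purpose of that constraint on $\beta$, the deferred verification cannot be completed, so the proof is incomplete at precisely the point the constants were engineered to handle.
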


\begin{proof}
Let $\Gamma_{H,\delta}$ be the drawing obtained from $\Gamma_H$ by moving $u$ by $\delta$ units to the left. 

We prove Property~2. By Lemma~\ref{le:decomposition-B}, we have that $\tau_{uv}(G)=\tau_{uy_1}(H)\cup y_1v$. By Property~2 of $\Gamma_{H,\delta}$, we have that $\tau_{uy_1}(H)$ lies entirely on $\ell_u$ with $y_1$ to the right of $u$. By construction $v$ also lies on $\ell_u$. As proved before the lemma's statement, $v$ lies to the right of $y_1$. This implies Property~2 for $\Gamma_\delta$. 

We next prove that $\Gamma_\delta$ satisfies Property~3. By Lemma~\ref{le:decomposition-B}, we have that $\beta_{uv}(G)=\beta_{uy_\ell}(H)\cup \beta_{u_0u_1}(G_1)\cup \beta_{u_1u_2}(G_2)\cup \dots \cup \beta_{u_{k-1}u_k}(G_k)$. Denote $\beta_{uv}(G)=(u=b'_1,b'_2,\dots,b'_m=v)$. The slope of the edge $b'_1b'_2$ in $\Gamma_{\delta}$ is equal to its slope in $\Gamma_{H,\delta}$; this is because the drawing of $H$ in $\Gamma_{\delta}$ coincides with $\Gamma_{H,\delta}$ and because $b'_2$ is a vertex of $H$, given that $y_{\ell}\neq u$ since $y_{\ell}$ is an internal vertex of $\beta_{uv}(G)$ . Hence the slope of $b'_1b'_2$ is in $(-\frac{\alpha}{2};0)\subset (-\alpha;0)$ in $\Gamma_{\delta}$ since $\Gamma_{H,\delta}$  satisfies Property~3. We now argue about the slope $s_j$ of the edge $b'_jb'_{j+1}$ in $\Gamma_{\delta}$, for any $j=2,\dots,m-1$.
	
\begin{itemize}
	\item If $b'_jb'_{j+1}$ is an edge of $\beta_{uy_\ell}(H)$, then its slope in $\Gamma_{\delta}$ is equal to its slope in $\Gamma_{H,\delta}$, since the drawing of $H$ in $\Gamma_{\delta}$ coincides with $\Gamma_{H,\delta}$. Thus, $s_j\in (0;\frac{\alpha}{2})\subset (0;\alpha)$, since $\Gamma_{H,\delta}$  satisfies Property~3.  
	
	\item If $b'_jb'_{j+1}$ coincides with a graph $G_i$, then $s_j=\beta$. Since $0<\beta \leq \frac{\alpha}{4}$, we have $s_j \in (0;\alpha)$.
	
	\item If $b'_jb'_{j+1}$ belongs to a graph $G_i$, for some $i\in\{1,\dots,k\}$, with $|V(G_i)|\geq 3$, and with $b'_j\neq u_{i-1}$, then $s_j$ is given by the slope $b'_jb'_{j+1}$ has in $\Gamma_i$, which is in $(0;\alpha')$ by Property~3 of $\Gamma_i$, plus $\beta$, which results from the rotation of $\Gamma_i$. Hence $s_j\in (\beta;\beta+\alpha')$; since $\beta>0$, $\beta<\frac{\alpha}{4}$, and $\alpha'<\frac{\alpha}{8}$, we have that $s_j\in (0;\alpha)$.   
	
	\item If $b'_jb'_{j+1}$ belongs to a graph $G_i$, for some $i\in\{2,\dots,k\}$, with $|V(G_i)|\geq 3$, and with $b'_j= u_{i-1}$, then $s_j$ is given by the slope $b'_jb'_{j+1}$ has in $\Gamma_i$, which is in $(-\alpha';0)$ by Property~3 of $\Gamma_i$, plus $\beta$, which results from the rotation of $\Gamma_i$. Hence $s_j\in (\beta-\alpha';\beta)$. Since $\alpha'\leq \frac{\beta}{2}<\beta<\frac{\alpha}{4}<\alpha$, we have that $s_j\in (0;\alpha)$.

	\item Finally, assume that $b'_jb'_{j+1}$ belongs to $G_1$, that $|V(G_1)|\geq 3$, and that $b'_j= u_0$. Then $s_j$ is given by the slope $b'_jb'_{j+1}$ has in $\Gamma_{1,d^*}$, which is in $(-\alpha';0)$ by Property~3 of $\Gamma_{1,d^*}$, plus $\beta$, which results from the rotation of $\Gamma_{1,d^*}$. Hence $s_j\in (\beta-\alpha';\beta)\subset (0;\alpha)$.
\end{itemize}

We now prove Property~1. Before doing so, we prove the following useful statement: Every vertex $z\neq u_0$ that belongs to a graph $G_i$, for any $i\in\{1,\dots,k\}$, lies inside the disk $D_{\rho}$ in $\Gamma_{\delta}$. Note that this statement shows a sharp geometric separation between the vertices that are in $H$ and those that are not. 
\begin{figure}[htb]
	\centering
	\includegraphics[scale=0.6]{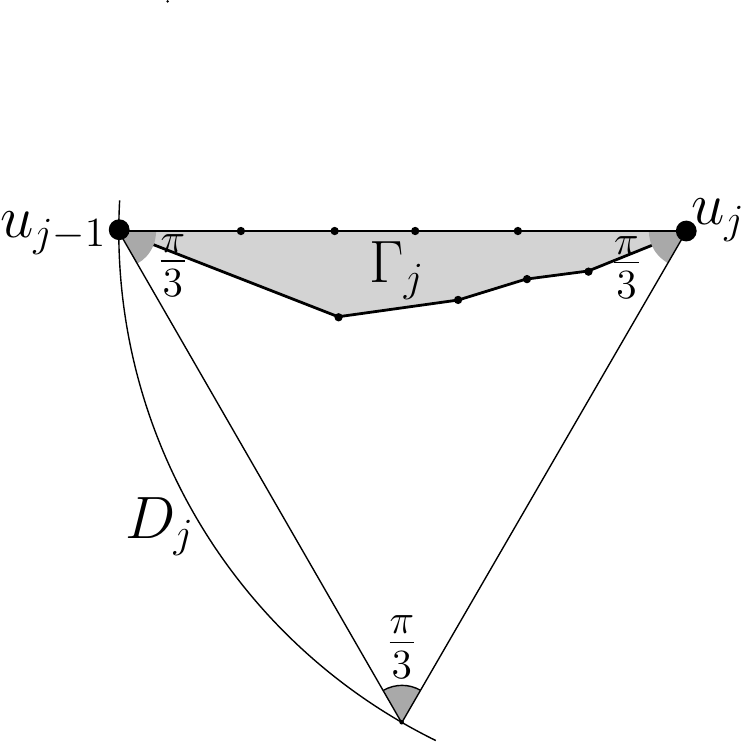}
	\caption{The drawing $\Gamma_j$ and the disk $D_j$ centered at $u_j$ with radius $d(\Gamma_j,u_{j-1}u_j)$.}
	\label{fi:disk}
\end{figure}
Refer to Fig.~\ref{fi:disk}. Consider the drawing $\Gamma_j$, for any $j\in\{1,\dots,k\}$ (note that $\Gamma_1$ is considered before moving $u_0$ by $d^*$ units to the left) and consider the disk $D_j$ centered at $u_j$ with radius $d(\Gamma_j,u_{j-1}u_j)$. By Properties~1 and~2 of $\Gamma_j$, the path $\tau_{u_{j-1}u_j}(G_j)$ lies on the straight-line segment $\overline{u_{j-1}u_j}$ in $\Gamma_j$, hence it lies inside $D_j$. Further, all the edges of $\beta_{u_{j-1}u_j}(G_j)$ have slope in $(-\alpha';\alpha')\subset (-\frac{\alpha}{8};\frac{\alpha}{8})\subset (-\frac{\pi}{32};\frac{\pi}{32})\subset (-\frac{\pi}{3};\frac{\pi}{3})$; hence $\beta_{u_{j-1}u_j}(G_j)$ also lies inside $D_j$. By Property~1 of $\Gamma_j$, the entire drawing $\Gamma_j$ lies inside $D_j$. Hence, $u_{j-1}$ is the farthest vertex of $G_j$ from $u_j$ in $\Gamma_j$. This property holds true also after the drawings $\Gamma_1, \dots, \Gamma_k$ are uniformly scaled; further, after the scaling, the distance between $u_{j-1}$ and $u_j$ is $\frac{\rho}{k}$, by construction. By the triangular inequality, we have that $d(\Gamma_{\delta},vz)\leq \sum_{j=i+1}^k d(\Gamma_{\delta},u_{j-1} u_j)+d(\Gamma_{\delta},u_iz)$.  Since $d(\Gamma_{\delta},u_{j-1} u_j)=\frac{\rho}{k}$ for any $j\in\{2,\dots,k\}$, and since $d(\Gamma_{\delta},u_iz)\leq \frac{\rho}{k}$ (this exploits $z\neq u_0$ and hence $d(\Gamma_{\delta},u_iz)=d(\Gamma_i,u_iz)$, where $\Gamma_i$ is understood as already scaled), we have that $d(\Gamma_{\delta},vz)\leq \frac{(k-i+1)\rho}{k}\leq \rho$. Thus $z$ lies inside $D_{\rho}$.

We now discuss the possible crossings that might occur in $\Gamma_{\delta}$. 

\begin{itemize}
	\item The drawing of $H$ in $\Gamma_{\delta}$ coincides with $\Gamma_{H,\delta}$, hence it is planar since $\Gamma_{H,\delta}$ satisfies Property~1 by induction. 
	\item Analogously, the drawings of $G_1,G_2,\dots,G_k$ in $\Gamma_{\delta}$ are planar since they coincide with $\Gamma_{1,d^*},\Gamma_2,\dots,\Gamma_k$, which satisfy Property~1 by induction. 
	\item Since $\Gamma_{\delta}$ satisfies Property~3, the path $\beta_{y_{\ell}v}(G)$ is represented in $\Gamma_{\delta}$ by a curve monotonically increasing in the $x$-direction from $y_{\ell}$ to $v$. Further, the path $\tau=\bigcup_{i=1}^{k} \tau_{u_{i-1}u_i}(G_i)$ is represented in $\Gamma_{\delta}$ by a straight-line segment with slope $\beta\in (0;\frac{\alpha}{4})\subset(0;\frac{\pi}{16})$. Hence, for $i=1,\dots,k-1$, the vertical line through $u_i$ has the drawings of $G_1,\dots,G_i$ to its left and those of $G_{i+1},\dots,G_k$ to its right in $\Gamma_{\delta}$. It follows that no two edges in distinct graphs $G_i$ and $G_j$ cross in $\Gamma_{\delta}$. 
	\item Recall that $\beta_{uy_1}(H)=(u=b_1,b_2,\dots,b_m=y_1)$. Let $y_{\ell}=b_j$, for some $j\in \{2,3,\dots,m-1\}$. We prove that the straight-line segments $\overline{b_jv},\overline{b_{j+1}v},\dots,\overline{b_m v}$ appear in this clockwise order around $v$ and have slopes in $[0;\beta]$ in $\Gamma_{\delta}$ (note that these straight-line segments do not necessarily correspond to edges of $G$). Refer to Fig.~\ref{fi:slopes}.
	\begin{figure}[htb]
		\centering
		\includegraphics[scale=0.6]{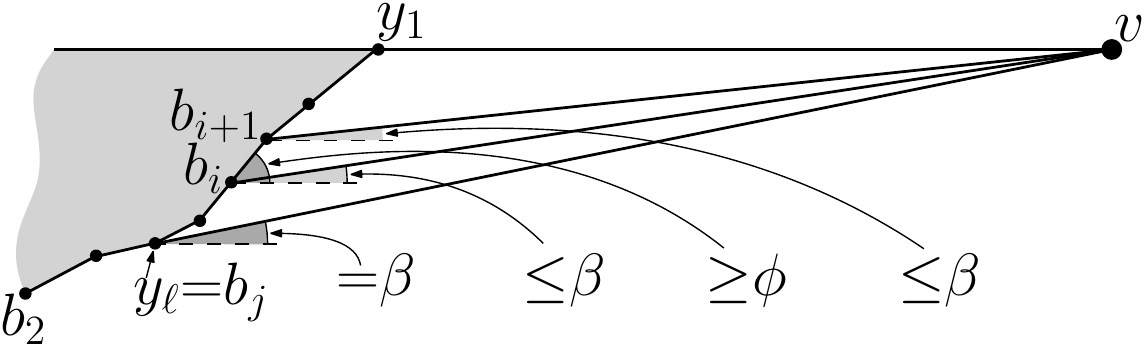}
		\caption{illustration for the proof that the straight-line segments $\overline{b_jv},\overline{b_{j+1}v},\dots,\overline{b_m v}$ appear in this clockwise order around $v$ and have slopes in $[0;\beta]$ in $\Gamma_{\delta}$.}
		\label{fi:slopes}
	\end{figure}
	Note that the slope of $\overline{b_jv}$ is $\beta$, by construction; now assume that $\overline{b_jv},\overline{b_{j+1}v},\dots,\overline{b_i v}$ appear in this clockwise order around $v$ and have slopes in $[0;\beta]$ in $\Gamma_{\delta}$, for some $i\in \{j,j+1,\dots,m-1\}$. The edge $b_i b_{i+1}$ has slope in $[\phi;\frac{\alpha}{2})$, by definition of $\phi$ and since $\Gamma_{H,\delta}$ satisfies Property~3. Since $\beta<\phi$, the edge $b_i b_{i+1}$ lies above the line through $b_i$ and $v$. Hence, $b_{i+1}v$ immediately follows $b_i v$ in the clockwise order of the edges incident to $v$ and it has slope smaller than the one of $b_{i}v$, hence smaller than $\beta$. The repetition of this argument concludes the proof that $\overline{b_jv},\overline{b_{j+1}v},\dots,\overline{b_m v}$ appear in this clockwise order around $v$ and have slopes in $[0;\beta]$ in $\Gamma_{\delta}$.

	Since the straight-line segments $\overline{b_jv},\overline{b_{j+1}v},\dots,\overline{b_m v}$ appear in this clockwise order around $v$, then no two $(H\cup\{v\})$-bridges of $G$ cross one another. Further, since the straight-line segments $\overline{b_jv},\overline{b_{j+1}v},\dots,\overline{b_m v}$ have slopes in $[0;\beta]$ and since $\beta<\phi$, they all lie to the right of the path $\beta_{b_2b_m}(H)$, whose edges have slopes in $[\phi;\frac{\alpha}{2})$. Then no trivial $(H\cup\{v\})$-bridge of $G$ crosses $H$ in $\Gamma_{\delta}$. 
	
	\item Consider the vertical line $\ell_1$ through $y_1$. By Properties~1--3 of $\Gamma_{H,\delta}$, the line $\ell_1$ has $\Gamma_{H,\delta}$ to its left. Further, since $\rho<d_{y_1v}$, the disk $D_{\rho}$ lies to the right of $\ell_1$. Since all the vertices different from $u_0$ of the graphs $G_1,\dots,G_k$ lie inside $D_{\rho}$, it follows that no edge in a graph $G_1,\dots,G_k$ crosses an edge of $H$, unless the former is incident to $u_0$. However, all the edges in $G_1,\dots,G_k$ that are incident to $u_0$ (in fact only $G_1$ contains such edges) have slope at most $\beta$, as they lie on or below $h_{\beta}$. Hence they all lie to the right of the path $\beta_{b_2b_m}(H)$ and do not cross edges of $H$ in $\Gamma_{\delta}$.
\end{itemize}

We now discuss Property~4. Let $x\in V(G)$. Assume first that $x\in V(H)$. Since the drawing of $H$ in $\Gamma_{\delta}$ coincides with $\Gamma_{H,\delta}$, there exists a path $P'_x$ in $H$ from $x$ to $y_1$, not passing through $u$ unless $x=u$, and whose every edge has slope in $(-\frac{\alpha}{2};\frac{\alpha}{2})\subset (-\alpha;\alpha)$ in $\Gamma_{\delta}$. Further, the edge $y_1v$ has slope $0$. Hence, the path $P_x=P'_x \cup y_1v$ satisfies the required properties. 
If $x\notin V(H)$, then $x\in V(G_i)$, for some $i\in \{1,\dots,k\}$. Assume that $i\geq 2$ (that $i=1$, resp.). Then the path $P_x$ consists of a path $P'_x$ from $x$ to $u_i$ in $G_i$ whose every edge has slope in $(-\alpha';\alpha')$ in $\Gamma_i$ (in $\Gamma_{1,d^*}$, resp.) -- this path exists since $\Gamma_i$ ($\Gamma_{1,d^*}$, resp.) satisfies Property~4 -- and of the path $\bigcup_{j=i+1}^k \tau_{u_{j-1}u_j}(G_j)$. Since $\Gamma_i$ ($\Gamma_{1,d^*}$, resp.) is rotated by $\beta$ radians in $\Gamma_{\delta}$, its edges have slope in the range  $(\beta-\alpha';\beta+\alpha')$. Since $\alpha'=\frac{\beta}{2}$ and $0<\beta<\frac{\alpha}{4}$, we have that $(\beta-\alpha';\beta+\alpha')\subset (0;\frac{3\alpha}{8})\subset (-\alpha;\alpha)$. Further, every edge in $\tau_{u_{j-1}u_j}(G_j)$ has slope $0$ in $\Gamma_j$ and hence $\beta$ in $\Gamma_{\delta}$. Since $0<\beta<\frac{\alpha}{4}$, we have that every edge in $\bigcup_{j=i+1}^k \tau_{u_{j-1}u_j}(G_j)$ has slope in $(-\alpha;\alpha)$. Note that $P_x$ does not pass through $u$, since $u$ does not belong to any graph among $G_1,\dots,G_k$. Thus, $P_x$ satisfies the required properties.

We now deal with Property~5. Let $x\in V(G)$. Assume first that $x\in V(H)$. Since the drawing of $H$ in $\Gamma_{\delta}$ coincides with $\Gamma_{H,\delta}$ and since $\Gamma_{H,\delta}$ satisfies Property~5, there exists a path $Q'_x$ from $x$ to $u$ whose every edge has slope in $(\pi-\frac{\alpha}{2};\pi+\frac{\alpha}{2})\subset (\pi-\alpha;\pi+\alpha)$. Thus, the path $Q_x=Q'_x$ satisfies the required properties. 
If $x\notin V(H)$, then $x\in V(G_i)$, for some $i\in \{1,\dots,k\}$. Then the path $Q_x$ consists of three paths. First, $Q_x$ contains a path $Q'_x$ from $x$ to $u_{i-1}$ in $G_i$ whose every edge has slope in $(\pi-\alpha';\pi+\alpha')$ in $\Gamma_i$ (in $\Gamma_{1,d^*}$, if $x\in V(G_1)$). This path exists since $\Gamma_i$ ($\Gamma_{1,d^*}$, resp.) satisfies Property~5. Since $\Gamma_i$ ($\Gamma_{1,d^*}$, resp.) is rotated by $\beta$ radians in $\Gamma_{\delta}$, its edges have slopes in the range  $(\pi+\beta-\alpha';\pi+\beta+\alpha')$. Since $\alpha'= \frac{\beta}{2}$ and $0<\beta<\frac{\alpha}{4}$, we have that $(\pi+\beta-\alpha';\pi+\beta+\alpha')\subset (\pi;\pi+\frac{3\alpha}{8})\subset (\pi-\alpha;\pi+\alpha)$. Second, $Q_x$ contains the path $\bigcup_{j=1}^{i-1} \beta_{u_ju_{j-1}}(G_j)$; by Properties~1 and~2, every edge in $\beta_{u_ju_{j-1}}(G_j)$ has slope $\pi$ in $\Gamma_i$ (in $\Gamma_{1,d^*}$ if $j=1$), hence it has slope $\pi+\beta$ in $\Gamma_{\delta}$. Since $0<\beta<\frac{\alpha}{4}$, we have that every edge in the path $\bigcup_{j=1}^{i-1} \beta_{u_ju_{j-1}}(G_j)$ has slope in $(\pi-\alpha;\pi+\alpha)$. Third, $Q_x$ contains a path $Q'_{y_\ell}$ from $y_\ell$ to $u$ in $H$ whose every edge has slope in $(\pi-\frac{\alpha}{2};\pi+\frac{\alpha}{2})\subset (\pi-\alpha;\pi+\alpha)$; this path exists since the drawing of $H$ in $\Gamma_{\delta}$ coincides with $\Gamma_{H,\delta}$ and since $\Gamma_{H,\delta}$ satisfies Property~5. Thus, the path $Q_x$ satisfies the required properties.

Finally, we deal with Property~6. Consider any two vertices $x,y\in V(G)$. We prove the existence of a path $P_{xy}$ from $x$ to $y$ in $G$ which does not pass through $u$, unless $x=u$ or $y=u$, and which is distance-decreasing in $\Gamma_{\delta}$. We distinguish several cases, based on which graphs among $H,G_1,\dots,G_k$ the vertices $x$ and $y$ belong to.

\begin{itemize}
\item Suppose first that $x$ and $y$ belong to $H$. Since $\Gamma_{H,\delta}$ satisfies Property~6, there exists a path $P_{xy}$ from $x$ to $y$ in $H$ which does not pass through $u$, unless $x=u$ or $y=u$, and which is distance-decreasing in $\Gamma_{H,\delta}$. Since the drawing of $H$ in $\Gamma_{\delta}$ coincides with $\Gamma_{H,\delta}$, it follows that $P_{xy}$ is distance-decreasing in $\Gamma_{\delta}$. 

\item Suppose next that $x$ and $y$ belong to the same graph $G_i$, for some $i\in\{1,2,\dots,k\}$. Since the drawing $\Gamma_i$ (or $\Gamma_{1,d^*}$ if $i=1$) satisfies Property~6, there exists a path $P_{xy}$ from $x$ to $y$ in $G_i$ that is distance-decreasing in $\Gamma_i$ (in $\Gamma_{1,d^*}$ if $i=1$). Since the drawing of $G_i$ in $\Gamma_{\delta}$ 
is congruent to $\Gamma_i$ (to $\Gamma_{1,d^*}$ if $i=1$) up to three affine transformations, namely a uniform scaling, a rotation, and a translation, that preserve the property of a path to be distance-decreasing, it follows that $P_{xy}$ is distance-decreasing in $\Gamma_{\delta}$. Note that $u\notin V(P_{xy})$.

\item Suppose now that $x$ belongs to a graph $G_i$ and $y$ belongs to a graph $G_j$ for some $1\leq i <j \leq k$. Then let $P_{xy}$ be the path composed of a path $P_x$ in $G_i$ from $x$ to $u_i$ whose every slope in $\Gamma_i$ is in $(-\alpha'; \alpha')$, of the path $\bigcup_{l=i+1}^{j-1} \tau_{u_{l-1}u_l}(G_l)$, and of a path $P_{u_{j-1}y}$ in $G_j$ that is distance-decreasing in $\Gamma_j$. The path $P_x$ exists since $\Gamma_i$ ($\Gamma_{1,d^*}$ if $i=1$) satisfies Property~4; the path $P_{u_{j-1}y}$ exists since $\Gamma_j$ satisfies Property~6. 

The proof that $P_{xy}$ is distance-decreasing in $\Gamma_{\delta}$ is the same as the one that $P_{xy}$ is distance-decreasing in $\Gamma_{\delta}$ when $x\in V(G_i)$, $y\in V(G_j)$, and $2\leq i <j \leq k$ in Lemma~\ref{le:case1B}, with $\beta$ in place of $s$ and $(-\alpha';\alpha')\subset (-\frac{\alpha}{8};\frac{\alpha}{8})$ in place of $(-\frac{\alpha}{3};\frac{\alpha}{3})$ as the interval of possible slopes for the edges of $P_x$. 

\item The case in which $1\leq j <i \leq k$ is symmetric to the previous one. 

\item Suppose now that $x$ belongs to $H$ and $y$ belongs to $G_i$, for some $i\in\{1,\dots,k\}$. If $i=1$ and $y=u_0$, then $y\in V(H)$ and $P_{xy}$ is defined as above. Assume hence that $y\neq u_0$. Then the path $P_{xy}$ consists of three sub-paths. 

\begin{itemize}
	\item The first sub-path of $P_{xy}$ is a path $P_x$ in $H$ from $x$ to $y_1$. 
	
	Suppose first that $x=u$. Then let $P_{x}=\tau_{uy_1}(H)$. Let $P_{x}=(x=z_1,z_2,\dots,z_s=y_1)$; we prove that $d(\Gamma_{\delta},z_hy)>d(\Gamma_{\delta},z_{h+1}y)$ holds true for any $h=1,\dots,s-1$. Consider the vertical line $\ell_1$ through $y_1$, oriented towards increasing $y$-coordinates; as argued above, the disk $D_{\rho}$ is to the right of $\ell_1$ and $y$ lies inside $D_{\rho}$. By Properties~1 and~2 of $\Gamma_{\delta}$, the edge $z_hz_{h+1}$ is horizontal, with $z_h$ to the left of $z_{h+1}$. Hence, the line $\ell'_{h}$ orthogonal to $z_hz_{h+1}$ and passing through its midpoint is also vertical and has $\ell_1$ to its right. It follows that $y$ is to the right of $\ell'_{h}$. Since the half-plane to the right of $\ell'_{h}$ represents the locus of the points of the plane that are closer to $z_{h+1}$ than to $z_h$, we have $d(\Gamma_{\delta},z_hy)>d(\Gamma_{\delta},z_{h+1}y)$.
	
	Suppose next that $x\neq u$. By Property~4 of $\Gamma_{H,\delta}$, there exists a path $P_{x}=(x=z_1,z_2,\dots,z_s=y_1)$ in $H$ that connects $x$ to $y_1$, that does not pass through $u$, and whose every edge has slope in $(-\frac{\alpha}{2};\frac{\alpha}{2})$ in $\Gamma_{H,\delta}$. We prove that, for any $h=1,2,\dots,s-1$, $d(\Gamma_{\delta},z_hy)>d(\Gamma_{\delta},z_{h+1}y)$; refer to Fig.~\ref{fi:inductiveB-distancedecreasing}. Since the drawing of $H$ in $\Gamma_{\delta}$ coincides with $\Gamma_{H,\delta}$, the edge $z_hz_{h+1}$ has slope in $(-\frac{\alpha}{2};\frac{\alpha}{2})$ in $\Gamma_{\delta}$. Consider the line $\ell_h$ that passes through $y_1$, that is directed towards increasing $y$-coordinates and that is orthogonal to the line through $z_h$ and $z_{h+1}$. Denote by $s_h$ the slope of $\ell_h$. Then $s_h\in (\frac{\pi-\alpha}{2};\frac{\pi+\alpha}{2})$. 
	
	\begin{figure}[htb]
		\centering
		\includegraphics[scale=0.6]{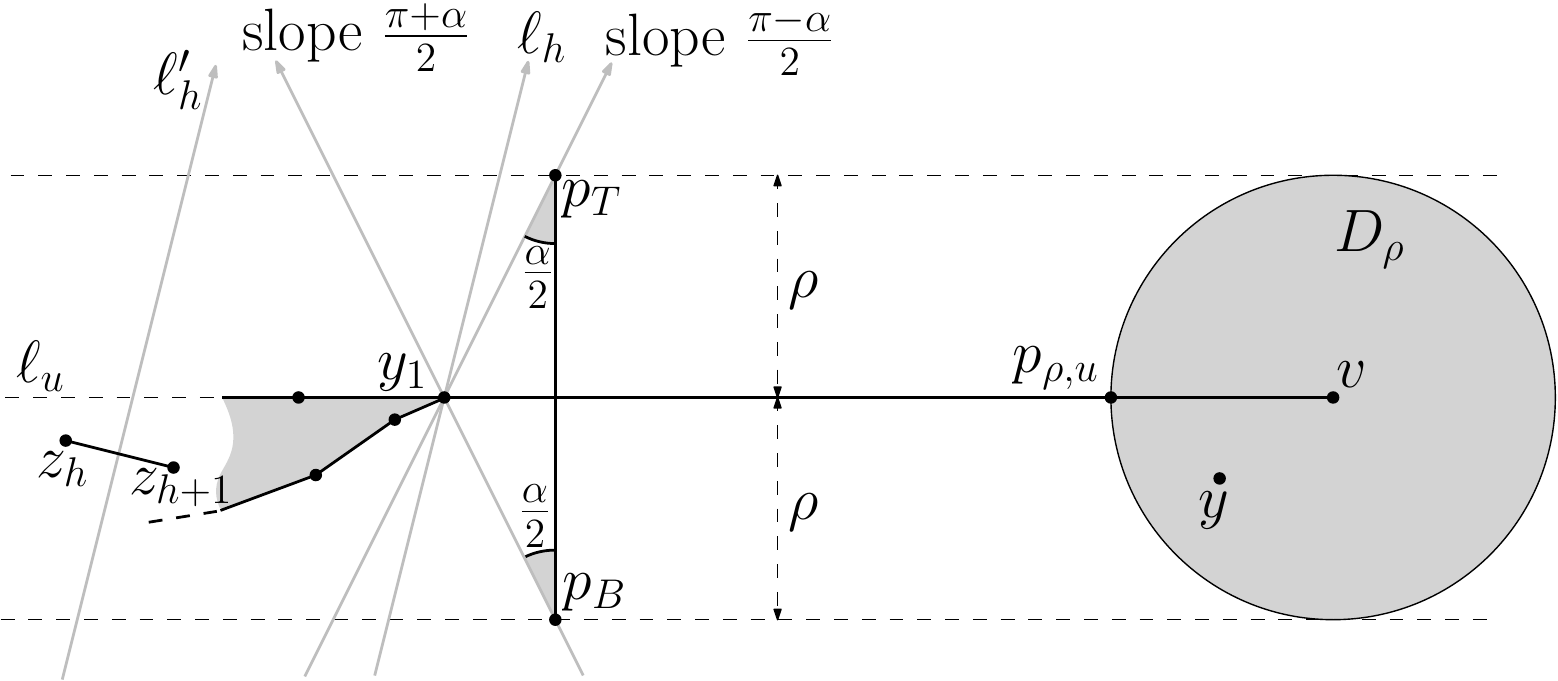}
		\caption{Illustration for the proof that $d(\Gamma_{\delta},z_hy)>d(\Gamma_{\delta},z_{h+1}y)$ if $z_hz_{h+1}$ is in $P_x$. For the sake of readability, $D_{\rho}$ is larger than it should be.}
		\label{fi:inductiveB-distancedecreasing}
	\end{figure}
			
	We prove that $\ell_h$ has the disk $D_{\rho}$ to its right. In order to do that, consider the point $p_T$ on the half-line with slope $\frac{\pi-\alpha}{2}$ starting at $y_1$ and such that $d_V(\Gamma_{\delta},y_1p_T)=\rho$. Further, consider the point $p_B$ on the half-line with slope $\frac{-\pi+\alpha}{2}$ starting at $y_1$ and such that $d_V(\Gamma_{\delta},y_1p_B)=\rho$. Note that $\overline{p_T p_B}$ is a vertical straight-line segment with length $2\rho$. Consider the infinite closed strip $S$ with height $2\rho$ that is delimited by the horizontal lines through $p_T$ and $p_B$. Since $D_{\rho}$ has its center on $\ell_u$ and has radius $\rho$, it lies inside $S$. The part of $\ell_h$ inside $S$ is to the left of $\overline{p_T p_B}$, given that $s_h\in (\frac{\pi-\alpha}{2};\frac{\pi+\alpha}{2})$. Hence, we only need to show that $p_{\rho,u}$, which is the point of $D_{\rho}$ with smallest $x$-coordinate, lies to the right of $\overline{p_T p_B}$. We have that $d(\Gamma_{\delta},y_1p_{\rho,u})=d_{y_1v}-\rho$. Further, $d_H(\Gamma_{\delta},y_1p_T)=\rho \cdot \tan(\frac{\alpha}{2})$. Hence, it suffices to prove $\rho \cdot \tan(\frac{\alpha}{2})<d_{y_1v}-\rho$, that is $\rho<\frac{d_{y_1v}}{1+ \tan(\frac{\alpha}{2})}$; this holds true since $\rho<\frac{d_{y_1v}}{3}$ and 
	$\tan(\frac{\alpha}{2})<1$, given that $0<\alpha<\frac{\pi}{4}$. 
	
	By Lemma~\ref{le:same-side}, the line $\ell_h$ has the drawing of $H$ (and in particular the midpoint of the edge $z_hz_{h+1}$) to its left; this is because by Property~2 of $\Gamma_{H,\delta}$ every edge in $\beta_{y_1 u}(H)$ has slope $\pi$, where $s_h<\frac{\pi+\alpha}{2}<\pi<\frac{3\pi-\alpha}{2}<\pi+s_h$, and because by Property~3 of $\Gamma_{H,\delta}$ every edge in $\tau_{y_1 u}(H)$ has slope in $(\pi-\frac{\alpha}{2};\pi+\frac{\alpha}{2})$, where $s_h<\frac{\pi+\alpha}{2}<\pi-\frac{\alpha}{2}<\pi+\frac{\alpha}{2}<\frac{3\pi-\alpha}{2}<\pi+s_h$. Now consider the line $\ell'_{h}$ parallel to $\ell_{h}$, passing through the midpoint of the edge $z_hz_{h+1}$, and oriented towards increasing $y$-coordinates; $\ell'_{h}$ has $\ell_{h}$ to its right, given that the midpoint of $z_hz_{h+1}$ is to the left of $\ell_{h}$ in $\Gamma_{\delta}$. Thus, $\ell'_{h}$ has $D_{\rho}$, and in particular $y$, to its right. Since the half-plane to the right of $\ell'_{h}$ represents the locus of the points of the plane that are closer to $z_{h+1}$ than to $z_h$, it follows that $d(\Gamma_{\delta},z_hy)>d(\Gamma_{\delta},z_{h+1}y)$.
			
	\item The second sub-path is the edge $y_1v$. Since $y$ lies in $D_{\rho}$, we have that $d(\Gamma_{\delta},vy)\leq \rho\leq \frac{d_{y_1v}}{3}$. By the triangular inequality, we have that $d(\Gamma_{\delta}, y_1y)>d(\Gamma_{\delta},y_1v)-d(\Gamma_{\delta},vy)\geq d_{y_1v}-\rho\geq \frac{2d_{y_1v}}{3}$. Hence, $d(\Gamma_{\delta}, y_1y)>d(\Gamma_{\delta},vy)$. 

	\item The third sub-path is a path $P_{vy}$ that connects $v$ to $y$, that belongs to $\bigcup_{l=i}^k G_l$, and that is distance-decreasing in $\Gamma_{\delta}$. This path exists, as from the case in which $x$ and $y$ belong to the same graph $G_i$ or from the case in which $x$ belongs to a graph $G_i$ and $y$ belongs to a graph $G_j$ for some $1\leq j <i \leq k$. 
\end{itemize}

\item Suppose finally that $x$ belongs to $G_i$, for some $i\in\{1,\dots,k\}$, and $y$ belongs to $H$. If $i=1$ and $x=u_0$, then $x\in V(H)$ and $P_{xy}$ is defined as above. Assume hence that $x\neq u_0$. We now describe the path $P_{xy}$, which consists of three sub-paths. 

\begin{itemize}
	\item The first sub-path of $P_{xy}$ is a path $Q_x$ in $G_i$ from $x$ to $u_{i-1}$ whose every edge has slope in $(\pi-\alpha'; \pi+\alpha')$ in $\Gamma_i$ (in $\Gamma_{1,d^*}$ if $i=1$). This path exists since $\Gamma_i$ ($\Gamma_{1,d^*}$ if $i=1$) satisfies Property~5. The second sub-path of $P_{xy}$ is $\bigcup_{j=1}^{i-1} \beta_{u_{j}u_{j-1}}(G_j)$. Since $\Gamma_j$ ($\Gamma_{1,d^*}$ when $j=1$) satisfies Properties~1 and~2, every edge in $\bigcup_{j=1}^{i-1} \beta_{u_{j}u_{j-1}}(G_j)$ has slope $\pi$ in $\Gamma_j$ (in $\Gamma_{1,d^*}$ when $j=1$). Let $(x=z_1,z_2,\dots,z_{s-1},z_s=y_\ell)$ be the union of these two sub-paths of $P_{xy}$. We prove that $d(\Gamma_{\delta},z_hy)>d(\Gamma_{\delta},z_{h+1}y)$, for any $h\in\{1,2,\dots,s-1\}$. Since the drawings $\Gamma_{1,d^*},\Gamma_2,\dots,\Gamma_k$ are counter-clockwise rotated by $\beta$ radians in $\Gamma_{\delta}$, it follows that $z_hz_{h+1}$ has slope in the interval $(\pi+\beta-\alpha'; \pi+\beta+\alpha')$ in $\Gamma_{\delta}$. 
	
	We first present a proof that $d(\Gamma_{\delta},z_hy)>d(\Gamma_{\delta},z_{h+1}y)$ for any $h\in\{1,2,\dots,s-2\}$; we will later argue that $d(\Gamma_{\delta},z_{s-1}y)>d(\Gamma_{\delta},z_s y)$. Recall that $z_h$ and $z_{h+1}$ lie in $D_{\rho}$ in $\Gamma_{\delta}$, given that $z_h,z_{h+1}\neq y_{\ell}$. Consider the line $\ell_h$ that passes through $y_1$, that is directed towards increasing $y$-coordinates and that is orthogonal to the line through $z_h$ and $z_{h+1}$. Denote by $s_h$ the slope of $\ell_h$. Then $s_h\in (\frac{\pi}{2}+\beta-\alpha'; \frac{\pi}{2}+\beta+\alpha')$. 
	
	Similarly to the case in which $x\in V(H)$ and $y\in V(G_i)$, we have that $\ell_h$ has the disk $D_{\rho}$ to its right and the drawing of $H$ to its left (the main difference is that the gray angles in Fig.~\ref{fi:inductiveB-distancedecreasing} are now $\beta+\alpha'$ rather than $\frac{\alpha}{2}$). We now present proofs for these statements.
	
	\begin{itemize}
		\item We prove that $\ell_h$ has $D_{\rho}$ to its right. Let $p_T$ ($p_B$) be the point on the half-line with slope $\frac{\pi}{2}-\beta-\alpha'$ (resp. $-\frac{\pi}{2}+\beta+\alpha'$) starting at $y_1$ and such that $d_V(\Gamma_{\delta},y_1p_T)=\rho$ (resp. $d_V(\Gamma_{\delta},y_1p_B)=\rho$). Then $\overline{p_T p_B}$ is a vertical straight-line segment with length $2\rho$ and $D_{\rho}$ lies inside the infinite closed strip $S$ with height $2\rho$ that is delimited by the horizontal lines through $p_T$ and $p_B$. The part of $\ell_h$ inside $S$ is to the left of $\overline{p_T p_B}$, since $s_h\in (\frac{\pi}{2}+\beta-\alpha'; \frac{\pi}{2}+\beta+\alpha')$. Hence, we only need to show that $p_{\rho,u}$ lies to the right of $\overline{p_T p_B}$.  We have $d(\Gamma_{\delta},y_1p_{\rho,u})=d_{y_1v}-\rho$, while $d_H(\Gamma_{\delta},y_1p_T)=\rho \cdot \tan(\beta+\alpha')$. Hence, it suffices to prove that $\rho<\frac{d_{y_1v}}{1+ \tan(\beta+\alpha')}$; this holds true since $\rho<\frac{d_{y_1v}}{3}$ and $\tan(\beta+\alpha')<1$, given that $0<\beta<\frac{\alpha}{4}<\frac{\pi}{16}$ and  $0<\alpha'<\frac{\alpha}{8}<\frac{\pi}{32}$. 
	
		\item By Lemma~\ref{le:same-side}, the line $\ell_h$ has $\Gamma_{H,\delta}$ (and in particular $y$) to its left; this is because by Property~2 of $\Gamma_{H,\delta}$ every edge in $\beta_{y_1 u}(H)$ has slope $\pi$, where $s_h<\frac{\pi}{2}+\beta+\alpha'<\pi<\frac{3\pi}{2}+\beta-\alpha'<\pi+s_h$, and because by Property~3 of $\Gamma_{H,\delta}$ every edge in $\tau_{y_1 u}(H)$ has slope in $(\pi-\frac{\alpha}{2};\pi+\frac{\alpha}{2})$, where $s_h<\frac{\pi}{2}+\beta+\alpha'<\pi-\frac{\alpha}{2}<\pi+\frac{\alpha}{2}<\frac{3\pi}{2}+\beta-\alpha'<\pi+s_h$. 
	\end{itemize}
	
	Now consider the line $\ell'_{h}$ parallel to $\ell_{h}$, passing through the midpoint of the edge $z_hz_{h+1}$, and oriented towards increasing $y$-coordinates; $\ell'_{h}$ has $\ell_{h}$ to its left, given that the midpoint of $z_hz_{h+1}$ is in $D_{\rho}$, hence to the right of $\ell_{h}$ in $\Gamma_{\delta}$. Thus, $\ell'_{h}$ has $\Gamma_{H,\delta}$ (and in particular $y$) to its left. Since the half-plane to the left of $\ell'_{h}$ represents the locus of the points of the plane that are closer to $z_{h+1}$ than to $z_h$, it follows that $d(\Gamma_{\delta},z_hy)>d(\Gamma_{\delta},z_{h+1}y)$.

	We now show that $d(\Gamma_{\delta},z_hy)>d(\Gamma_{\delta},z_{h+1} y)$ if $h=s-1$. Recall that $z_{h+1}=z_s=u_0=y_{\ell}$ and refer to Fig.~\ref{fi:inductiveB-distancedecreasing2}. We exploit again the fact that the line $\ell_h$ through $y_1$ orthogonal to the line through $z_{h}$ and $z_{h+1}$  has $\Gamma_{H,\delta}$ (and in particular $y$) to its left. Consider the line $\ell'_{h}$ parallel to $\ell_{h}$, oriented towards increasing $y$-coordinates, and passing through the midpoint $m_h$ of the edge $z_hz_{h+1}$. Differently from the case in which $h\in\{1,2,\dots,s-2\}$, the midpoint $m_h$ of $z_hz_{h+1}$ is not guaranteed to be in $D_{\rho}$ (in fact it is not in $D_{\rho}$, although we do not prove this statement formally as we do not need it in the remainder), given that $z_{h+1}=y_{\ell}$ is in $H$ and hence not in $D_{\rho}$. Since the half-plane to the left of $\ell'_{h}$ represents the locus of the points of the plane that are closer to $z_{h+1}$ than to $z_h$, we only need to show that the intersection point $p_h$ of the lines $\ell'_{h}$ and $\ell_u$ lies to the right of $y_1$ on $\ell_u$; in fact, this implies that $\ell'_{h}$ has $\ell_h$ (and hence $y$) to its left. 	
	
	\begin{figure}[htb]
		\centering
		\includegraphics[scale=0.6]{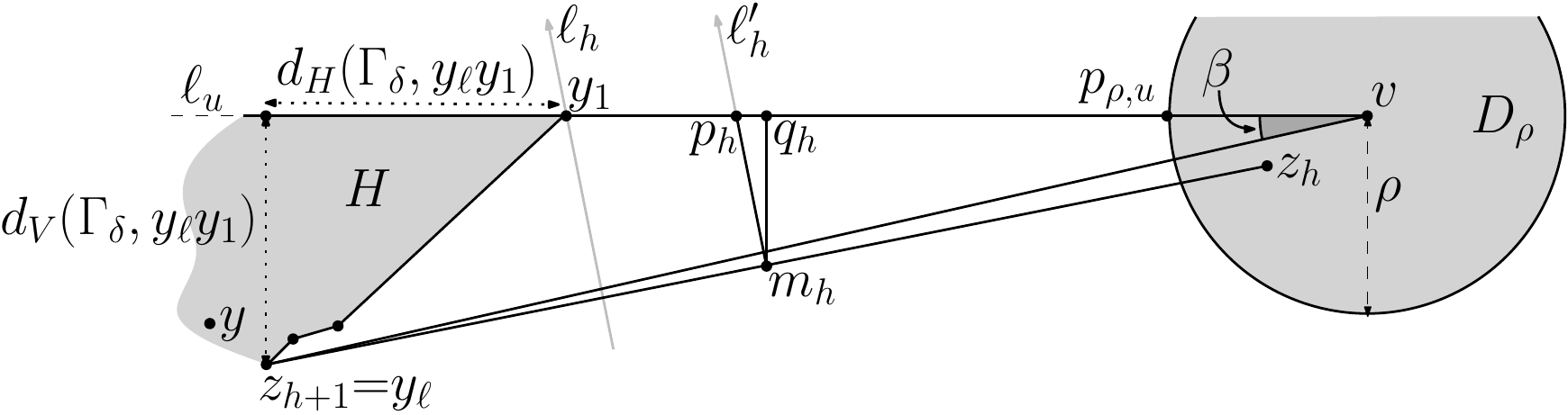}
		\caption{Illustration for the proof that $d(\Gamma_{\delta},z_hy)>d(\Gamma_{\delta},z_{h+1}y)$ if $h=s-1$.}
		\label{fi:inductiveB-distancedecreasing2}
	\end{figure}
	
	Since $z_h$ lies inside $D_{\rho}$, we have that $x(z_h)\geq x(p_{\rho,u})$. Further, $x(p_{\rho,u})=x(y_1)+d_{y_1v}-\rho$. Moreover, by Property~3 of $\Gamma_{H,\delta}$, we have that $x(y_1)=x(y_{\ell})+d_H(\Gamma_{\delta},y_\ell y_1)$. Thus, we have that $x(m_h)=\frac{x(y_\ell)+x(z_h)}{2}\geq \frac{x(y_\ell) + x(y_{\ell})+ d_H(\Gamma_{\delta},y_\ell y_1) + d_{y_1v}-\rho}{2}= x(y_\ell)+ \frac{d_H(\Gamma_{\delta},y_\ell y_1) + d_{y_1v}-\rho}{2}$. For the sake of the simplicity of the description, translate the Cartesian axes so that $x(y_{\ell})=0$. Thus, $x(m_h)\geq \frac{d_H(\Gamma_{\delta},y_\ell y_1) + d_{y_1v}-\rho}{2}$. 
	
	By Lemma~\ref{le:decomposition-B}, $y_{\ell}$ is an internal vertex of $\beta_{uv}(G)$, hence $y_{\ell}$ lies below $\ell_u$. Since $\rho\leq \frac{Y}{2}$ and $z_h$ lies in $D_{\rho}$, the $y$-coordinate of $y_{\ell}$ is smaller than the one of $z_h$. It follows that the slope of $z_hz_{h+1}$ is greater than $\pi$. Further, $z_h$ and hence $m_h$ lie on or below the line $h_{\beta}$ with slope $\beta$ through $y_{\ell}$. This implies that the slope of $z_hz_{h+1}$ is at most $\pi+\beta$. Thus, the slope $s'_h$ of $\ell'_h$ is in the interval $(\frac{\pi}{2};\frac{\pi}{2}+\beta)$. 
	
	We now derive a lower bound for the $x$-coordinate of $p_h$. Let $q_h$ be the point such that $x(q_h)=x(m_h)$ and $y(q_h)=y(p_h)$. Consider the triangle $\Delta m_h p_h q_h$. Since the $y$-coordinate of $y_\ell$ is smaller than the one of $z_h$, it is also smaller than the one of $m_h$. Thus, $d(\Gamma_{\delta},m_h q_h)\leq d_V(\Gamma_{\delta},y_\ell y_1)$. Since $s'_h\in (\frac{\pi}{2};\frac{\pi}{2}+\beta)$, the angle $\measuredangle p_h m_h q_h$ is at most $\beta$. Hence, $d(\Gamma_{\delta},p_h q_h)\leq d_V(\Gamma_{\delta},y_\ell y_1) \cdot \tan(\beta)$. It follows that $x(p_h)= x(m_h)-d(\Gamma_{\delta},p_h q_h)\geq \frac{d_H(\Gamma_{\delta},y_\ell y_1) + d_{y_1v}-\rho}{2} - d_V(\Gamma_{\delta},y_\ell y_1) \cdot \tan(\beta)$. It remains to prove that this quantity is larger than $d_H(\Gamma_{\delta},y_\ell y_1)$, which is the $x$-coordinate of $y_1$. 
	
	Since $\beta<\frac{\alpha}{4}<\frac{\pi}{16}$, we have that $\tan(\beta)\leq 1$. It follows that $\frac{d_H(\Gamma_{\delta},y_\ell y_1) + d_{y_1v}-\rho}{2} - d_V(\Gamma_{\delta},y_\ell y_1) \cdot \tan(\beta)\geq \frac{d_H(\Gamma_{\delta},y_\ell y_1) + d_{y_1v}-\rho}{2} - d_V(\Gamma_{\delta},y_\ell y_1)$. Hence, we want to establish that $\frac{d_H(\Gamma_{\delta},y_\ell y_1) + d_{y_1v}-\rho}{2} - d_V(\Gamma_{\delta},y_\ell y_1)>d_H(\Gamma_{\delta},y_\ell y_1)$, that is, $d_{y_1v}>2d_V(\Gamma_{\delta},y_\ell y_1)+d_H(\Gamma_{\delta},y_\ell y_1)+\rho$. Since $\rho\leq \frac{d_{y_1v}}{3}$, we need to prove that $d_{y_1v}>\frac{6d_V(\Gamma_{\delta},y_\ell y_1)+3d_H(\Gamma_{\delta},y_\ell y_1)}{2}$. 
	
	We now express $d_{y_1v}$ as a function of $\beta$. This is done by looking at the triangle whose vertices are $y_{\ell}$, $v$, and the point on $\ell_u$ with the same $x$-coordinate as $y_{\ell}$. Since the angle of this triangle at $v$ is $\beta$, we get that $d_{y_1v}=\frac{d_V(\Gamma_{\delta},y_\ell y_1)}{\tan(\beta)}-d_H(\Gamma_{\delta},y_\ell y_1)$. Substituting this into the previous inequality, we need to have $\frac{d_V(\Gamma_{\delta},y_\ell y_1)}{\tan(\beta)}-d_H(\Gamma_{\delta},y_\ell y_1)>\frac{6d_V(\Gamma_{\delta},y_\ell y_1)+3d_H(\Gamma_{\delta},y_\ell y_1)}{2}$, hence $\tan(\beta)<\frac{2d_V(\Gamma_{\delta},y_\ell y_1)}{6d_V(\Gamma_{\delta},y_\ell y_1)+5d_H(\Gamma_{\delta},y_\ell y_1)}$. This inequality holds true since $\beta<\arctan\left(\frac{d_V(\Gamma_H,y_{\ell}y_1)}{3d_V(\Gamma_H,y_{\ell}y_1)+3d_H(\Gamma_H,y_{\ell}y_1)}\right)$. This concludes the proof that $d(\Gamma_{\delta},z_hy)>d(\Gamma_{\delta},z_{h+1} y)$ if $h=s-1$.
	 
	\item The third sub-path of $P_{xy}$ is a path $P_{y_{\ell}y}$ that connects $y_{\ell}$ to $y$, that belongs to $H$, and that is distance-decreasing in $\Gamma_{H,\delta}$. This path exists since $\Gamma_{H,\delta}$ satisfies Property~6. Since the drawing of $H$ in $\Gamma_{\delta}$ coincides with $\Gamma_{H,\delta}$, the path $P_{y_{\ell}y}$ is also distance-decreasing in $\Gamma_{\delta}$.
\end{itemize}

\end{itemize}
This concludes the proof of the lemma.
\end{proof}

Given a strong circuit graph $(G,u,v)$ such that $G$ is not a single edge or a simple cycle, we are in Case~A or Case~B depending on whether the edge $uv$ exists or not, respectively. Thus, Lemmata~\ref{le:base-case}--\ref{le:caseB} prove Theorem~\ref{th:main-aux}. It remains to show how to use Theorem~\ref{th:main-aux} in order to prove Theorem~\ref{th:main}. This is easily done as follows. Consider any $3$-connected planar graph $G$ and associate any plane embedding to it; let $u$ and $v$ be two consecutive vertices in the clockwise order of the vertices along the outer face of $G$. We have that $(G,u,v)$ is a strong circuit graph. Indeed: (a) by assumption $G$ is $2$-connected -- in fact $3$-connected -- and associated with a plane embedding; (b) by construction $u$ and $v$ are two distinct external vertices of $G$; (c) edge $uv$ exists and coincides with $\tau_{uv}(G)$, given that $v$ immediately follows $u$ in the clockwise order of the vertices along the outer face of $G$; and (d) $G$ does not have any $2$-cut, given that it is $3$-connected. Thus, Theorem~\ref{th:main-aux} can be applied in order to construct a planar greedy drawing of $G$. This concludes the proof of Theorem~\ref{th:main}.

\section{Conclusions} 

In this paper we have shown how to construct planar greedy drawings of $3$-connected planar graphs. It is tempting to try to use the graph decomposition we employed in this paper for proving that $3$-connected planar graphs admit {\em convex} greedy drawings. However, despite some efforts in this direction, we have not been able to modify the statement of Theorem~\ref{th:main-aux} in order to guarantee the desired convexities of the angles in the drawings. Thus, proving or disproving the convex greedy embedding conjecture remains an elusive goal.

\bibliography{bibliography}

\end{document}